\newtheorem{lemma}{Lemma}[section]
\newtheorem*{lemma*}{Lemma}
\newtheorem{theorem}{Theorem}[section]
\newtheorem*{theorem*}{Theorem}
\newtheorem{conjecture}{Conjecture}[section]
\newtheorem*{definition*}{Definition}
\newcommand{\be}{\begin{equation}} \newcommand{\ee}{\end{equation}}
\newcommand{\bd}{\begin{displaymath}} \newcommand{\ed}{\end{displaymath}}
\newcommand{\ba}{\begin{align}} \newcommand{\ea}{\end{align}}
\newcommand{\baa}{\begin{align*}} \newcommand{\eaa}{\end{align*}}
\newcommand{\ben}{\begin{enumerate}} \newcommand{\een}{\end{enumerate}}
\newcommand{\bi}{\begin{itemize}} \newcommand{\ei}{\end{itemize}}
\newcommand{\ud}{\mathrm{d}}
\newcommand{\E}[1]{\operatorname{E}\left[ #1 \right]}
\newcommand{\Var}[1]{\operatorname{Var}\left[ #1 \right]}
\newcommand{\var}[1]{\operatorname{Var}\left[ #1 \right]}
\newcommand{\cov}[2]{\operatorname{Cov}\left[ #1,#2 \right]}
\begin{document}

\title{Quantifying the effects of anagenetic and cladogenetic evolution}

\author{{\sc Krzysztof Bartoszek}}
\maketitle
\begin{abstract}
An ongoing debate in evolutionary biology is whether phenotypic change occurs predominantly around the time
of speciation or whether it instead accumulates gradually over time. In this work I propose a general framework
incorporating both types of change, quantify the effects of speciational change via
the correlation between species
and attribute the proportion of change to each type. I discuss
results of parameter estimation of Hominoid body size in this light. 
I derive mathematical formulae related to this problem, the
probability generating functions of the number of speciation
events along a randomly drawn lineage and from the most recent common ancestor
of two randomly chosen tip species for a conditioned Yule tree.
Additionally I obtain in closed form
the variance  of the distance from the root to the 
most recent common ancestor of two randomly chosen tip species.
\end{abstract}

Keywords : Branching diffusion process, Conditioned branching process,
Phyletic gradualism,
Punctuated equilibrium,
Quadratic variation, Yule--Ornstein--Uhlenbeck with jumps process

\section{Introduction}
One of the major debates in evolutionary biology concerns when evolutionary change takes place. 
Two predominant theories state that changes take place either at times of speciation 
\citep[punctuated equilibrium][]{NEldSGou1972,SGouNEld1993}
or gradually over time \citep[phyletic gradualism, see references in][]{NEldSGou1972}. 

Phyletic gradualism describes how Darwin envisioned evolution happening  
\citep{NEldSGou1972}.
The theory of punctuated equilibrium was brought up in response to what
fossil data was indicating \citep{NEldSGou1972,SGouNEld1977,SGouNEld1993}.
One rarely observed a complete unbroken fossil series
but more often distinct forms separated by
long periods of stability \citep{NEldSGou1972}. 
Falconer and Darwin
at the birth of the theory of evolution 
already discussed this 
and the latter
saw ``the fossil record more 
as an embarrassment than as an aid to his theory'' \citep{NEldSGou1972}.

Therefore instead of continuous evolution \citet{NEldSGou1972} developed the idea
that change takes place predominantly at speciation
in the form of rapid evolution which are effectively jumps at the time 
scale of most phylogenies \citep{GSim1947}. 
They were of course not the first to do this. As already mentioned
such a theory can be found in Falconer's writings and was
presented under the name ``quantum evolution'' by \citet{GSim1947}.
In fact \citet{GSim1947} postulated that this quantum evolution
was responsible for the emergence of taxonomic units of high order.
However, the work of \citet{NEldSGou1972} re--introduced 
punctuated equilibrium into contemporary mainstream evolutionary theory.

One of the theoretical motivations behind punctuated equilibrium is
that the organism has to function well as a whole. Any particular change
to any trait can easily make it dysfunctional with the rest of the organism.
Therefore traits have to be adapted to each other first
\citep{SChe1961,TDob1956,TFra1975}. Any initial change to a trait 
will most probably be inadaptive or non--adaptive first
\citep{GSim1947}. One cannot expect the change to be large as then it could 
make the organism poorly adapted to its environment or the trait would not function well
in co--operation with other traits or functions of other body parts.
The probability of a mutation with a large
positive adaptive phenotypic effect occurring is very small \citep{GSim1947}.
However, it might happen that 
a new environmental niche appears \citep[due to some species dying out and leaving it empty][]{GSim1947} for which 
the direction of the change 
is beneficial. Therefore if the organism starts taking up the niche
and enhancing this new change, rapid evolution has to take place
for all body parts to catch up until a new equilibrium is reached. 
The alternative \citep[and usual][]{GSim1947} result of such a change is that the organism becomes extinct due
to no niche appearing 
while at the current one the new trait was at a disadvantage
or evolution was not rapid enough to fill up the new niche.


\citet{GSim1947} discussed that it would be easiest for 
such a change to fixate inside a ``small and completely isolated population.''
\citet{SChe1961} also postulated this 
mechanism for differentiation due to isolation inside a species.
\citet{EMay1982} hypothesized that genetic imbalance 
of isolated subpopulations of a species would favour change
and 
additionally pointed
out that this situation is a possible alternative explanation to the
observed long periods of stasis in the fossil record. 
Paleontologists usually encounter widespread populous species
and these by the previous argument would be the least likely 
to change. 

Showing whether a trait gradually evolves in a continuous manner
or is subject to large rapid changes followed by relative stability
is not only a result in itself but gives us actual insights about the
role of the trait for the organism.
\citet{AMooSVamDSch1999} discuss this.
They expect cladogenetic evolution to occur in
traits that are involved in speciation or niche shifts and gradual evolution
in traits under continuous selection pressures that often change direction.

Biological evidence for both gradual and punctuated evolution is present. 
As discussed by 
\citet{FBok2002} punctuated equilibrium is supported by fossil records \citep[see][]{NEldSGou1972,SGouNEld1977,SGouNEld1993}
and as already mentioned they were the motivation for the development of the theory.
On the other hand
\citet{TOza1975}'s study of the 
Permian verbeekinoid foraminifer \textit{Lepidolina multiseptata}
provides evidence for gradual evolution.
\citet{GStebFAya1981} in their work also point to 
experiments supporting phyletic gradualism. 
In a more recent work \citet{GHun2008} found that both a punctuated model and gradual model
seemed plausible for evolution
of pygidial morphology on a lineage of the trilobite \textit{Flexicalymene}.
One should suspect, as \citet{GSim1947} pointed out, that evolution will be a mix of different modes
and 
interestingly body size evolution in the radiolarian \textit{Pseudocubus vema}
and shape components in the foraminifera \textit{Globorotalia}
were best explained by a model with both anagenetic and cladogenetic components \citep{GHun2008}.
Even more recently \citet{GHun2013} studied the deep--sea ostracode genus \textit{Poseidonamicus}.
Using a maximum likelihood approach he found that only one or two shape traits seem to have evolved
via speciational change while four other shape traits evolved gradually (Brownian motion).

Mathematical models with gradual and punctuated components have been considered previously in the literature.
Examples of this are due to \citet{FBok2002,FBok2008,FBok2010,TMatFBok2008,AMooDSch1998,AMooSVamDSch1999}.
In a very recent work \citet{JEasDWegCLeuLHar2013} study \textit{Anolis} lizards evolution 
using a Brownian motion with jumps inside branches model.
Testing, especially based on extant species only, whether gradual or punctuated evolution is dominating
or if both contribute similarly can be a difficult matter.
One only observes the contemporary sample and it is necessary to divide the signal between the two
sources of stochasticity. 
Work has of course been done in this direction. For example
\citet{JAvi1977} studied the type of evolution in some contemporary fish families  and
\citet{TMatFBok2008} asked whether gradual or punctuated evolution is responsible for body size evolution in
mammals. \citet{GHun2008} developed a likelihood--based statistical framework 
for testing whether a purely gradual (Brownian motion) or a purely punctuated model
or a mix of these two mechanisms explained phenotypic data best. 
It should also be remembered that measurement error can 
make estimation and testing even more difficult.
Even with only gradual change, measurement error 
can have very complicated effects in comparative data e.g.
phylogenetic regression coefficients can be both upward or downward biased
depending on both the tree and true model parameters 
\citep{THanKBar2012}.

The evolutionary model considered in this work has two main components. 
One is the model
of phenotypic evolution and the other the branching process underlying the phylogeny. 
Currently most of the phylogenetic comparative literature is set in a framework
where one conditions on a known tree. Whilst with the current wealth
of genetic data which allows for more and more accurate phylogenies this 
is a logical framework, one can run into situations where 
this is not sufficient. Typical examples are fossil data or 
unresolved clades e.g. 
in the Carnivora order
\citep[used for an example analysis in][]{FCraMSuc2013}. 
To add to this, very recently a new member of the olingos (\textit{Bassaricyon}, order Carnivora)
the Olinguito (\textit{B. neblina}) has been described \citep{Olinguito}.
Moreover when considering models with a 
jump component one has to consider speciation
events leading to extinct lineages and how this interplays with 
the model of phenotypic evolution assumed here.
The tree and speciation events themselves might not be of direct interest, in fact they could actually be 
nuisance parameters. So there is interest in tree--free
methods that preserve distributional properties of the observed phenotypic values
\citep{FBok2010}.

I consider in this work two models of phenotypic evolution. Brownian
motion, interpreted as unconstrained neutral evolution 
and a single optimum Ornstein--Uhlenbeck model.
In the case of the second model the  phenotype tends to constrained oscillations around the attractor. 
On top of this just after a speciation
event each daughter lineage 
undergoes rapid change, described here by some particular probability density function.

Combining the Ornstein--Uhlenbeck process with a jump component
makes phylogenetic comparative models consistent with the
original motivation behind punctuated equilibrium.
After a change occurs rapid adaptation of an organism
to the new situation takes place followed by stasis. Stasis as underlined
by \citet{SGouNEld1993} does not mean that no change occurs. 
Rather that during it ``fluctuations of little or no accumulated consequence'' occur. 
The Ornstein--Uhlenbeck 
process fits into this idea. If the adaptation rate is large enough
then the process reaches stationarity very quickly and oscillates
around the optimal state and this can be interpreted as stasis
between the jumps --- the small fluctuations. 
\citet{EMay1982} supports this sort of reasoning by hypothesizing
that ``The further removed in time a species from the original
speciation event that originated it, the more its genotype
will have become stabilized and the more it is likely to resist change.''

Literature concerning punctuated equilibrium models is predominantly 
concentrated on applications and estimation.
Here I take a more theoretical approach and study the mathematical properties
of evolutionary models 
with  an additional component of rapid phenotypic change associated with speciation
assuming that all model parameters are known (or have been preestimated).
To the best of my knowledge an adaptive evolutionary model with a 
punctuated component has not been developed fully. 
The main contribution of this paper
is the mathematical development of such a model ---
the Ornstein--Uhlenbeck model with jumps.

Current Ornstein--Uhlenbeck based evolutionary models
have a discontinuity component but this is in the 
primary optimum \citep{KBaretal,MButAKinOUCH,THan1997}.
The phenotype evolves gradually towards oscillations around a new optimum value
and this can correspond to a change in the environment, a sudden replacement of the
adaptive niche. The model considered here is different. It is the 
phenotype which jumps, it is pushed away from its optimum 
due to e.g. a mutation that changed the character dramatically. 
After the jump the phenotype evolves back towards the optimum.

Of course it would be more realistic to have a discontinuity in 
both the phenotype and the primary optimum. Then when the jump in the phenotype
and optimum function would coincide the species would be at an advantage
needing a shorter time to reach equilibrium. However, this setup
would require some sort of model for the optimum and is beyond the scope
of this study.

The other model considered by us is a Brownian motion model with jumps.
It differs from the Ornstein--Uhlenbeck one in that it does not
have an adaptive constraint. In fact, as I show in my mathematical derivations,
it is the limit as the rate of adaptation goes to $0$. From a biological
point of view this model illustrates unconstrained evolution with rapid speciational change. 
Additionally it serves as a warm--up to the mathematics of the significantly more involved Ornstein--Uhlenbeck model.

The Ornstein--Uhlenbeck with jumps model combines three important evolutionary components. Firstly it has built into
it a phylogenetic inertia component, 
the ancestral value of the trait will make up part of the contemporary species's trait value 
and additionally cause dependencies between current species.  
Secondly unlike the Brownian motion model the variance of this process tends to a constant and 
so after some time the trait should exhibit constrained oscillations around the optimum value --- stasis.
And finally there is the jump component --- the possibility that at speciation the trait
undergoes rapid evolution.

In this work apart from
introducing a model combining
adaptation and punctuated equilibrium
I propose a way of quantitatively
assessing the effect of both types of evolution
so that one can consider a mix of gradual and punctuated components. 
The analytical results are 
computed for a pure birth model of tree growth, however using
example data I show that they can be carried over (at the moment via simulation
methods) to branching process models that include extinction.

The phylogeny, number and timing of speciation events, is modelled 
by a constant--rates birth--death process
conditioned, on the number of tip species, $n$.
By $\lambda$ I denote the birth rate 
and by $\mu$ the death rate. 
Contemporary literature on this is substantial
\citep[e.g.][]{DAldLPop2005,TGer2008a,AMooetal,TreeSim1,TreeSim2,TStaMSte2012}.
The key mathematical property of conditioned branching processes 
is that conditional on the tree's height the times of speciation
events are independent and identically distributed random variables.
The distribution is particular to the regime:
critical ($\lambda=\mu$),
supercritical ($\lambda > \mu >0$) or pure birth ($\lambda>\mu=0$).
Estimation of birth and death rates has been
widely discussed in the literature
\citep[e.g.][]{FBok2002,FBok2003,FBokVBriTSta2012,SNee2001,SNeeEHolRMayPHar1994,SNeeRMayPHar1994}.

The combination of evolutionary processes and conditioned branching
processes has already been considered by \citet{AEdw1970} who proposed
a joint maximum likelihood estimation procedure of a pure birth tree and a 
Brownian motion on top of it. Markov--Chain Monte Carlo based methods to jointly
estimate the phylogeny and parameters of a Brownian motion trait
have been proposed by \citet{JHueBRanJMas2000} and \citet{JHueBRan}.
\citet{GSlaetal2012} develop an Approximate Bayesian Computation framework to estimate
Brownian motion parameters in the case of an incomplete tree.
\citet{SSagKBar2012,FCraMSuc2013,KBarSSag2012} have contributed by
considering a Brownian motion on an unobserved birth--death tree in the first two works
and an Ornstein--Uhlenbeck process on an unobserved pure birth tree in the third.
These studies concentrate on the speciation process driving the phenotypic process,
as is usually assumed in phylogenetic comparative methods, where the trait is of main
interest. 
There have been a number of recent papers concerning models
where the rate of speciation depends
on the trait values.
\citet{MPieJWei2005} discuss many possible modelling examples
where the phenotypic and speciation processes interact. 
\citet{WMadPMidSOtt2007} derive likelihood formulae 
for a model where a binary character drives speciation and extinction
and study this model via simulations.
Closer to the present setting is \citet{RFit2010}
who assumed the trait evolved as a diffusive process 
with birth and death coefficients as functions of the trait. 
A similar framework of combining Ornstein--Uhlenbeck process
and branching processes has been considered by
\citet{ARosABraUDie2010} and by \citet{ABraLCarARos2011} in the food--web
structure community.

The organization of this paper is as following. In Section \ref{secYule} I discuss the pure birth model
of tree growth, derive the probability generating function of the number of speciation events on a random lineage
and from the most recent common ancestor of two randomly sampled tip species and in addition also show how my results can
be applied to working with the total tree area. Section \ref{secPEq} is devoted to punctuated equilibrium evolutionary models. 
Section \ref{secHom}
interprets the results of the Hominoidea analyses of \citet{FBok2002,FBokVBriTSta2012} in the light
of the present models and Section \ref{secDis} is a discussion,  Appendix A contains proofs 
of the main mathematical results Theorems \ref{thPGF} and \ref{thRhon}, and 
Appendix B presents a brief introduction to quadratic variation.

\section{The conditioned Yule process}
\label{secYule}
\subsection{Yule model of tree growth}
The Yule tree model \citep{GYul1924} is a pure birth Markov branching 
process. At the beginning there is one species that lives for an exponential
time and then splits into two species each behaving in the same manner. 
Here I will consider a Yule process conditioned on $n$ tips: the extant species
These types of branching processes, called  conditioned birth--death processes,
have received significant attention in the last decade 
\citep[e.g.][]{DAldLPop2005,TGer2008a,AMooetal,TreeSim1,TreeSim2,TStaMSte2012}.

For the purpose of this current work I need the Laplace 
transform of, $T$, the height of a Yule tree conditioned on $n$ tips at
present and the Laplace transform of the random variable $\tau$,
the time to the most recent common ancestor of two species randomly sampled out of $n$.
These have already been derived and studied in detail by \citet{KBarSSag2012}.
In addition to be able to incorporate the jump events I
will need to study the random variables $\Upsilon$ and $\upsilon$ --- 
the number of speciation events from the time of
origin of the tree until a randomly chosen tip species
and  the number of speciation events
that occurred on the lineage from the tree origin to the most recent common ancestor (excluding it)
of two randomly sampled tip species respectively.
\mbox{Figure \ref{tr}} visualizes these random variables. These two random variables
can also be seen as distances on the tree counted as number of edges,
$\Upsilon$ --- distance from the root of random tip, 
$\upsilon$ --- distance from the root of the most recent common ancestor of two 
randomly sampled tips.

\begin{figure}[!h]
\centering
\includegraphics[width=13cm]{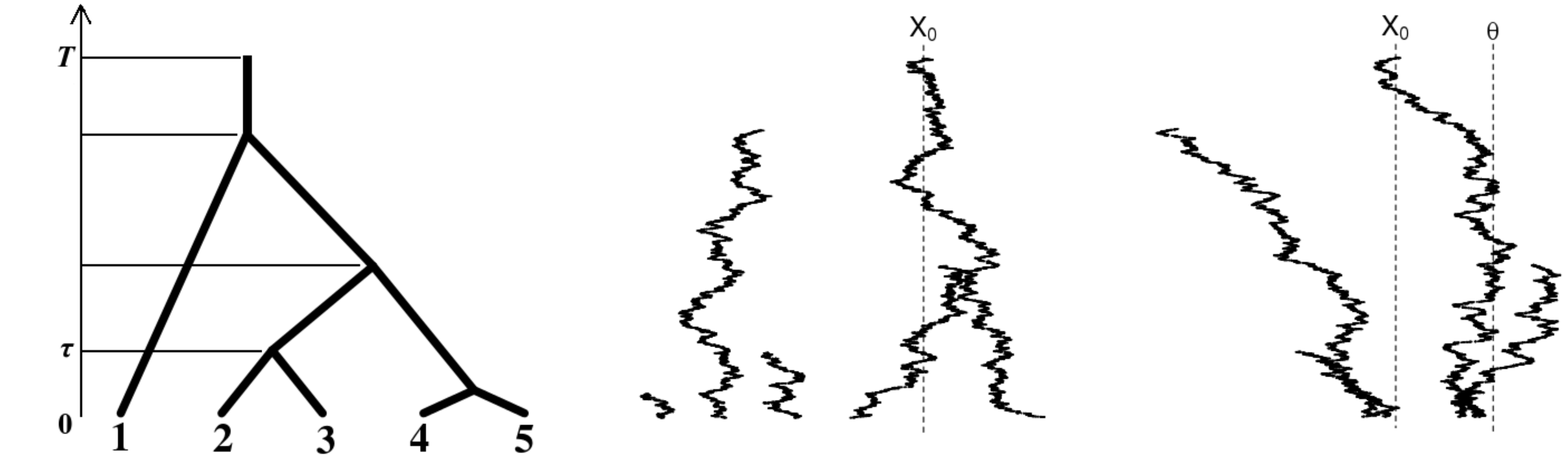}
\caption{
Left: a conditioned Yule tree ($\lambda=1$, $n=5$) with the different time components marked on it.
The height of the tree is $T$.
If species $2$ is ``randomly'' chosen then it would have $\Upsilon=3$ and if the pair of species
$2$, $3$ is ``randomly'' chosen 
then they would have $\upsilon=2$ and the time till their coalescent would be $\tau$. 
Center: a Brownian motion with jumps 
($X_{0}=0$, $\sigma_{a}^{2}=1$, $\sigma_{c}^{2}=5$) evolving on the tree
and right: an Ornstein--Uhlenbeck process with jumps ($\alpha=1$, $\theta=1$, $X_{0}=0$, $\sigma_{a}^{2}=1$, $\sigma_{c}^{2}=5$)
evolving on the tree.
Simulations were done using the TreeSim \citep{TreeSim1,TreeSim2} and mvSLOUCH \citep{KBaretal} R \citep{R} packages. 
}
\label{tr}
\end{figure}

I recall the following key lemma concerning a Yule tree,
\begin{lemma}[\citealp{KBarSSag2012,TreeSim1}]
In a Yule tree conditioned on $n$ contemporary species the 
probability that the coalescent of two randomly sampled tips occurred at
the $k$--th (counting forward in time) speciation event is,
\be \label{eqpikn} \pi_{k,n}={2(n+1)\over (n-1)(k+2)(k+1)},\ k=1,\ldots,n-1.\ee
\end{lemma}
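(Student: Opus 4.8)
The plan is to count, for a fixed realisation of the tree, the number of tip pairs whose most recent common ancestor is the node born at the $k$-th speciation event, and then to average over the conditioned Yule tree. First I would observe that this node splits a single lineage into two daughter lineages whose clades eventually contain $n_A$ and $n_B$ of the $n$ tips; a randomly sampled pair coalesces there precisely when one of the two sampled tips falls in each daughter clade. Hence for a given tree there are $n_A n_B$ favourable pairs out of $\binom{n}{2}$, and averaging over trees gives
\be
\pi_{k,n}=\frac{\E{n_A n_B}}{\binom{n}{2}}=\frac{2\,\E{n_A n_B}}{n(n-1)}.
\ee
This reduces the whole problem to a single mixed second moment of clade sizes.

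Next I would determine the joint law of those clade sizes. Immediately after the $k$-th event there are exactly $k+1$ lineages, and I would track the number of eventual tips descending from each. Because in a Yule process every extant lineage branches at the same rate, at each subsequent split each currently present lineage is equally likely to be the one that divides; following the $k+1$ founder clades therefore evolves exactly as a P\'olya urn started from $k+1$ singletons. The standard urn computation then shows that the vector of final clade sizes $(c_1,\dots,c_{k+1})$ is uniform over all compositions of $n$ into $k+1$ positive parts. The two daughters of the node in question are two of these $k+1$ coordinates, which are exchangeable, so $\E{n_A n_B}=\E{c_1 c_2}$.

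Finally I would evaluate $\E{c_1 c_2}$ under the uniform law on compositions. Using the dividers (stars and bars) representation, placing $k$ dividers in the $n-1$ gaps between $n$ aligned units, the number of compositions with prescribed first two parts is again a binomial coefficient, and a short summation gives
\be
\E{c_1 c_2}=\frac{n(n+1)}{(k+1)(k+2)}.
\ee
Substituting into the first display and simplifying yields the claimed
\be
\pi_{k,n}=\frac{2(n+1)}{(n-1)(k+2)(k+1)},\qquad k=1,\dots,n-1.
\ee
I expect the main obstacle to be the second step: rigorously justifying that conditioning on $n$ present-day tips leaves the clade-size vector uniform over compositions, i.e.\ that the embedded P\'olya-urn description remains valid under the conditioning, and checking that the two daughter clades of the $k$-th node are genuinely exchangeable with the remaining $k-1$ founder clades. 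Since the conditioning acts symmetrically on the lineages and depends only on their total count, this symmetry should be preserved, after which the remaining computations are elementary.
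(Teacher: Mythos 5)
Your proof is correct, and it is worth noting at the outset that this paper contains no proof of the lemma to compare against: the result is recalled with citations to \citet{KBarSSag2012} and \citet{TreeSim1}, so your argument stands as a self-contained derivation. All three of your steps hold. (i) A random pair coalesces at the $k$-th node iff the two sampled tips fall in the two different daughter clades, so $\pi_{k,n}=\E{n_An_B}/\binom{n}{2}$. (ii) In a pure-birth process the lineage that splits at each event is uniform among those currently alive, independently of the waiting times; conditioning on $n$ extant species constrains only the number of lineages through time, not which lineage splits, so the clade sizes $(c_1,\dots,c_{k+1})$ of the $k+1$ lineages present after event $k$ do follow the P\'olya-urn scheme and are uniform over compositions of $n$ into $k+1$ positive parts; exchangeability of the coordinates then justifies $\E{n_An_B}=\E{c_1c_2}$, including for the two newborn daughter lineages. (iii) The moment computation is right: using $\sum_{a+b=s}ab=\binom{s+1}{3}$ and Vandermonde, $\E{c_1c_2}=\binom{n+1}{k+2}\big/\binom{n-1}{k}=\frac{n(n+1)}{(k+1)(k+2)}$, which gives exactly $\pi_{k,n}=\frac{2(n+1)}{(n-1)(k+1)(k+2)}$ (and one checks it sums to $1$ over $k=1,\dots,n-1$). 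For comparison, the derivation usually found in the cited literature runs backward in time: the ranked topology of a conditioned Yule tree coincides with that of the Kingman coalescent, so at each merge a uniformly chosen pair of lineages coalesces, whence $\pi_{k,n}=\frac{1}{\binom{k+1}{2}}\prod_{j=k+2}^{n}\left(1-\frac{1}{\binom{j}{2}}\right)$, and the product telescopes to the same expression. That route is shorter; yours costs a moment computation but buys the full joint law of clade sizes at the $k$-th event, which is of independent interest since it yields all mixed moments rather than only the coalescent-ranking probabilities.
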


I use the following notation,
\bd
\begin{array}{rcl}
H_{n} & = & \sum_{i=1}^{n} \frac{1}{n} \\
b_{n,y} & = & {1\over y+1}\cdot{2\over y+2}\cdots{n\over y+n} = \frac{\Gamma(n+1)\Gamma(y+1)}{\Gamma(n+y+1)},
\end{array}
\ed
where $\Gamma(\cdot)$ is the gamma function.
In the special case $y=j$, where $j$ is an integer, $b_{n,j}={1\over{n+j\choose j}}$.
Using this I recall \citep[from][]{KBarSSag2012} the Laplace transforms 
of $T$ and $\tau$,
\be
\E{e^{-y T}} = b_{n,y}
\label{eqLapT}
\ee 
and
\be
\E{e^{-y \tau}}  ={2-(n+1)(y+1)b_{n,y}\over(n-1)(y-1)}
\label{eqLaptau}
\ee
with $\E{e^{-\tau}} = {2\over n-1}(H_{n}-1)- {1\over n+1}$ as the limit of $y \rightarrow 1$.

I further recall \citep{KBarSSag2012,SSagKBar2012,MSteAMcK2001},
\be
\begin{array}{ccc}
\E{T} =  H_{n}, & ~~~ &
\E{\tau}  =  \frac{n+1}{n-1}H_{n} -\frac{2n}{n-1},
\end{array}
\ee
both behaving as $\ln n$ 
\citep[however see also][for situations with extinction present]{TGer2008a,TGer2008b,AMooetal,SSagKBar2012,TSta2008,TStaMSte2012}.

\subsection{Counting speciation events}
I defined the random variable $\Upsilon$ as the number of speciation events from the time of
origin of the tree until a randomly chosen tip species (see \mbox{Fig. \ref{tr}).} 
I can write $\Upsilon = \sum_{i=1}^{n-1} \mathbf{1}_{i}$,
where $\mathbf{1}_{i}$ is the indicator random variable of whether the 
$i$--th speciation event (counting from the first speciation event) is
on the randomly chosen lineage. The probability that $\mathbf{1}_{i}=1$
is $2/(i+1)$. The argument behind this is as follows, just before the $i$--th
speciation event there are $i+1$ points that need to coalesce. Exactly one of these
is on the lineage of interest. There are $\binom{i+1}{2}$ in total possibilities
of choosing the two points that will coalesce. Exactly $i$ will contain
the point of interest, so as I consider the Yule tree model
the probability is $i/\binom{i+1}{2}=2/(i+1)$. From this I get that 
\be\label{eqEUpsilon}
\E{\Upsilon}=2H_{n}-2 \sim 2\ln n. 
\ee

Now let us consider the situation that I sample two tip species
and am interested in the expectation of, $\upsilon$, the number of speciation events
that occurred on the lineage from the tree origin to their most recent common ancestor (excluding it), see \mbox{Fig. \ref{tr}.}
Using Eq. \eqref{eqpikn} 
and the following identity,
\be
\sum\limits_{k=1}^{n-1}\frac{H_{k}}{(k+1)(k+2)} = \frac{n-H_{n}}{n+1} \stackrel{n \rightarrow \infty}{\xrightarrow{\hspace*{1cm}}} 1
\ee
this equals,
\be \label{eqEupsilon}
\begin{array}{rcl}
\E{\upsilon} & = & \sum\limits_{k=1}^{n-1}\pi_{k,n}2(H_{k}-1) = 
\frac{4(n+1)}{n-1}\sum\limits_{k=1}^{n-1} \sum\limits_{i=1}^{k} \frac{1}{(k+2)(k+1)} \frac{1}{i} -2
\\&=& 
\frac{4}{n-1}\left(n-H_{n}\right)-2 \stackrel{n \rightarrow \infty}{\xrightarrow{\hspace*{1cm}}} 2.
\end{array}
\ee

The same formulae for $\E{\Upsilon}$ and $\E{\upsilon}$ were derived by \citet{MSteAMcK2001}
alongside the distribution and variance of $\Upsilon$.

I can add a variation to the above discussion by marking each speciation event with probability
$p$. Let then $\Upsilon^{\ast} \le \Upsilon$ count the number of marked events
along a randomly chosen lineage. Obviously 
\be
\E{\Upsilon^{\ast}} =  p\E{\Upsilon}=2p(H_{n}-1).
\ee
Similarly let $\upsilon^{\ast} \le \upsilon$ count the number of marked speciation events
from the tree origin to the most recent common ancestor of two randomly chosen tip species.
Again 
\be
\E{\upsilon^{\ast}} = p\E{\upsilon} = 2p\left(\frac{2}{n-1}\left(n-H_{n}\right)-1\right).
\ee

\begin{figure}
\begin{center}
\includegraphics[width=0.4\textwidth]{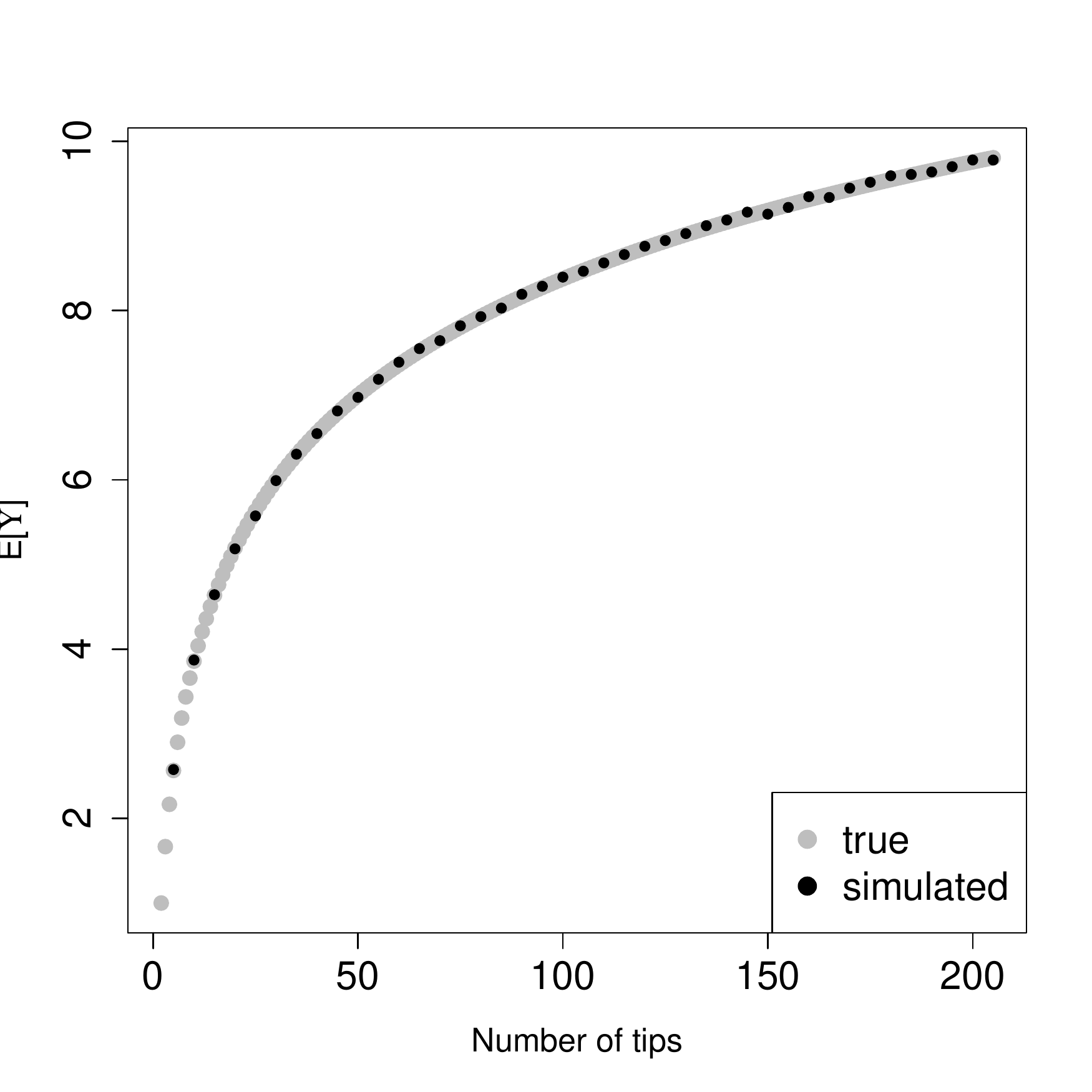}
\includegraphics[width=0.4\textwidth]{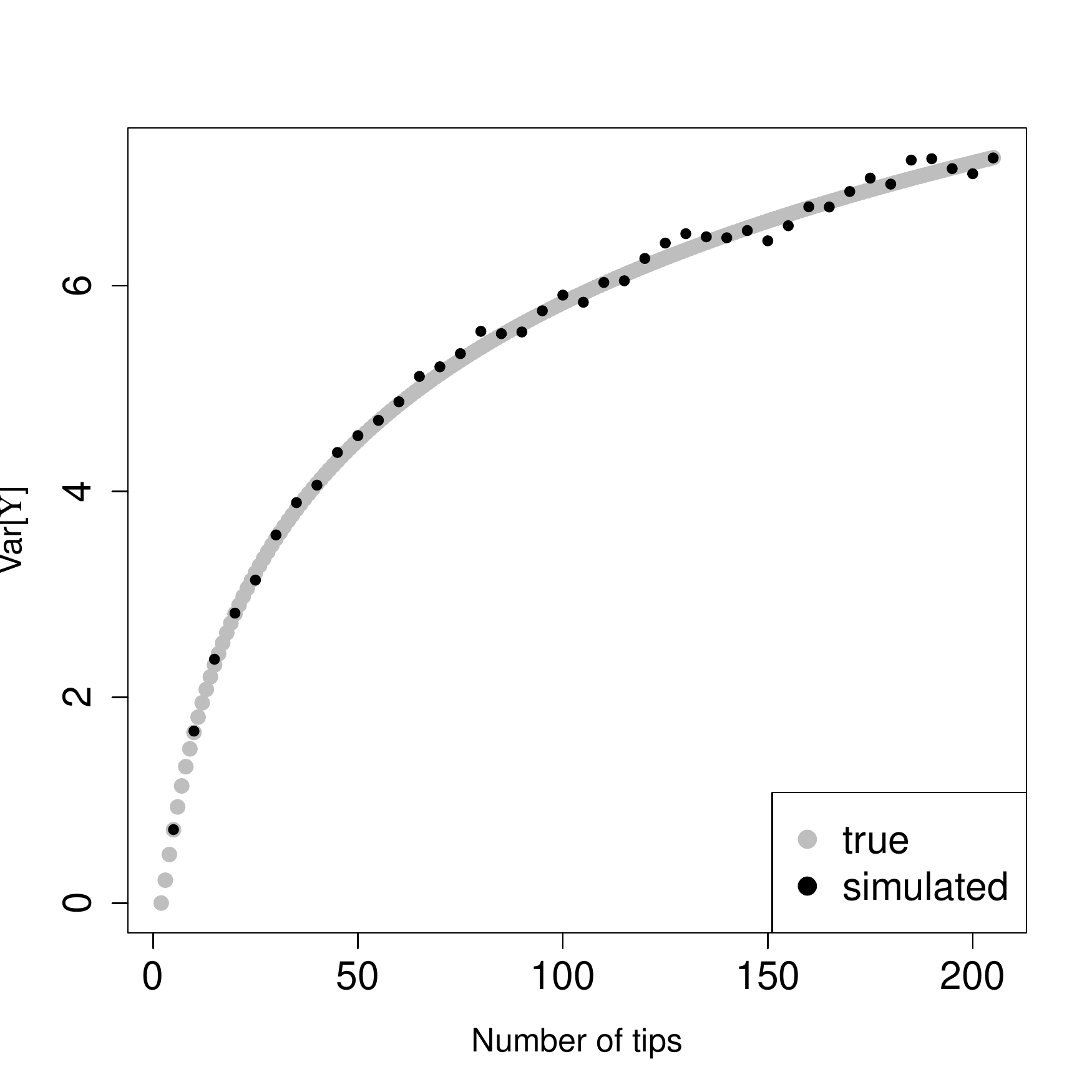} \\
\includegraphics[width=0.4\textwidth]{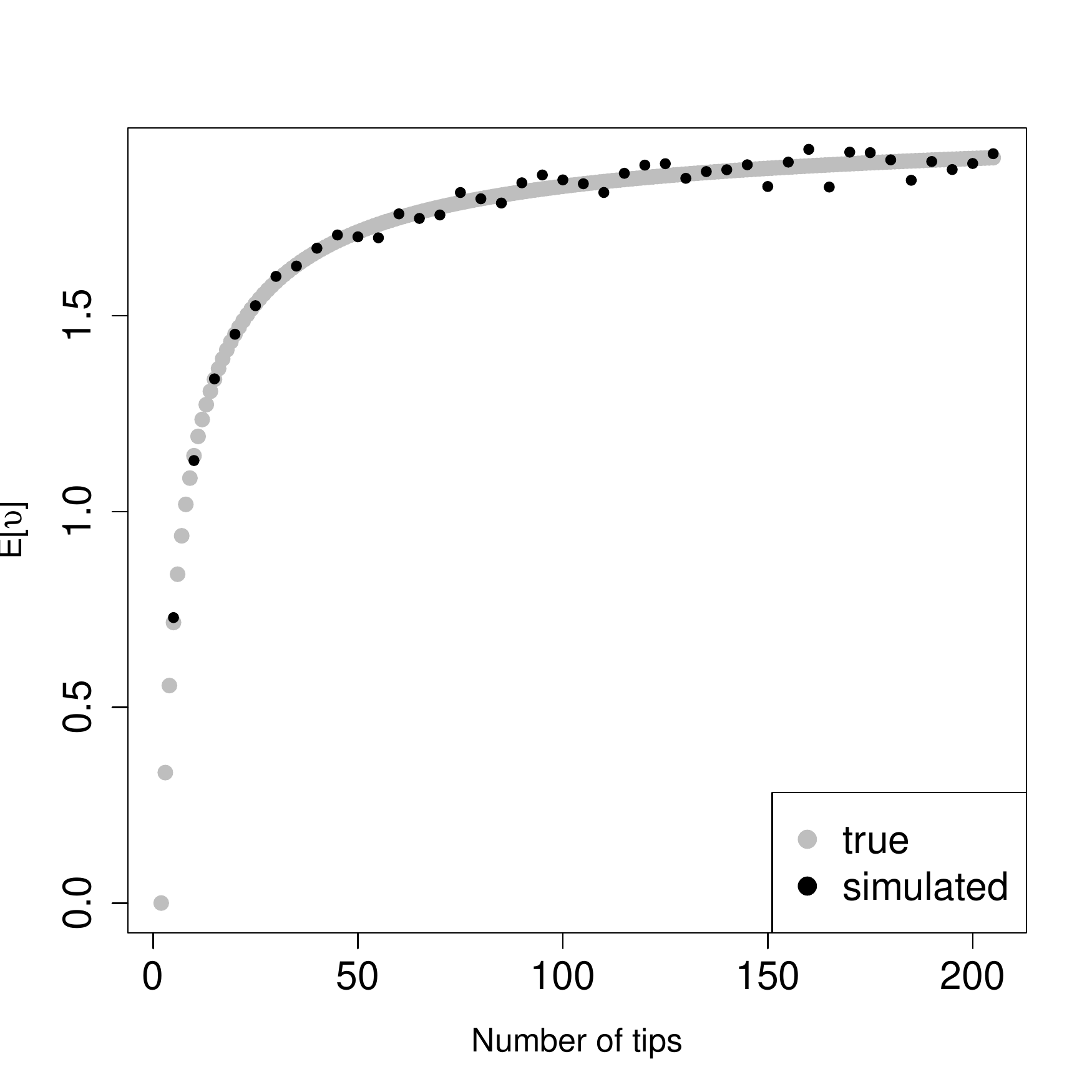}
\includegraphics[width=0.4\textwidth]{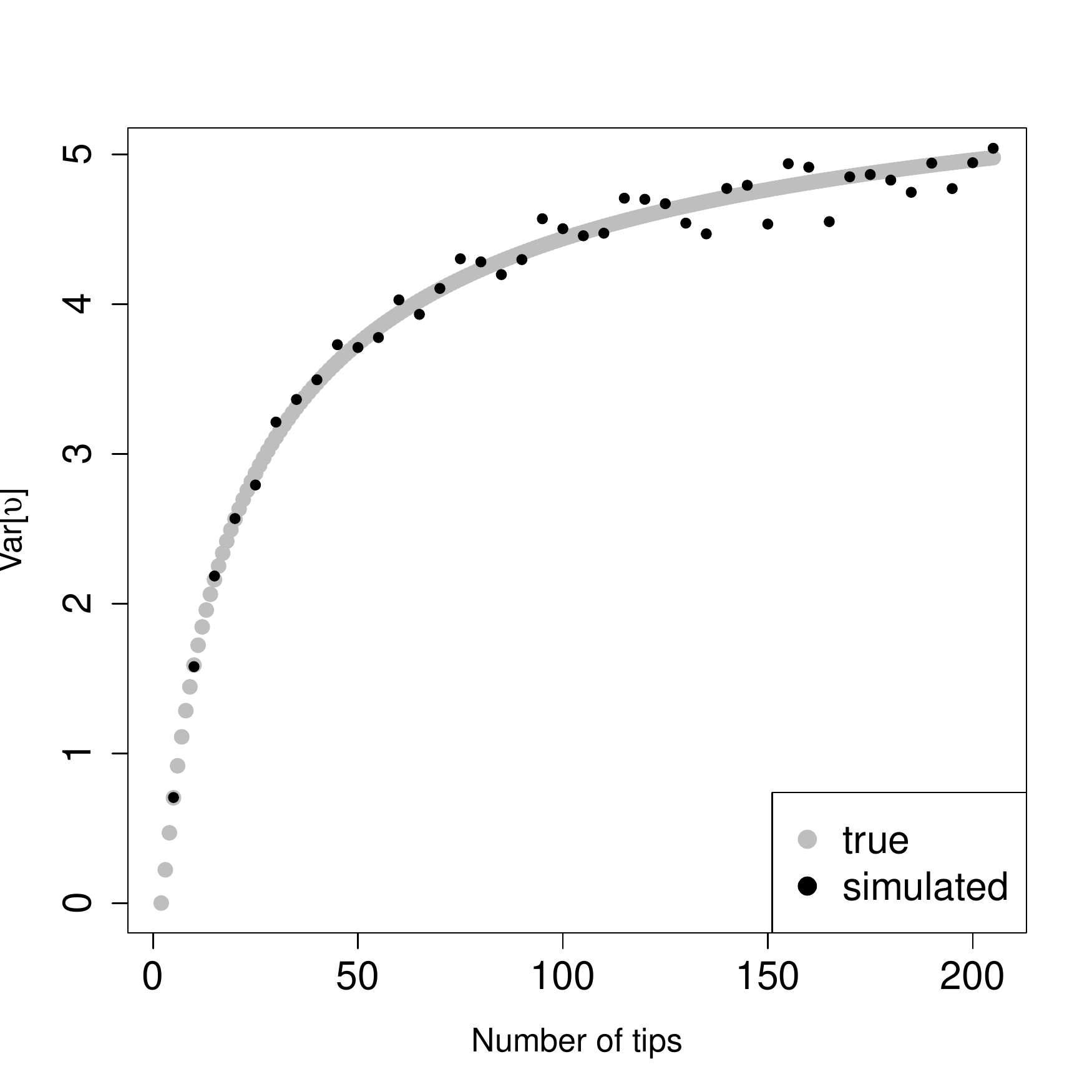}
\end{center}
\caption{Simulated and true values of 
$\E{\Upsilon}$, $\Var{\Upsilon}$ (top) and $\E{\upsilon}$, $\Var{\upsilon}$ (bottom). 
Each point comes from $10000$ simulated Yule trees.
$\Var{\upsilon}$ grows very slowly and with $n=200$
it is still rather distant from its asymptotic value of $6$.
}
\end{figure}

I further derive here the probability generating functions of $\Upsilon$ and $\upsilon$. 
They do not depend on the speciation rate $\lambda$ (a scale parameter for branch lengths 
when conditioning on $n$) as $\Upsilon$ and $\upsilon$ 
are topological properties.
Notice in the theorem below that 
$b_{n-1,2s-1}$ is well defined because $2s-1>-1$ (as $2s>0$)
and therefore consequently the probability generating functions are well defined for all $s>0$. 
One should also not forget that in the case of there being only two extant species
$\upsilon$ equals $0$ due to there being only one speciation event --- the most recent common ancestor of the only pair 
of species --- and this is not counted by $\upsilon$.
\begin{theorem}\label{thPGF}
The probability generating function for $\Upsilon$ is,
\be
\E{s^{\Upsilon}} = \frac{1}{n}\frac{1}{b_{n-1,2s-1}}, s>0,
\ee
while for $\upsilon$,
\be
\E{s^{\upsilon}} = \left\{
\begin{array}{ll}
\frac{1}{(n-1)(3-2s)}\left((n+1)-\frac{2}{nb_{n-1,2s-1}}\right) & 0<s \neq \frac{3}{2}, n\ge 2 \\
\frac{2(n+1)}{n-1}\left(H_{n+1}-\frac{5}{3} \right) & s=\frac{3}{2}, n\ge 2.
\end{array}
\right.
\ee
\end{theorem}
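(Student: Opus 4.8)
The plan is to handle the two generating functions separately, reducing each to a product or a telescoping sum that collapses to the stated closed form through the Gamma--function identity defining $b_{n,y}$.

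For $\Upsilon$ I would start from the decomposition $\Upsilon=\sum_{i=1}^{n-1}\mathbf{1}_{i}$ already in hand and argue that the indicators are \emph{independent}. Reading the conditioned Yule tree backwards as a coalescent, the pair of ancestral lineages that merges when their number falls from $i+1$ to $i$ is uniform over the $\binom{i+1}{2}$ pairs and is chosen independently at each level; hence whether the ancestral lineage of the focal tip is one of the merging pair is Bernoulli with probability $2/(i+1)$, independent across $i$. Granting this, the generating function factorises and simplifies,
\be
\E{s^{\Upsilon}}=\prod_{i=1}^{n-1}\left(1+\frac{2}{i+1}(s-1)\right)=\prod_{i=1}^{n-1}\frac{i-1+2s}{i+1}=\frac{\Gamma(n-1+2s)}{\Gamma(n+1)\Gamma(2s)},
\ee
since the numerator is the rising product $\Gamma(n-1+2s)/\Gamma(2s)$ and the denominator is $n!=\Gamma(n+1)$; the right--hand side is exactly $1/(n\,b_{n-1,2s-1})$.

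For $\upsilon$ I would condition on the rank $k$ of the speciation event at which the two sampled tips coalesce, whose law $\pi_{k,n}$ is given by the Lemma. The structural point is that, conditional on coalescence at the $k$-th event, the number of earlier speciations on the path from the origin to that ancestor has the law of $\Upsilon$ for a Yule tree on $k$ tips: just before the $k$-th split the $k$ ancestral lineages form a $k$-tip Yule tree, and by memorylessness of the exponential lifetimes the splitting lineage is uniform among these $k$ and independent of the shape built so far, so the depth of the coalescent node is the root--to--tip count of that embedded tree. By the first part its conditional generating function is therefore $1/(k\,b_{k-1,2s-1})$, and summing gives
\be
\E{s^{\upsilon}}=\sum_{k=1}^{n-1}\pi_{k,n}\,\frac{1}{k\,b_{k-1,2s-1}}=\frac{2(n+1)}{(n-1)\Gamma(2s)}\sum_{k=1}^{n-1}\frac{\Gamma(k-1+2s)}{(k+2)!},
\ee
using $(k+1)(k+2)\,k!=(k+2)!$. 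This sum telescopes: $\Gamma(k-1+2s)/(k+2)!=B_{k}-B_{k+1}$ with $B_{k}=\Gamma(k-1+2s)/\bigl((3-2s)\Gamma(k+2)\bigr)$, so it collapses to $B_{1}-B_{n}$, and after using $\Gamma(n+2)=(n+1)\Gamma(n+1)$ this reproduces the stated $\bigl((n+1)-2/(n\,b_{n-1,2s-1})\bigr)/\bigl((n-1)(3-2s)\bigr)$ for $s\neq 3/2$.

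At $s=3/2$ the factor $3-2s$ vanishes while the bracket also vanishes, so the closed form carries a removable singularity that I would evaluate by l'H\^opital in $s$: the derivative of $\Gamma(n-1+2s)/\Gamma(2s)$ produces the digamma difference $\psi(n+2)-\psi(3)$, expressing the value through a harmonic number (equivalently, set $2s=3$ in the sum, where $\Gamma(k+2)/(k+2)!=1/(k+2)$ makes it an explicit harmonic sum). I expect the two exchangeability claims to be the main obstacle: the mutual independence of the merge--indicators $\mathbf{1}_{i}$, and the identification of the conditional law of $\upsilon$ with that of $\Upsilon$ on a $k$-tip tree. Both rest on the conditioned Yule coalescent making uniform, level--independent pair choices, but the second also requires checking that conditioning on which two tips are sampled does not bias the uniform choice of the splitting lineage. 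Once these are secured, what remains is Gamma--function bookkeeping, one telescoping sum, and a single limit.
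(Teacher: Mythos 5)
Your proposal is correct and follows essentially the same route as the paper's own proof: the same independent--indicator decomposition $\Upsilon=\sum_{i}\mathbf{1}_{i}$ giving the product form of the first PGF, and for $\upsilon$ the same conditioning on the rank $k$ of the coalescent event, with the conditional PGF identified as $\prod_{i=1}^{k-1}\E{s^{\mathbf{1}_{i}}}=1/(k\,b_{k-1,2s-1})$. The only substantive difference is how the final sum is closed: the paper invokes its general Gamma--quotient identity, Eq.~\eqref{eqGsum}, whereas you telescope via $B_{k}=\Gamma(k-1+2s)/\bigl((3-2s)\Gamma(k+2)\bigr)$; your telescoping is in effect a self--contained proof of exactly the case of that identity that is needed, so nothing is lost and the argument is, if anything, more elementary.

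There is, however, one thing you must be aware of so that you do not conclude your computation went wrong: at $s=3/2$ your procedure (either l'H\^opital on the closed form, or direct substitution $2s=3$ into the sum, which gives $\sum_{k=1}^{n-1}1/(k+2)=H_{n+1}-\tfrac{3}{2}$ with prefactor $2(n+1)/\bigl((n-1)\Gamma(3)\bigr)$) yields
\be
\E{s^{\upsilon}}\Big\vert_{s=3/2}=\frac{n+1}{n-1}\left(H_{n+1}-\frac{3}{2}\right),
\ee
not the stated $\frac{2(n+1)}{n-1}\bigl(H_{n+1}-\frac{5}{3}\bigr)$; the two expressions coincide only at $n=2$, where $H_{3}=11/6$. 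Your value is the correct one: it is the limit of the first line of Theorem~\ref{thPGF} as $s\rightarrow 3/2$ (the PGF is continuous in $s$, so the two lines must agree), and it checks against direct enumeration --- for $n=3$ one has $\E{s^{\upsilon}}=(s+2)/3$, whose value at $s=3/2$ is $7/6$, matching your formula but not the stated $5/3$. The discrepancy traces to the paper's own evaluation of the special case, where the factor $1/\Gamma(2s)=1/2$ in the prefactor is dropped when substituting $s=3/2$. So your plan, carried out as written, proves the $s\neq 3/2$ part of the statement and supplies a corrected version of the $s=3/2$ part.
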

Using the value of the second derivatives of the probability generating function at $1$ I can calculate the variance of
$\Upsilon$ \citep[obtained earlier in][in a different manner]{MSteAMcK2001} and $\upsilon$ as,
\be
\begin{array}{rclcl}
\var{\Upsilon} & = & 2\left(H_{n}-1-2(H_{n,2}-1) \right) & \sim & 2\ln n, \\
\var{\upsilon} & = & 6-8\frac{n+1}{(n-1)^{2}}(H_{n}-1)^{2}-4\frac{H_{n}-1}{n-1}+8\frac{H_{n,2}-1}{n-1} & \rightarrow & 6
\end{array}
\ee
where $H_{n,2}=\sum\limits_{k=1}^{n}\frac{1}{k^{2}}$. 

\subsubsection{Connection to total tree area}
The \emph{total area} of a tree $T$ with $n$ tips, is defined \citep[as in][]{AMirFRos2010},
\be
D_{n}(T) = \sum\limits_{1\le i < j \le n}d_{T}(i,j),
\ee
where $d_{T}(i,j)$ is the distance between nodes $i$ and $j$. This distance
can be counted in two ways, either as the number of edges on the path between the two nodes
or as the number of vertices. The latter will be one less than the former. 
For a Yule tree I can calculate the expectation
of the total area as 
\begin{itemize}
\item $\binom{n}{2}(2\E{\Upsilon}-2\E{\upsilon})$ with the first definition of $d_{T}(i,j)$,
\item $\binom{n}{2}(2\E{\Upsilon}-2\E{\upsilon}-1)$ with the second definition of $d_{T}(i,j)$.
\end{itemize}
Plugging in the values for $\E{\Upsilon}$ and $\E{\upsilon}$, Eqs. \eqref{eqEUpsilon} and \eqref{eqEupsilon}, I get,
\be
\E{D_{n}} = 2n(n+1)H_{n}-4n^{2} \sim 2n^{2}\ln n
\ee
and
\be
\E{D_{n}} = 2n(n+1)H_{n}-4n^{2}-\frac{n(n-1)}{2} \sim 2n^{2}\ln n
\ee
respectively, depending on the definition of $d_{T}(i,j)$. These results are 
the same as in the literature \citep[][in the case of the first definition]{AMirFRosLRot2013} 
and \citep[][in the case of the second definition]{WMul2011}.
\citet{AMirFRosLRot2013} claim that the expectation of the total area of a Yule tree calculated by \citet{WMul2011}
contains an error, however the discrepancy between the results of these
two studies comes from them using the different definitions of the distance between two tips.

\section{Models with punctuated evolution}
\label{secPEq}
Stochastic models for continuous trait evolution are commonly based on a stochastic 
differential equation (SDE) of the type,
\be\label{eqSDEdiff}
\ud X(t) = \mu(t,X(t))\ud t + \sigma_{a}\ud B_{t},
\ee
along the phylogenetic tree \citep[see e.g.][]{KBaretal,MButAKinOUCH,JFel85,THan1997,THanJPieSOrzSLOUCH,ALabJPieTHan2009}. 
At speciation times
this process divides into two processes evolving independently from that point.
Very often it is assumed that the natural logarithm of the trait evolves according to this
SDE \citep[for some motivation for this see e.g.][]{KBaretal,FBok2002,JHux32,MSav79}.

It is straightforward to include in this framework a mechanism for modelling 
cladogenetic evolution. I consider two possible mechanisms.
The first  one is that just after each speciation point in each daughter lineage
with probability $p$ a jump (mean $0$, variance $\sigma_{c}^{2} < \infty$) 
takes place. 
The second mechanism is 
that at each speciation point one adds to the phenotype process of a
randomly (with probability $0.5$) chosen daughter lineage a 
mean $0$, variance $\sigma_{c}^{2} < \infty$ jump. The other daughter lineage is 
then not affected by the jump.
This can be interpreted
as some change in the newborn species that drove the species apart.
At the present second--moment level of analysis this model is equivalent to the first one
with $p=0.5$. One cannot distinguish between
these two punctuated change mechanisms unless one actually observes the speciation and jump
patterns. A jump at each daughter lineage with probability $0.5$ and a jump in exactly
one randomly chosen daughter lineage will have the same effect on the
first and second sample moments. This is because they only depend on the expectation and
variance of the jump and number of jumps in the common part of the two lineages.
In the simulations shown in this work I assume normality of the jump
but all the results will hold for any mean $0$, finite variance jump.

I will study two currently standard evolutionary models the Brownian motion  
and Ornstein--Uhlenbeck process both expanded to include a punctuated equilibrium component. 
In line with previous work \citep{KBarSSag2012,SSagKBar2012} I do not condition
on a given phylogenetic tree but assume a branching process (here conditioned
Yule tree) for the phylogenetic tree. In such a case, a number of relevant model
properties \citep[see][]{KBarSSag2012,SSagKBar2012} can be conveniently described in terms of the variance of 
the trait of a randomly sampled tip species and the covariance (or correlation) between two randomly sampled
tip species. This correlation \citep[not to be confused with correlation between
traits which can result from developmental constraints][]{JChe1984} called the interspecies correlation coefficient
\citep{SSagKBar2012} is a consequence of phylogenetic inertia. 
Extant species will be correlated due to their shared ancestry. How strong that correlation
is will depend on how much time has passed since their
divergence (stochastic tree model) and mode of evolution (stochastic trait model).

The interspecies correlation is similar to what \citet{JCheMDowWLeu1985}
describe as the phylogenetic autocorrelation. \citet{JCheMDowWLeu1985}
repeated after \citet{RRie1978} and \citet{CWad1957} that the similarity
of a trait between species can be indicative of its functional role.
A highly central trait should be significantly correlated while 
a peripheral one should have more freedom to vary. This is in
line with thinking of all of the traits interacting with each other.
If a trait is central then all the other traits will have adapted
to working with it and any change in it would cause 
all the other traits to be away from their optimum
and hence potentially cause a significantly larger misadaptation
than change in a trait that is not so central.

Below I discuss the correlation function (with detailed derivations in Appendix A). 
It will turn out that in the case of punctuated evolution 
a relevant parameter is \mbox{
$\kappa=\frac{2p\sigma_{c}^{2}}{\sigma_{a}^{2}/\lambda+2p\sigma_{c}^{2}}$.}
Knowing whether the gradual or jump component dominates tells us whether the correlation is
due to shared branch length (gradual dominates) or shared number of speciation events (jumps dominate).
One can also recognize a similarity to measurement error theory. If one thinks of the jumps as
``errors'' added to the trajectory of the evolving diffusion process then $1-\kappa$ 
could be thought of as the reliability ratio or  measurement error correction factor \citep[see e.g.][]{JBuo,WFul1987,THanKBar2012}.

The work here is also another step in introducing L\'evy processes
to the field of phylogenetic comparative methods \citep[see also][]{KBar2012,MLanJSchMLia2013}.
The framework of L\'evy processes is very appealing as it naturally includes punctuated change,
i.e. jumps in the evolution of continuous traits.

I introduce the following notation, by $X(t)$ I will mean the phenotype process,
$X$ will denote the trait value of randomly sampled tip species, while
$X_{1}$ and $X_{2}$ will denote the trait values of a randomly sampled
pair of tip species.

\subsection{Unconstrained evolutionary model}
The Brownian motion model \citep{JFel85} can be described by the following SDE,
\be \ud X(t) = \sigma_{a} \ud B(t), ~~~ X(0) = X_{0}. \ee
On top of this at each speciation point,  
each daughter lineage 
has with probability $p$ a mean $0$, variance $\sigma_{c}^{2} < \infty$ jump added to it. 
As the jump is mean zero it does not change the expectation of the value of a tip species.
Below I consider the variance of a randomly sampled tip species and the covariance
and correlation between two randomly sampled tip species.

To derive the following theorem I rely on previous results \citep{KBarSSag2012,SSagKBar2012}.
\begin{theorem}
The interspecies correlation coefficient for a phenotype evolving as a Brownian motion with jumps 
on top of a conditioned Yule tree with speciation rate $\lambda=1$ is,
\be
\rho_{n} = 
\frac{\frac{2(n-H_{n})}{n-1}-\kappa}{H_{n}-\kappa}.
\label{eqCorYuleBM}
\ee
\end{theorem}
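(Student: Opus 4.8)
The plan is to obtain $\rho_n$ as the ratio of the covariance between two randomly sampled tips to the common variance of a single tip, computing each ingredient by conditioning on the random Yule tree together with the realized jump pattern. Because both the diffusion and the jumps are centred, the conditional mean of every tip equals $X_0$ irrespective of the tree, so the outer term $\cov{\E{X_1 \mid \mathrm{tree}}}{\E{X_2 \mid \mathrm{tree}}}$ vanishes and I am left only with expectations of conditional (co)variances. This reduces the whole problem to the expected shared branch time and the expected numbers of (shared) jumps, all of which are available from the Yule-tree quantities recalled in Section \ref{secYule}; by exchangeability of the sampling, $\Var{X_1}=\Var{X_2}=\Var{X}$, so $\rho_n = \cov{X_1}{X_2}/\Var{X}$.

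First I would treat a single tip $X$. Conditioning on the tree height $T$ and on the number $N$ of realized jumps along the sampled lineage, the diffusion contributes variance $\sigma_a^2 T$ and the independent mean-zero jumps contribute $\sigma_c^2 N$, whence $\Var{X} = \sigma_a^2 \E{T} + \sigma_c^2 \E{N}$. Here $N$ is the number of marked speciation events $\Upsilon^{\ast}$ on the lineage, so $\E{N} = \E{\Upsilon^{\ast}} = 2p(H_n - 1)$, and together with $\E{T} = H_n$ this gives $\Var{X} = \sigma_a^2 H_n + 2p\sigma_c^2(H_n - 1)$.

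Next I would handle the pair $(X_1,X_2)$. The two lineages share their history from the root down to their most recent common ancestor and evolve independently thereafter, so only the shared portion contributes to the covariance. By ultrametricity the shared diffusion time is $T-\tau$, with expectation
\be
\E{T} - \E{\tau} = H_n - \left(\frac{n+1}{n-1}H_n - \frac{2n}{n-1}\right) = \frac{2(n-H_n)}{n-1},
\ee
while the shared jumps are exactly those at the $\upsilon$ speciation events strictly preceding the common ancestor, each marked with probability $p$, so their expected number is $\E{\upsilon^{\ast}} = p\,\E{\upsilon} = 2p\left(\frac{2(n-H_n)}{n-1} - 1\right)$. The step needing the most care is this jump bookkeeping: a jump occurring in either daughter at the common ancestor's \emph{own} speciation event lands on only one of the two lineages and must be excluded, which is precisely why the count is governed by $\upsilon$ (excluding the common ancestor) rather than $\upsilon+1$. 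Combining, $\cov{X_1}{X_2} = \sigma_a^2 \frac{2(n-H_n)}{n-1} + 2p\sigma_c^2\left(\frac{2(n-H_n)}{n-1} - 1\right)$.

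Finally I would form $\rho_n = \cov{X_1}{X_2}/\Var{X}$ and rewrite it through $\kappa$. Setting $D = \sigma_a^2 + 2p\sigma_c^2$, so that $\sigma_a^2 = D(1-\kappa)$ and $2p\sigma_c^2 = D\kappa$ with $\lambda=1$, both numerator and denominator factor as $D$ times an affine combination: the denominator becomes $D(H_n - \kappa)$ and the numerator becomes $D\left(\frac{2(n-H_n)}{n-1} - \kappa\right)$. Cancelling $D$ yields \eqref{eqCorYuleBM}. The only genuine obstacles are the correct identification of the shared diffusion time as $T-\tau$ (so that the recalled expectations of $T$ and $\tau$ apply directly) and the jump-counting subtlety at the common ancestor; the remainder is the routine reparametrisation just described.
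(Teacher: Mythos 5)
Your proof is correct and follows essentially the same route as the paper's: decompose the variance and covariance into anagenetic and cladogenetic parts, $\Var{X} = \sigma_{a}^{2}\E{T} + p\sigma_{c}^{2}\E{\Upsilon}$ and $\cov{X_{1}}{X_{2}} = \sigma_{a}^{2}\E{T-\tau} + p\sigma_{c}^{2}\E{\upsilon}$, insert the recalled Yule expectations, and take the quotient. The additional detail you supply (the vanishing of the conditional-mean term, the exclusion of jumps at the MRCA's own speciation event, and the explicit $\kappa$ reparametrisation) is exactly what the paper leaves implicit.
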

\begin{proof}
Due to the jumps being independent of the evolving Brownian motion process
\be
\Var{X} = \sigma_{a}^{2}\E{T} + p\sigma^{2}_{c}\E{\Upsilon} = (\sigma_{a}^{2} + 2p\sigma^{2}_{c})H_{n}-2p\sigma_{c}^{2}
\label{eqVarYuleBM}
\ee
and the covariance between two randomly sampled tip species,
\be
\cov{X_{1}}{X_{2}} = \sigma_{a}^{2}\E{T-\tau} + p\sigma_{c}^{2}\E{\upsilon} =
\frac{2(n-H_{n})}{n-1}(\sigma_{a}^{2}+2p\sigma_{c}^{2})-2p\sigma_{c}^{2}.
\label{eqCovYuleBM}
\ee
Taking the quotient of these two values results in the desired formula for $\rho_{n}$.
\end{proof}
Comparing with the correlation coefficient for the Brownian motion process calculated by
\citet{SSagKBar2012} adding jumps causes both the numerator and denominator to be corrected by $\kappa$.
The following asymptotic behaviour can be directly seen,
\be
\begin{array}{rcl}
\var{X} & \sim & (\sigma_{a}^{2}+2p\sigma_{c}^{2})\ln n, \\
\cov{X_{1}}{X_{2}} & \sim & 2\sigma_{a}^{2}+2p\sigma_{c}^{2}+O(\ln n/n), \\
\rho_{n} & \sim & \frac{2-\kappa}{\ln n}. 
\end{array}
\ee 

Taking the derivative of the correlation in terms of $\kappa$ I find that it is negative and so:
\be
\rho_{n} \stackrel{\kappa \rightarrow 1}{\xrightarrow{\hspace*{1cm}}}  \frac{n-2(H_{n}-0.5)}{(n-1)(H_{n}-1)} \ge 0~\mathrm{for}~n\ge 2,
\ee
which is a monotonically decreasing convergence.

In the above I assumed that the speciation rate is $\lambda=1$. 
This restriction does not change the validity
of the results as changing $\lambda$ is equivalent to rescaling the branch lengths by its inverse. 
As mentioned because I have conditioned on $n$ this 
has no effect on the topology, and therefore does not effect $\Upsilon$ and \mbox{
$\upsilon$.}  Consequently
a Yule--Brownian--motion--jumps model for the extant species trait sample with parameters $(\sigma_{a}^{2}, \sigma_{c}^{2}, X_{0},\lambda)$
is equivalent to one with parameters $(\sigma_{a}^{2}/\lambda, \sigma_{c}^{2},  X_{0},1)$.

\begin{figure}
\centering
\includegraphics[width=0.32\textwidth]{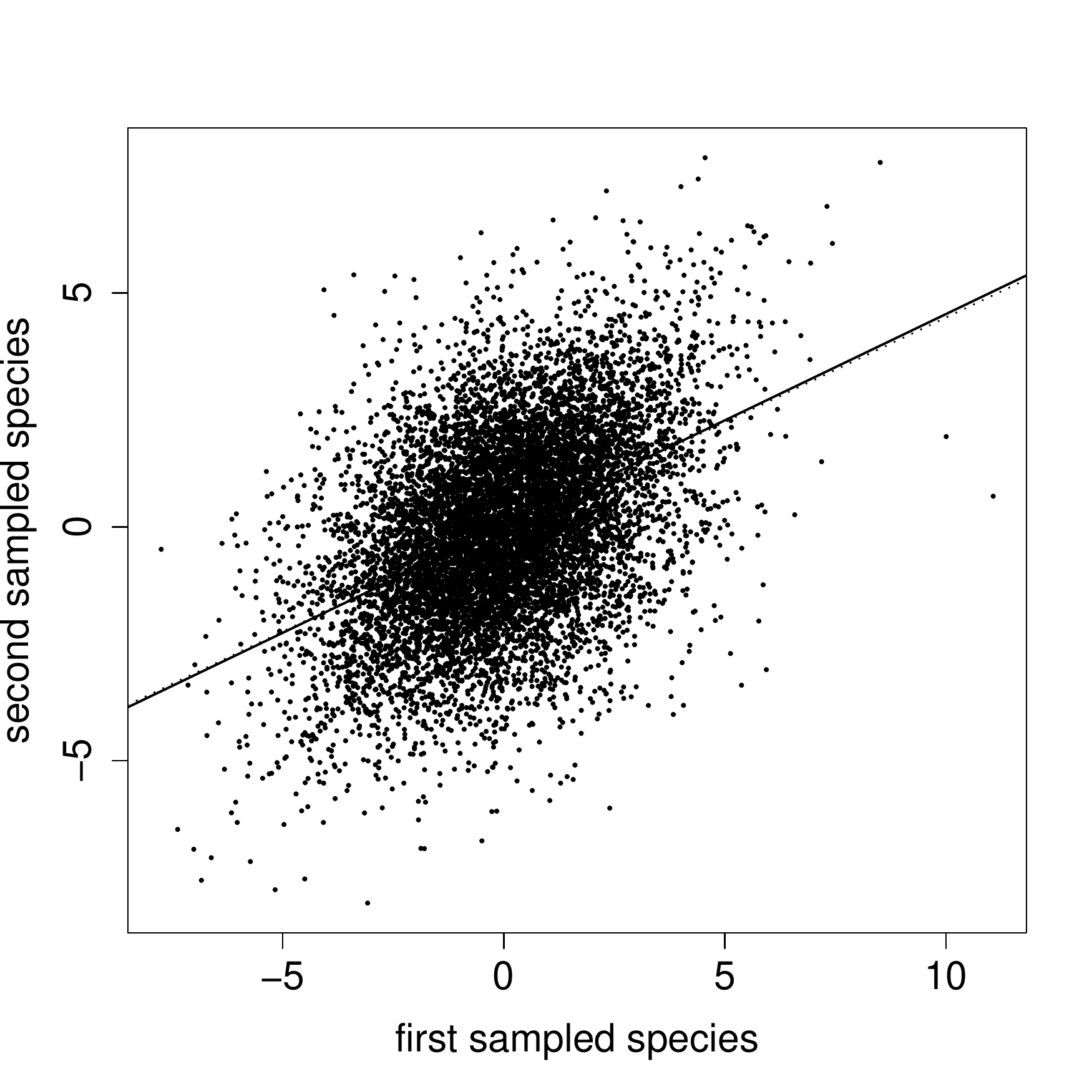}
\includegraphics[width=0.32\textwidth]{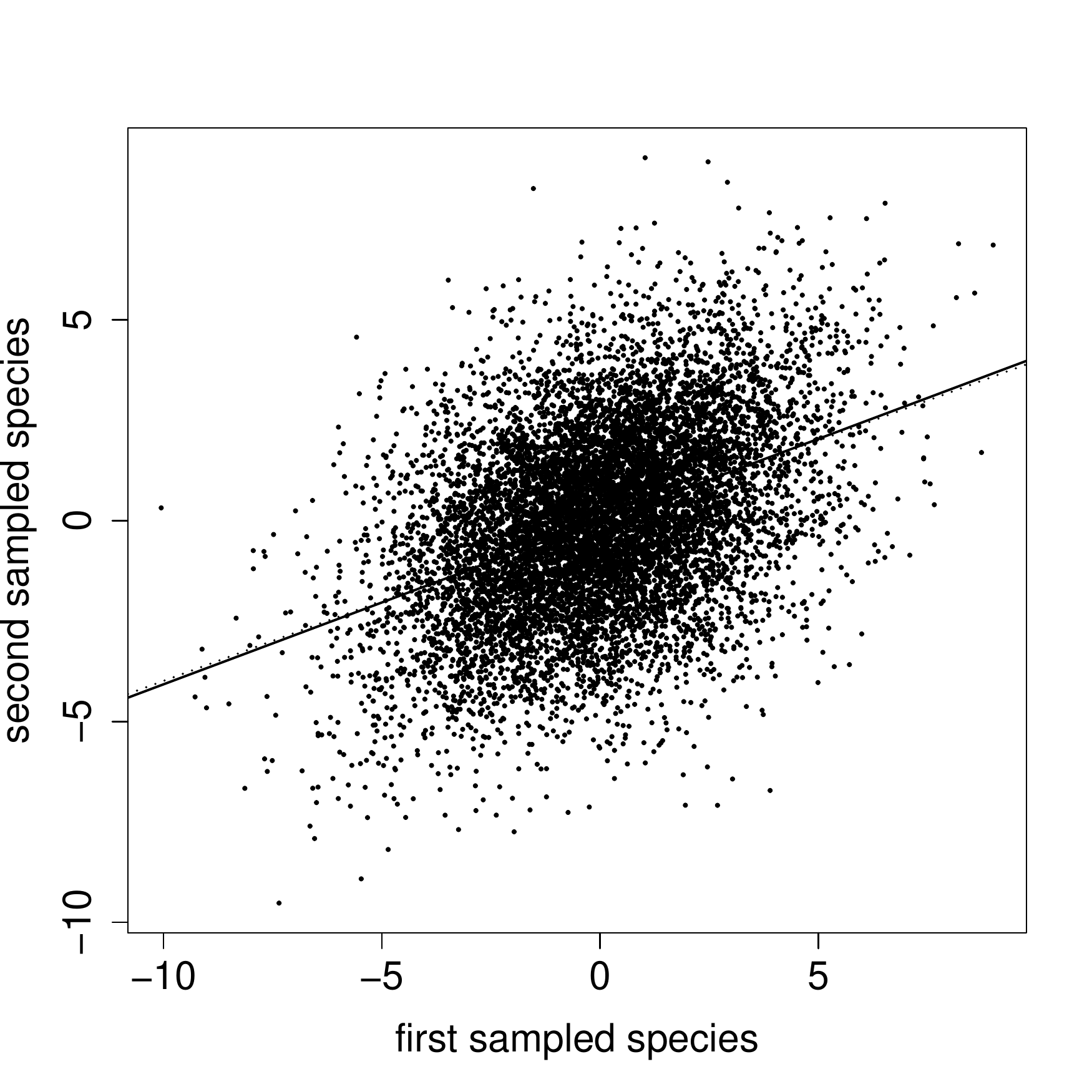}
\includegraphics[width=0.32\textwidth]{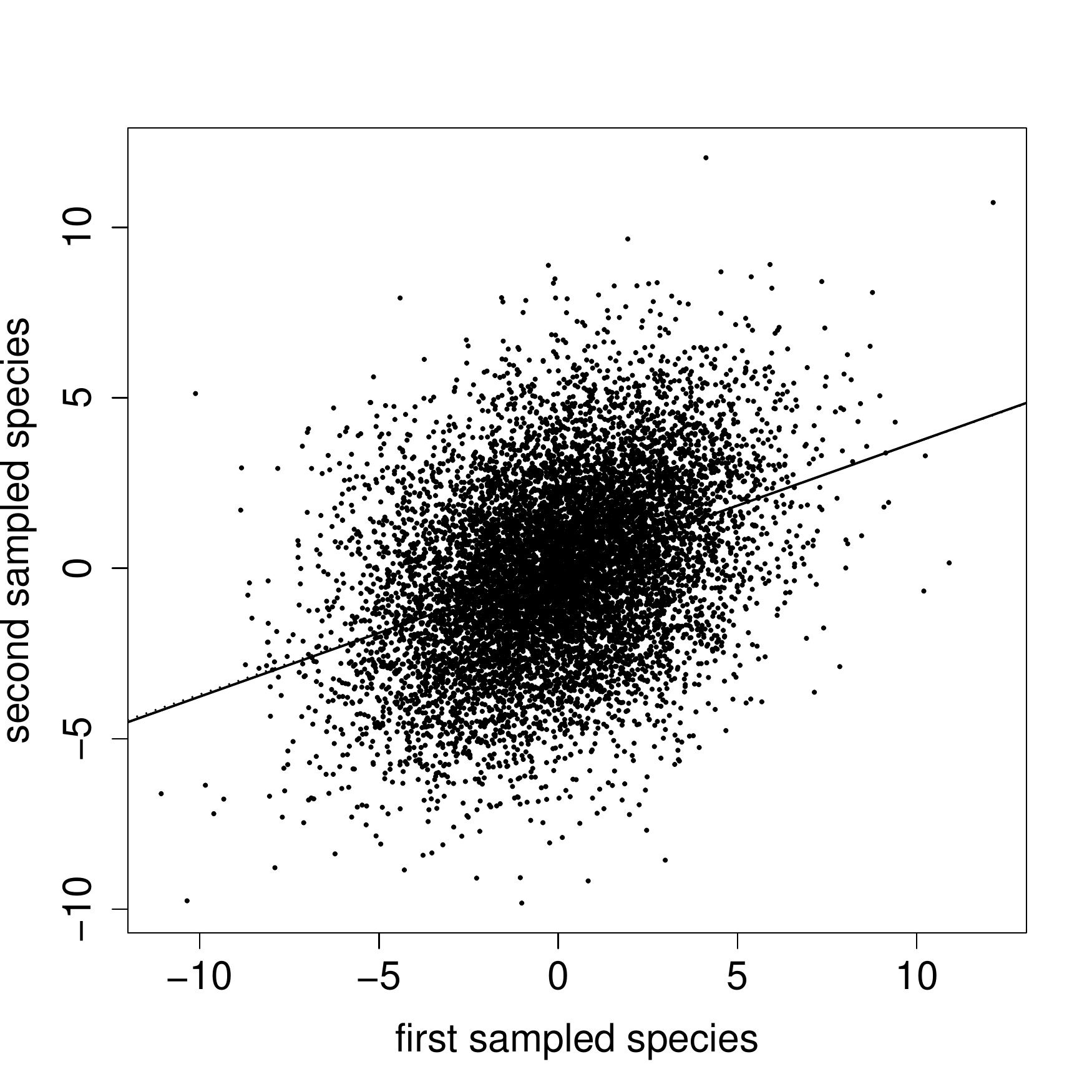} \\
\includegraphics[width=0.32\textwidth]{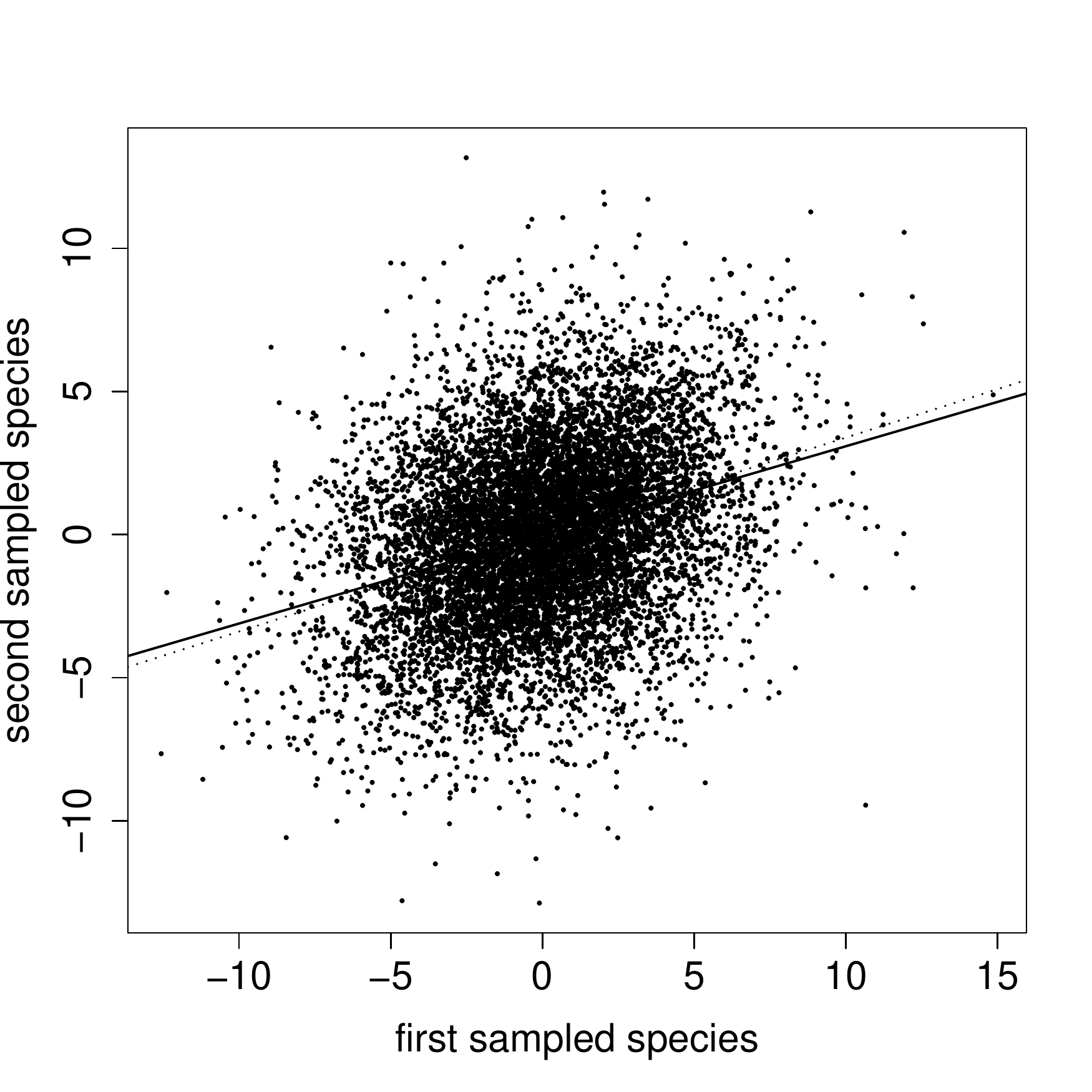} 
\includegraphics[width=0.32\textwidth]{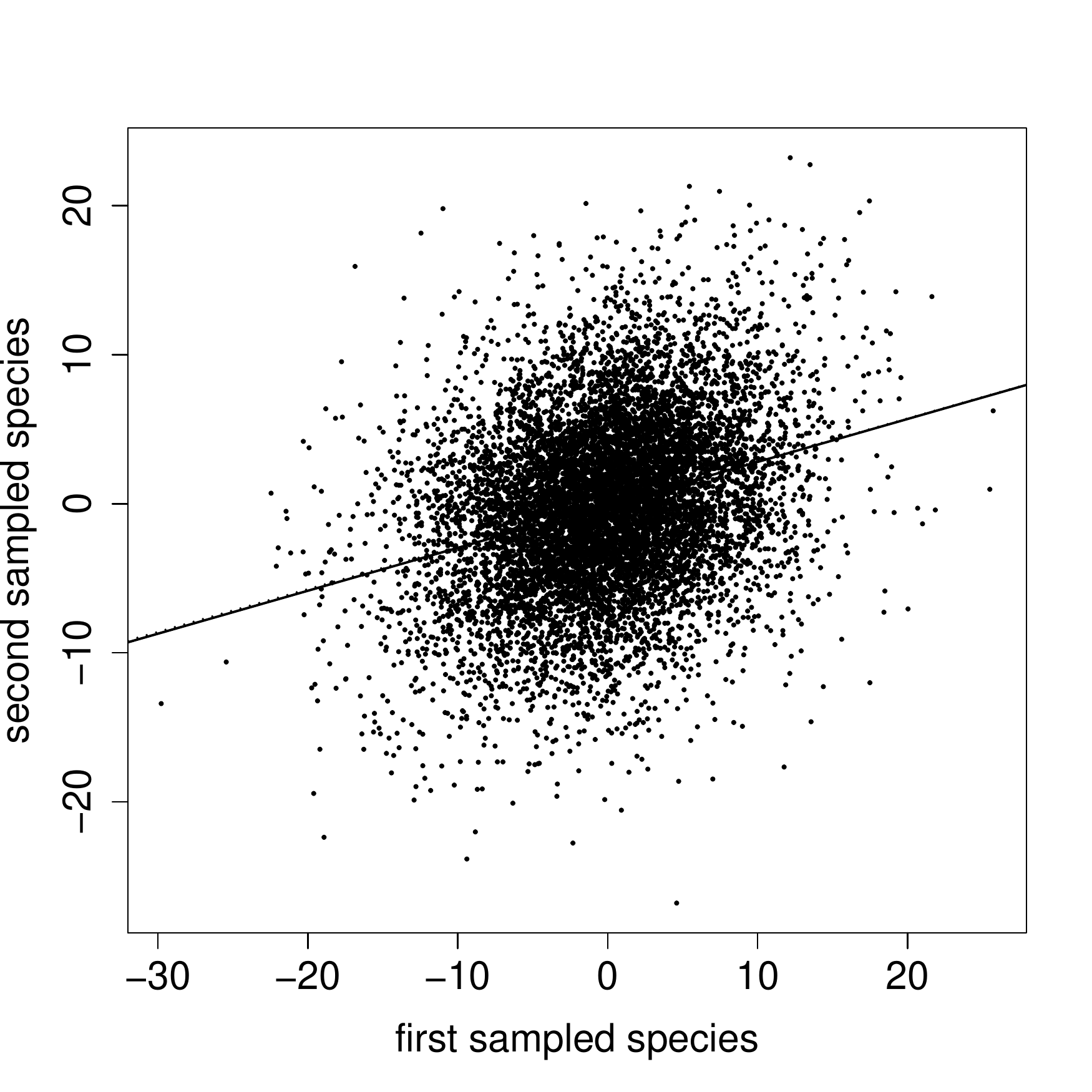}
\includegraphics[width=0.32\textwidth]{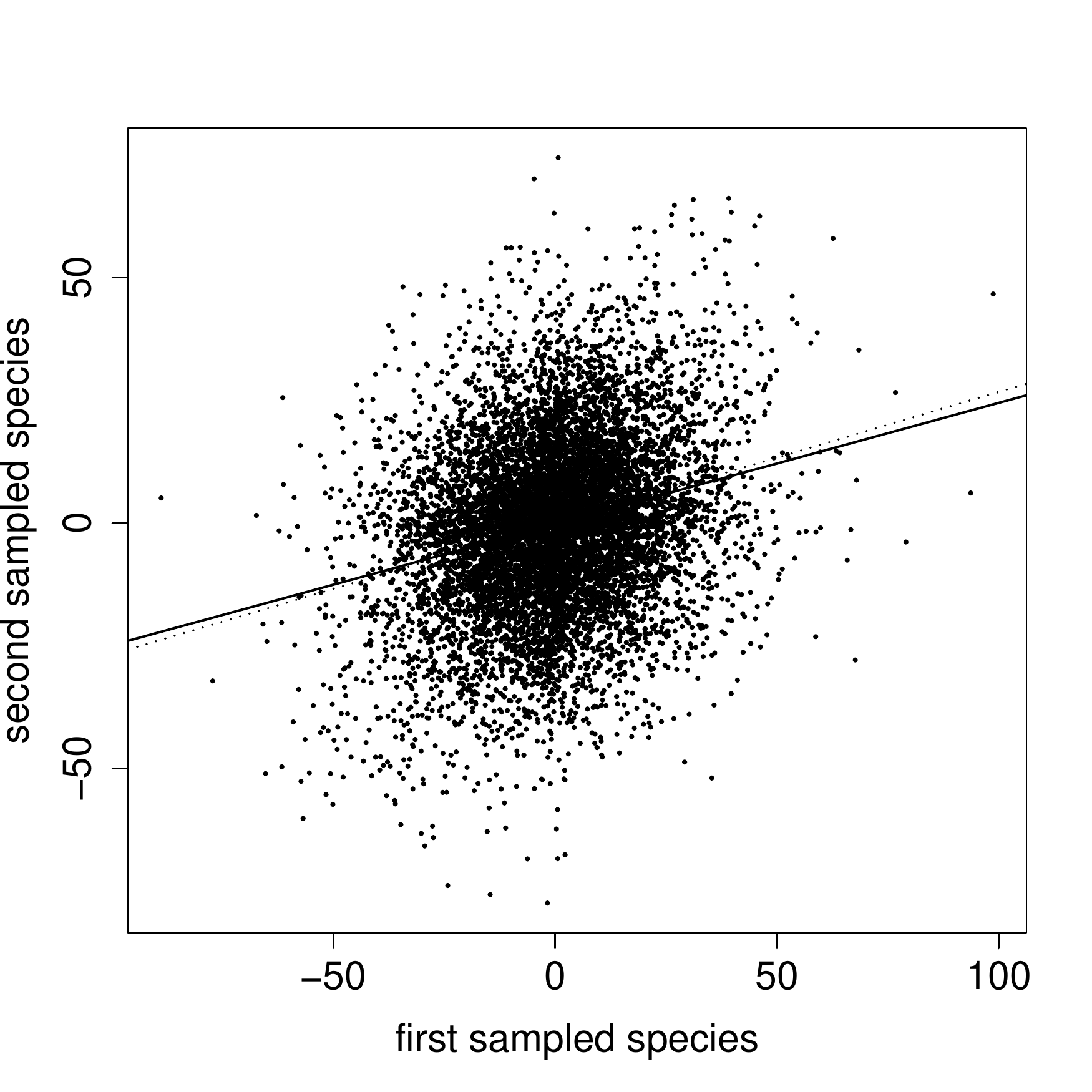}
\caption{
Regression lines fitted to simulated data (thick line) mostly indistinguishable from the true regression line
$y=\rho_{n}x$ (dotted line) with $\rho_{n}$ given by the exact formula (Eq. \ref{eqCorYuleBM}) for different
values of $\kappa$ in the Yule--Brownian--motion--jumps model. Top row from left to right :
$\kappa=0.0099, 0.3333, 0.5$, bottom row from left to right : $\kappa = 0.6667,0.9091,0.9901$.
In all cases the jump is normally distributed with mean $0$ and variance $\sigma_{c}^{2}$, $p=0.5, X_{0}=0$, $\sigma_{a}^{2}=1$
and $n=30$.}
\end{figure}

\begin{figure}
\begin{center}
\includegraphics[width=0.5\textwidth]{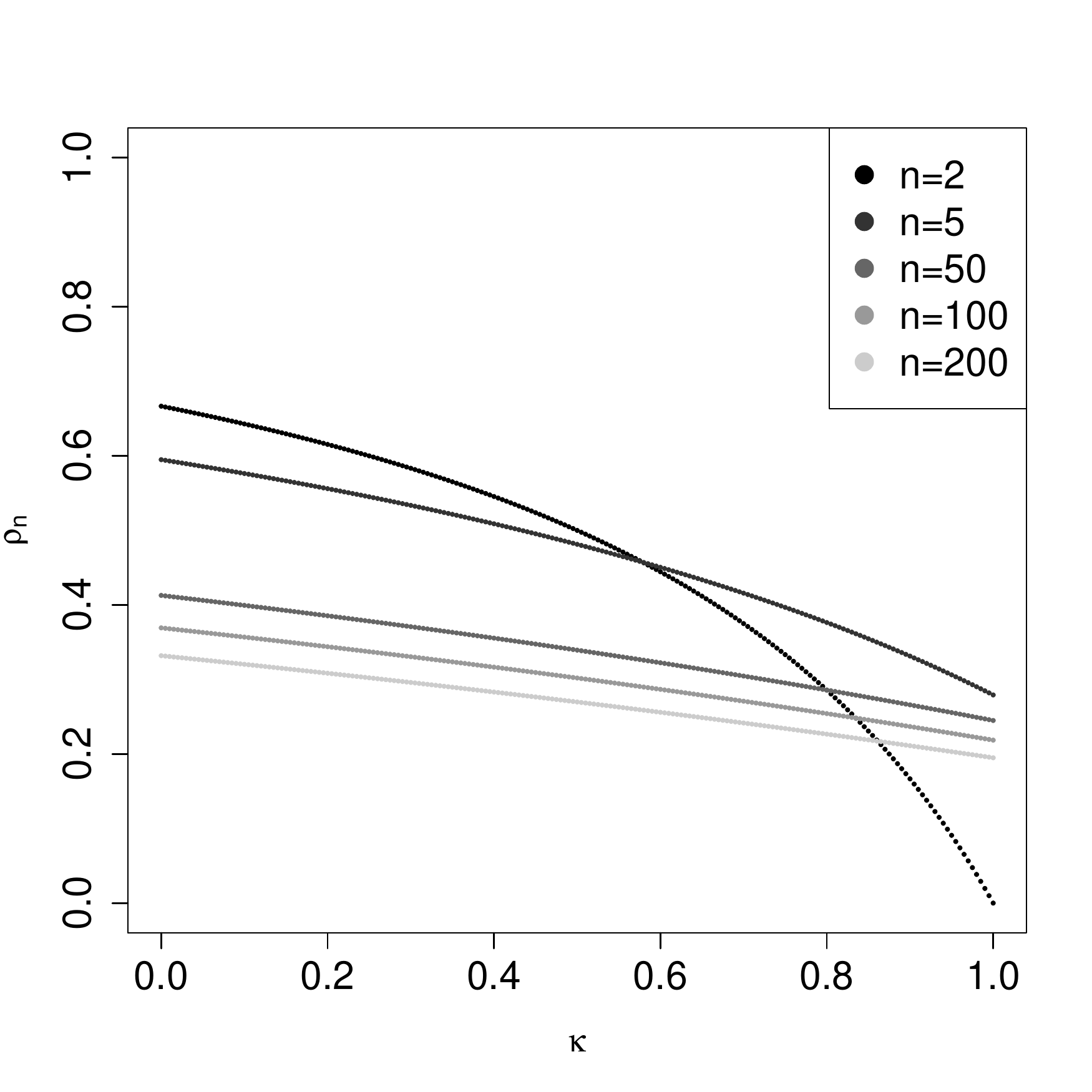}
\caption{
Interspecies correlation coefficient for the Yule--Brownian--motion--jumps model for different values of $n$.
}
\end{center}
\end{figure}

\subsection{Constrained model}
The basic stochastic process used to model adaptation in the phylogenetic comparative methods
field is the Ornstein--Uhlenbeck process \citep{MButAKinOUCH,JFel1988,THan1997,THanJPieSOrzSLOUCH,ALabJPieTHan2009},
\be
\ud X(t) = -\alpha(X(t)-\theta) \ud t + \sigma_{a}\ud B(t), ~~~ X(0)=X_{0}
\ee
and to it I introduce the jump component in the same fashion. 
At each speciation point a randomly chosen daughter lineage
has a mean $0$, variance \mbox{
$\sigma_{c}^{2} < \infty$} jump added to it. 
As discussed in the introduction this model is appealing as it allows
one to combine gradual and punctuated evolution in one framework.

However, in this model a jump could also affect the mean value and so a more careful 
treatment is necessary, see the proof in Appendix A. Again I can assume $\lambda=1$ as the following 
two parameter sets are equivalent,
$(\alpha,\sigma_{a}^{2}, \sigma_{c}^{2}, \theta, X_{0},\lambda)$ and
$(\alpha/\lambda, \sigma_{a}^{2}/\lambda, \sigma_{c}^{2}, \theta,  X_{0},1)$ 
for a Yule--Ornstein--Uhlenbeck--jump model.

To describe the Yule--Ornstein--Uhlenbeck--jump model I introduce the following 
convenient parameter,
$\delta=\vert X_{0}-\theta \vert/(\sqrt{\sigma_{a}^{2}/2\alpha})$
(distance of the starting value from the optimal one scaled by the stationary standard deviation of the Ornstein--Uhlenbeck process).
\begin{theorem}\label{thRhon}
The mean, variance, covariance, correlation values in a Yule--Ornstein--Uhlenbeck--jump model with $\lambda=1$ are,
\be
\begin{array}{rcl}
\E{X} & = & b_{n,\alpha}X_{0} + (1-b_{n,\alpha})\theta, \\
\Var{X} & = & \frac{\sigma_{a}^{2} + 2p\sigma_{c}^{2}}{2\alpha} \left((1-\kappa)V^{(n)}_{a}(\alpha,\delta) + \kappa V^{(n)}_{c}(\alpha)\right), \\
\cov{X_{1}}{X_{2}} & = & \frac{\sigma_{a}^{2} + 2p\sigma_{c}^{2}}{2\alpha} \left((1-\kappa)C^{(n)}_{a}(\alpha,\delta) + \kappa C^{(n)}_{c}(\alpha)\right), \\
\rho_{n} & = &\frac{(1-\kappa)C^{(n)}_{a}(\alpha,\delta)+\kappa C^{(n)}_{c}(\alpha)}{(1-\kappa)V^{(n)}_{a}(\alpha,\delta)+\kappa V^{(n)}_{c}(\alpha)},
\end{array}
\ee
where,
\bd
\begin{array}{lll}
C^{(n)}_{a}(\alpha,\delta) & = & 
\left\{
\begin{array}{ll}
\frac{2-(n+1)(2\alpha+1)b_{n,2\alpha}}{(n-1)(2\alpha-1)} - b_{n,2\alpha} 
 +\delta^{2}\left(b_{n,2\alpha} - b_{n,\alpha}^{2} \right), & 0<\alpha \neq 0.5, \\
\frac{2}{n-1}\left(H_{n}-1 \right) - \frac{2}{n+1} 
+\delta^{2}\left(\frac{1}{n+1} - b_{n,0.5}^{2} \right), & \alpha=0.5,
\end{array}
\right. \\
C^{(n)}_{c}(\alpha) & = & 
\left\{
\begin{array}{ll}
\frac{2-(2\alpha n -2\alpha +2)(2\alpha+1)b_{n,2\alpha}}{(n-1)(2\alpha-1)}, & 0<\alpha \neq 0.5, \\
\frac{2}{n-1}\left(H_{n}  - \frac{5n-1}{2(n+1)} \right), & \alpha=0.5,
\end{array}
\right.\\
V^{(n)}_{a}(\alpha,\delta) & = & 1-b_{n,2\alpha}+\delta^{2}(b_{n,2\alpha}-b_{n,\alpha}^{2}), \\
V^{(n)}_{c}(\alpha)& = &1-(1+2\alpha)b_{n,2\alpha}.
\end{array}
\ed
\end{theorem}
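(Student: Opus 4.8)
The plan is to solve the Ornstein--Uhlenbeck SDE explicitly along a fixed realisation of the tree, average the resulting first two moments over the conditioned Yule tree using the Laplace transforms \eqref{eqLapT}--\eqref{eqLaptau}, and reduce the jump contributions to two ``discounted'' versions of the counts $\Upsilon$ and $\upsilon$. Writing the value at a tip reached at tree height $T$ as
\be
X = \theta + (X_{0}-\theta)e^{-\alpha T} + \sigma_{a}\int_{0}^{T}e^{-\alpha(T-u)}\ud B(u) + \sum_{j}J_{j}e^{-\alpha(T-t_{j})},
\ee
where the sum runs over the speciation events $t_{j}$ on the lineage at which a mean $0$, variance $\sigma_{c}^{2}$ jump $J_{j}$ was added, I would first read off the mean. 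Since the diffusion integral and every $J_{j}$ are mean zero and independent of the tree, $\E{X}=\theta+(X_{0}-\theta)\E{e^{-\alpha T}}=\theta+(X_{0}-\theta)b_{n,\alpha}$, which is the stated expression.

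For the variance and covariance I would condition on the whole tree $\mathcal{T}$ (topology, branch lengths, jump positions) and use the law of total (co)variance. The conditional mean $\E{X\mid\mathcal{T}}=\theta+(X_{0}-\theta)e^{-\alpha T}$ depends on $\mathcal{T}$ only through $T$, so its variability contributes $(X_{0}-\theta)^{2}\big(b_{n,2\alpha}-b_{n,\alpha}^{2}\big)$ to both $\Var{X}$ and $\cov{X_{1}}{X_{2}}$; writing $(X_{0}-\theta)^{2}=\tfrac{\sigma_{a}^{2}}{2\alpha}\delta^{2}$ turns this into the $\delta^{2}$ terms of $V_{a}^{(n)}$ and $C_{a}^{(n)}$. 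The diffusion gives $\tfrac{\sigma_{a}^{2}}{2\alpha}(1-\E{e^{-2\alpha T}})=\tfrac{\sigma_{a}^{2}}{2\alpha}(1-b_{n,2\alpha})$ for the variance, and for the covariance only the shared path up to the most recent common ancestor matters, yielding $\tfrac{\sigma_{a}^{2}}{2\alpha}(\E{e^{-2\alpha\tau}}-b_{n,2\alpha})$ with $\E{e^{-2\alpha\tau}}$ taken from \eqref{eqLaptau} at $y=2\alpha$; these are exactly the non-$\delta$ parts of $V_{a}^{(n)}$ and $C_{a}^{(n)}$. The jumps, being mean zero and independent, contribute $p\sigma_{c}^{2}\,\E{\sum_{j}e^{-2\alpha(T-t_{j})}}$, the sum running over lineage events for $\Var{X}$ and over shared-path events for $\cov{X_{1}}{X_{2}}$; here I would use that $e^{-2\alpha\tau}e^{-2\alpha(t_{\mathrm{MRCA}}-t_{j})}=e^{-2\alpha(T-t_{j})}$, so a shared jump decays by its full age to the present.

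The two remaining discounted-count expectations are where the real work lies. Representing the tree height as $T=\sum_{m=1}^{n}\sigma_{m}$ with independent $\sigma_{m}\sim\mathrm{Exp}(m)$ (which reproduces $\E{e^{-yT}}=b_{n,y}$), the $i$-th speciation event carries backward age $T-t_{i}=\sum_{m=i+1}^{n}\sigma_{m}$, and since in a Yule tree the inter-event times are independent of the topology I can factor the expectation. Using the probability $2/(i+1)$ that event $i$ lies on a random lineage (and the same probability, truncated at the rank $k$ of the common ancestor and weighted by $\pi_{k,n}$ from \eqref{eqpikn}, for the lineage to the coalescent, a decomposition consistent with $\sum_{i=1}^{k-1}2/(i+1)=2(H_{k}-1)=\E{\upsilon\mid\mathrm{MRCA}=k}$), together with $\E{e^{-2\alpha\sum_{m=i+1}^{n}\sigma_{m}}}=b_{n,2\alpha}/b_{i,2\alpha}$, I would obtain
\be
\E{\sum_{j\in\mathrm{lineage}}e^{-2\alpha(T-t_{j})}}=\sum_{i=1}^{n-1}\frac{2}{i+1}\frac{b_{n,2\alpha}}{b_{i,2\alpha}},\qquad \E{\sum_{j\in\mathrm{shared}}e^{-2\alpha(T-t_{j})}}=\sum_{k=1}^{n-1}\pi_{k,n}\sum_{i=1}^{k-1}\frac{2}{i+1}\frac{b_{n,2\alpha}}{b_{i,2\alpha}}.
\ee

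The main obstacle is evaluating these sums in closed form. I expect the single sum to telescope to $\tfrac{1}{\alpha}\big(1-(1+2\alpha)b_{n,2\alpha}\big)=\tfrac1\alpha V_{c}^{(n)}(\alpha)$ (which I have checked directly for $n=2$), and the double sum, after interchanging the order of summation and inserting \eqref{eqpikn}, to $\tfrac1\alpha C_{c}^{(n)}(\alpha)$, both provable by induction on $n$. Multiplying by $p\sigma_{c}^{2}$, collecting the three contributions, factoring out $\tfrac{\sigma_{a}^{2}+2p\sigma_{c}^{2}}{2\alpha}$ and recognising $1-\kappa=\sigma_{a}^{2}/(\sigma_{a}^{2}+2p\sigma_{c}^{2})$ gives the stated $\Var{X}$ and $\cov{X_{1}}{X_{2}}$, whose ratio is $\rho_{n}$. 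Finally, the $\alpha=1/2$ rows arise as the removable-singularity limits $2\alpha\to1$ of the $(2\alpha-1)$ denominators, exactly as for $\E{e^{-\tau}}$ below \eqref{eqLaptau}.
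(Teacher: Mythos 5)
Your proposal is correct and takes essentially the same route as the paper's Appendix~A proof: the same law-of-total-variance decomposition into an initial-displacement ($\delta^{2}$) term, an anagenetic term expressed through the Laplace transforms of $T$ and $\tau$, and a cladogenetic term reduced to the discounted sums over speciation events with weights $2/(i+1)$ and $\pi_{k,n}$. The closed forms you conjecture for those sums are precisely what the paper obtains from its gamma-function identity, Eq.~\eqref{eqGsum} (itself justified there by induction), so your deferred telescoping/induction step is sound and yields the stated $V^{(n)}_{c}(\alpha)$ and $C^{(n)}_{c}(\alpha)$, including the removable singularity at $\alpha=0.5$.
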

The exact final formula in Theorem \ref{thRhon} depends on whether $\alpha=0.5$ or \mbox{
$\alpha\neq 0.5$} \citep[see][for a discussion on this]{KBarSSag2012}.
$\var{X}$ and $\cov{X_{1}}{X_{2}}$ are made up of two distinct components: one from the Ornstein--Uhlenbeck
anagenetic evolution and the other from the cladogenetic ``jump'' evolution. 

Using that as $n\rightarrow \infty$, $b_{n,y}\sim \Gamma(1+y)n^{-y}$
due to the behaviour of the beta function $B(n,\alpha) \sim \Gamma(\alpha)n^{-\alpha}$
I obtain the following asymptotic behaviour for the variance,
\be
\begin{array}{rcl}
\var{X} & = & \frac{\sigma_{a}^{2}+2p\sigma_{c}^{2}}{2\alpha} + O(n^{-2\alpha}).
\end{array}
\ee
As $\alpha \rightarrow 0$ by expanding the $b_{n,y}$ symbol I arrive at the limit,
\bd
\begin{array}{rcl}
\frac{1-b_{n,2\alpha}}{2\alpha} & \stackrel{\alpha \rightarrow 0}{\xrightarrow{\hspace*{1cm}}} & H_{n} 
\end{array}
\ed
and by this the variance converges on that of a Yule--Brownian--motion--jumps model as $\alpha\rightarrow 0$.
Additionally using 
the de L'H\^ospital rule 
I obtain that the covariance converges on that of a Yule--Brownian--motion--jumps model
as $\alpha\rightarrow 0$ and as $n\rightarrow \infty$ 
the covariance behaves as,
\bd
\cov{X_{1}}{X_{2}} \sim
\frac{\sigma_{a}^{2}+2p\sigma_{c}^{2}}{2\alpha} \cdot
\left\{
\begin{array}{ll}
\left((1-\kappa)C_{a}(\alpha,\delta)+ \kappa C_{c}(\alpha) \right)n^{-2\alpha},  & 0< \alpha<0.5,\\
2 n^{-1}\ln n, & \alpha=0.5,\\
\frac{2}{2\alpha-1} n^{-1}, & \alpha>0.5,
\end{array} 
\right.
\ed
where
\bd
\begin{array}{rcl}
C_{a}(\alpha,\delta) & = & \frac{4\alpha}{1-2\alpha}\Gamma(1+2\alpha)+\delta^{2}(\Gamma(1+2\alpha)-\Gamma^{2}(1+\alpha)),\\
C_{c}(\alpha) & = & \frac{2\alpha\Gamma(2\alpha+2)}{1-2\alpha}.
\end{array}
\ed

Asymptotically as $n\rightarrow \infty$ the correlation coefficient behaves (depending \mbox{
on $\alpha$)} as,
\bd
\rho_{n} \sim
\left\{
\begin{array}{ll}
\left((1-\kappa)C_{a}(\alpha,\delta)+ \kappa C_{c}(\alpha) \right)n^{-2\alpha}, & 0< \alpha<0.5,\\
2n^{-1}\ln n, & \alpha=0.5,\\
\frac{2}{2\alpha-1} n^{-1}, & \alpha>0.5.
\end{array} 
\right.
\ed

Depending on whether $\alpha<0.5\lambda$, $\alpha=0.5\lambda$ or $\alpha>0.5\lambda$
(remember the model equivalency with $\lambda \neq 1$)
there are different asymptotic regimes. This has been also noticed by \citet{RAdaPMil20111,RAdaPMil20112} 
and \citet{,KBarSSag2012}.
An intuitive explanation why for $\alpha<0.5$ a completely different behaviour occurs
can be that in this case the branching rate is relatively high (with respect to $\alpha$) 
and local correlations will dominate over the ergodic properties
of the Ornstein--Uhlenbeck process \citep{RAdaPMil20111,RAdaPMil20112}. However, why this threshold lies at exactly
$\alpha=0.5\lambda$ remains unclear. 

As $\kappa \rightarrow 1$ the correlation coefficient converges to $C^{(n)}_{c}(\alpha)/V^{(n)}_{c}(\alpha)\ge 0$.
With fixed $\alpha$ and $n$ one can immediately see that this has to be
a monotonic, either increasing or decreasing convergence. 
Because a jump component adds independent of the trait value noise to the system
one can expect it to be a decreasing convergence, and plotting
the correlation for different values of the remaining parameters confirms this,
\mbox{(Fig. \ref{figOUcorrK}).} However, a full mathematical proof is still lacking due
to the delicate interactions of the different components of $\rho_{n}(\kappa)$. 
The conjecture stated below gives us the equivalent condition for the 
interspecies correlation coefficient $\rho_{n}(\kappa)$ to
decrease monotonically for \mbox{
$\kappa \in (0,1)$.} It is enough to
consider $\delta=0$, 
as for all $n\ge 2$ and $\alpha, \delta \ge 0$ I have, $V^{(n)}_{c}(\alpha) \ge C^{(n)}_{c}(\alpha)$.
\begin{conjecture}\label{conjRhon}
For all $\alpha \ge 0$, $n\ge 2$ 
\be
V^{(n)}_{c}(\alpha)C^{(n)}_{a}(\alpha,0) \ge V^{(n)}_{a}(\alpha,0) C^{(n)}_{c}(\alpha).
\ee
\end{conjecture}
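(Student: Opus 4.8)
The plan is to reduce the stated inequality to the convexity of a single explicit function of $\alpha$, and then exploit the fact that this function vanishes at two points. Throughout I write $\beta=2\alpha$ and $b=b_{n,\beta}$ for brevity, so that $\beta\ge 0$ and, by the product formula for $b_{n,y}$, $0<b\le 1$.

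First I would carry out the algebraic reduction. Setting $\delta=0$ and placing both $C^{(n)}_{a}(\alpha,0)$ and $C^{(n)}_{c}(\alpha)$ over their common denominator $(n-1)(\beta-1)$, a direct expansion gives
\be
C^{(n)}_{a}(\alpha,0)=\frac{2\left(1-(n\beta+1)b\right)}{(n-1)(\beta-1)},\qquad
C^{(n)}_{c}(\alpha)=\frac{2-\left(\beta^{2}(n-1)+\beta(n+1)+2\right)b}{(n-1)(\beta-1)}.
\ee
Substituting these together with $V^{(n)}_{a}(\alpha,0)=1-b$ and $V^{(n)}_{c}(\alpha)=1-(1+\beta)b$ into the difference $V^{(n)}_{c}C^{(n)}_{a}-V^{(n)}_{a}C^{(n)}_{c}$ and collecting the coefficients of $b$ and $b^{2}$ (the only genuinely mechanical part), the numerator collapses to $\beta b\,M(\beta)$, so that
\be
V^{(n)}_{c}(\alpha)C^{(n)}_{a}(\alpha,0)-V^{(n)}_{a}(\alpha,0)C^{(n)}_{c}(\alpha)=\frac{\beta b}{(n-1)(\beta-1)}\,M(\beta),\qquad
M(\beta)=(n-1)\beta-(n+1)+(n+1)(\beta+1)b.
\ee
Since $\beta\ge 0$, $b>0$ and $n-1>0$, the conjecture is equivalent to the claim that $M(\beta)$ has the same sign as $\beta-1$ for every $\beta>0$.

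Next I would rewrite $M$ in a form that exposes its analytic structure. Using the product formula for $b_{n,\beta}$ the factor $(\beta+1)$ cancels the $i=1$ term, giving $(n+1)(\beta+1)b=(n+1)!/\prod_{i=2}^{n}(\beta+i)$, so that
\be
M(\beta)=(n-1)\beta-(n+1)+\frac{(n+1)!}{\prod_{i=2}^{n}(\beta+i)}.
\ee
A short check gives $M(0)=0$ and $M(1)=0$. The key observation is now that $M$ is \emph{convex} on $(0,\infty)$: the map $\beta\mapsto 1/\prod_{i=2}^{n}(\beta+i)$ has logarithm $-\sum_{i=2}^{n}\log(\beta+i)$, whose second derivative $\sum_{i=2}^{n}(\beta+i)^{-2}$ is positive, so it is log-convex and hence convex, and $M$ is the sum of this positive multiple of a convex function and an affine term. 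For a convex $M$ with $M(0)=M(1)=0$ the chord inequality gives $M(\beta)\le (1-\beta)M(0)+\beta M(1)=0$ on $(0,1)$, while writing $1=\tfrac{\beta-1}{\beta}\cdot 0+\tfrac1\beta\cdot\beta$ and applying convexity yields $0=M(1)\le \tfrac1\beta M(\beta)$, i.e. $M(\beta)\ge 0$ for $\beta>1$. Thus $M(\beta)$ indeed has the sign of $\beta-1$, which is exactly what the reduction requires.

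The genuine obstacle here is conceptual rather than computational. The naive route---clearing denominators to the degree-$n$ polynomial $Q(\beta)=[(n-1)\beta-(n+1)]\prod_{i=2}^{n}(\beta+i)+(n+1)!$, noting $Q(0)=Q(1)=0$, and then trying to show the cofactor $Q(\beta)/(\beta(\beta-1))$ has nonnegative coefficients---leads to an awkward inequality on the partial sums of the coefficients of $\prod_{i=2}^{n}(\beta+i)$ that I do not see how to close cleanly for general $n$; recognising instead that $M$ itself is convex bypasses this entirely. Finally I would dispose of the two degenerate parameter values not covered directly: at $\alpha=0$ ($\beta=0$) the difference vanishes because of the prefactor $\beta$, and at $\alpha=0.5$ ($\beta=1$) one either substitutes the dedicated $\alpha=0.5$ formulae or passes to the limit, where the removable singularity together with $M(1)=0$ and $M'(1)\ge 0$ (forced by $M$ being convex and changing sign at $1$) again yields a nonnegative value. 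This completes the argument.
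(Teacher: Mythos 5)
You cannot be compared against the paper's proof here, because the paper has none: this statement is left as Conjecture~\ref{conjRhon}, with the author explicitly writing that ``a full mathematical proof is still lacking'' and supporting the claim only by plots (Fig.~\ref{figOUcorrK}). Your proposal, as far as I can verify, is a correct and complete proof, so it settles a point the paper leaves open. I checked the two places where it could fail. The algebraic reduction is exact: with $\beta=2\alpha$, $b=b_{n,\beta}$, one has $(n+1)(\beta+1)+(n-1)(\beta-1)=2(n\beta+1)$ and $(\beta n-\beta+2)(\beta+1)=\beta^{2}(n-1)+\beta(n+1)+2$, and expanding $V^{(n)}_{c}C^{(n)}_{a}-V^{(n)}_{a}C^{(n)}_{c}$ the coefficient of $b$ is $\beta\left[(n-1)\beta-(n+1)\right]$ and that of $b^{2}$ is $\beta(n+1)(\beta+1)$, confirming
\bd
V^{(n)}_{c}(\alpha)C^{(n)}_{a}(\alpha,0)-V^{(n)}_{a}(\alpha,0)C^{(n)}_{c}(\alpha)=\frac{\beta b}{(n-1)(\beta-1)}\,M(\beta),
\qquad M(\beta)=(n-1)\beta-(n+1)+(n+1)(\beta+1)b.
\ed
The sign analysis is also sound: $(n+1)(\beta+1)b_{n,\beta}=(n+1)!\big/\prod_{i=2}^{n}(\beta+i)$ is a positive multiple of a log-convex, hence convex, function, so $M$ is convex; with $M(0)=M(1)=0$ the chord inequalities force $M\le 0$ on $(0,1)$ and $M\ge 0$ on $(1,\infty)$, i.e. $M$ has the sign of $\beta-1$, exactly what the prefactor requires. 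The boundary cases are handled correctly too: at $\alpha=0$ all four quantities vanish (since $b_{n,0}=1$), and at $\alpha=0.5$ the paper's dedicated formulae are precisely the $\beta\to 1$ limits of the generic ones (they come from the $z=y$ case of Eq.~\eqref{eqGsum}, which the paper itself identifies as the limit of the $z\neq y$ case), so the difference at $\beta=1$ is a limit of nonnegative quantities; that continuity argument is even slightly cleaner than invoking $M'(1)\ge 0$.

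Two remarks on what your route buys. The convexity observation is the genuine idea --- the ``naive'' polynomial-cofactor route you describe abandoning is indeed where one gets stuck, and the paper's own hesitation (``delicate interactions of the different components'') suggests the author saw only that route. Moreover, since a direct computation shows that $\partial\rho_{n}/\partial\kappa$ has numerator proportional to $-(V^{(n)}_{c}C^{(n)}_{a}-V^{(n)}_{a}C^{(n)}_{c})$, your result, combined with the paper's stated reduction to $\delta=0$ via $V^{(n)}_{c}(\alpha)\ge C^{(n)}_{c}(\alpha)$, upgrades the paper's conditional claim that the interspecies correlation decreases monotonically in $\kappa$ from a conjecture-dependent statement to a theorem.
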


\begin{figure}
\centering
\includegraphics[width=0.32\textwidth]{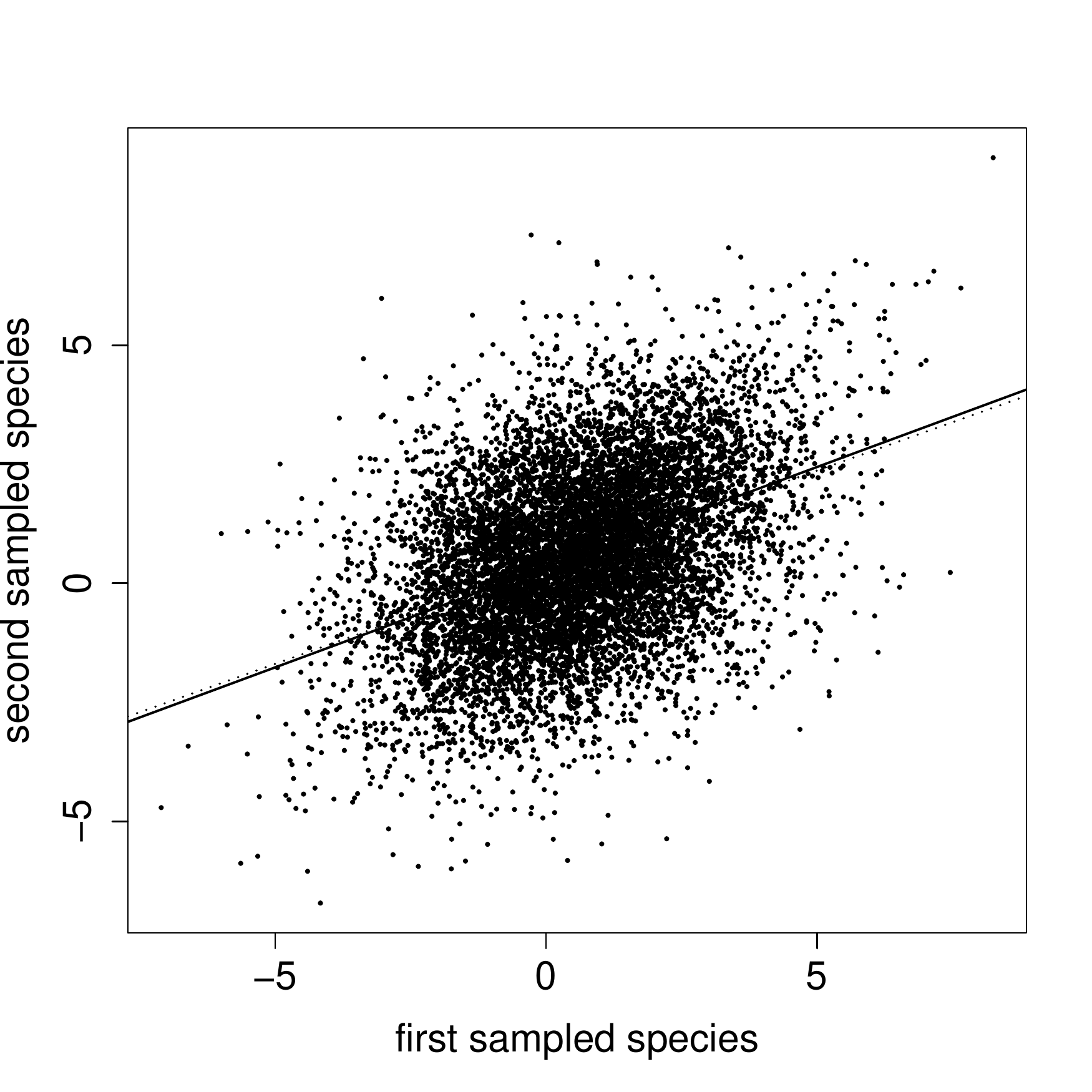}
\includegraphics[width=0.32\textwidth]{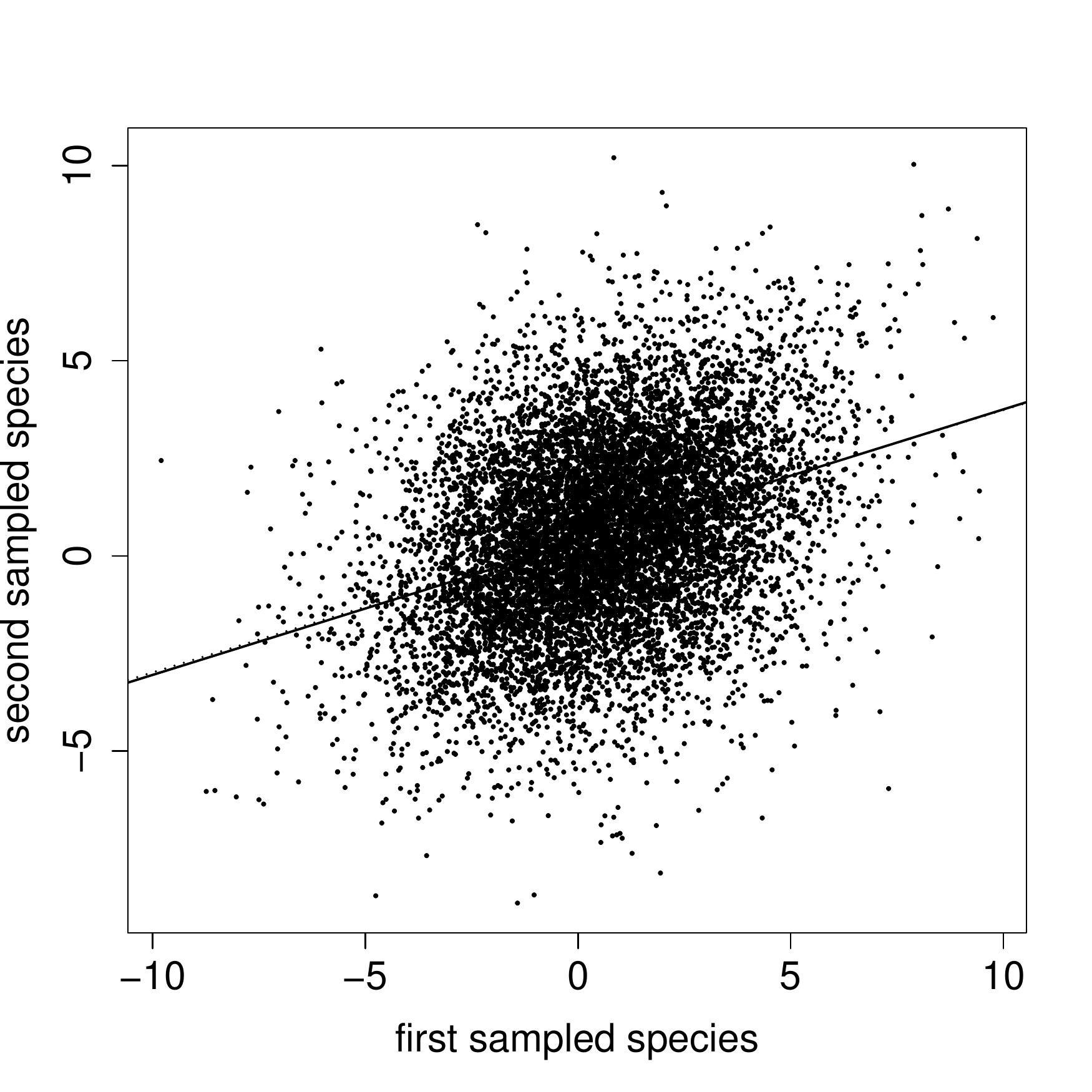}
\includegraphics[width=0.32\textwidth]{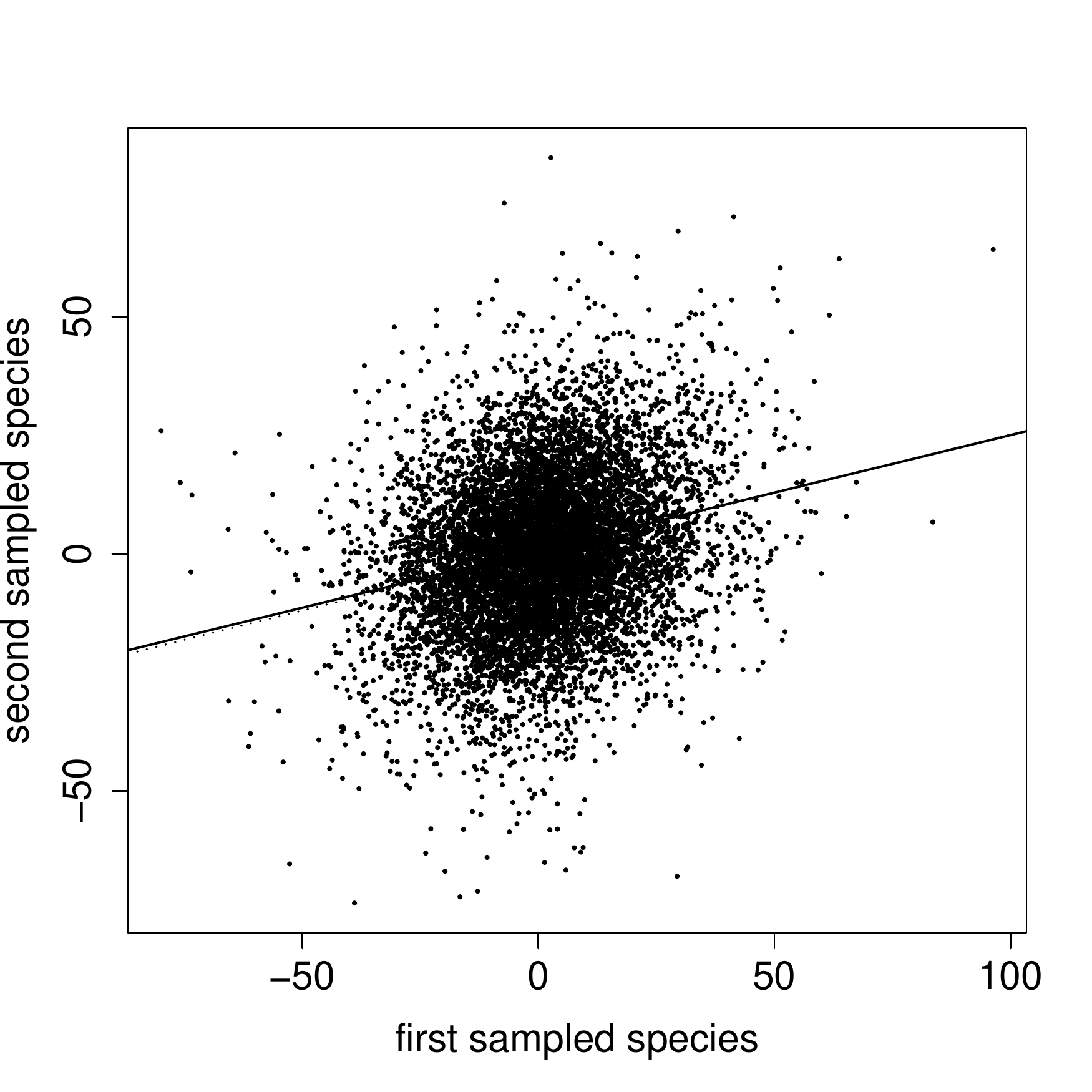} \\
\includegraphics[width=0.32\textwidth]{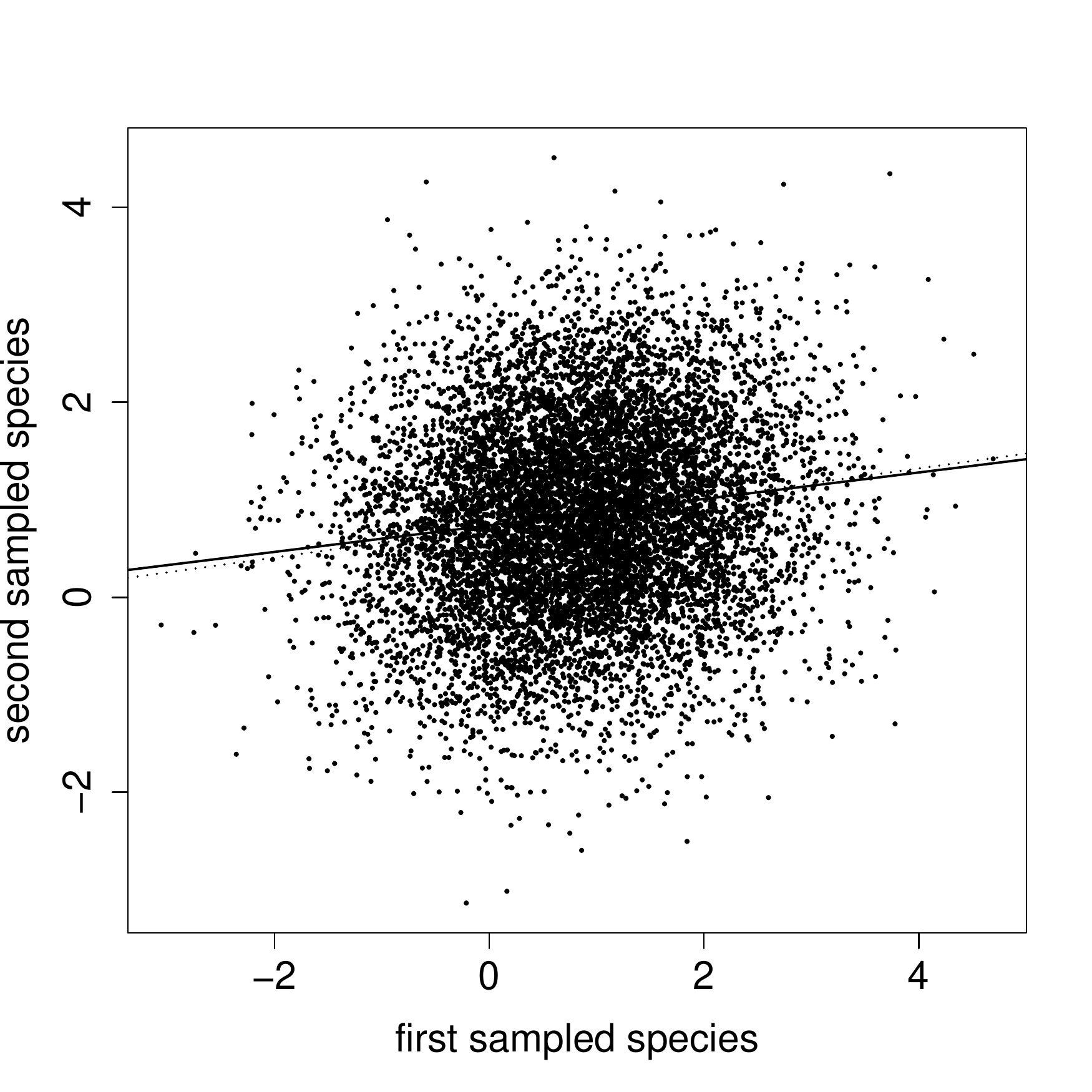}
\includegraphics[width=0.32\textwidth]{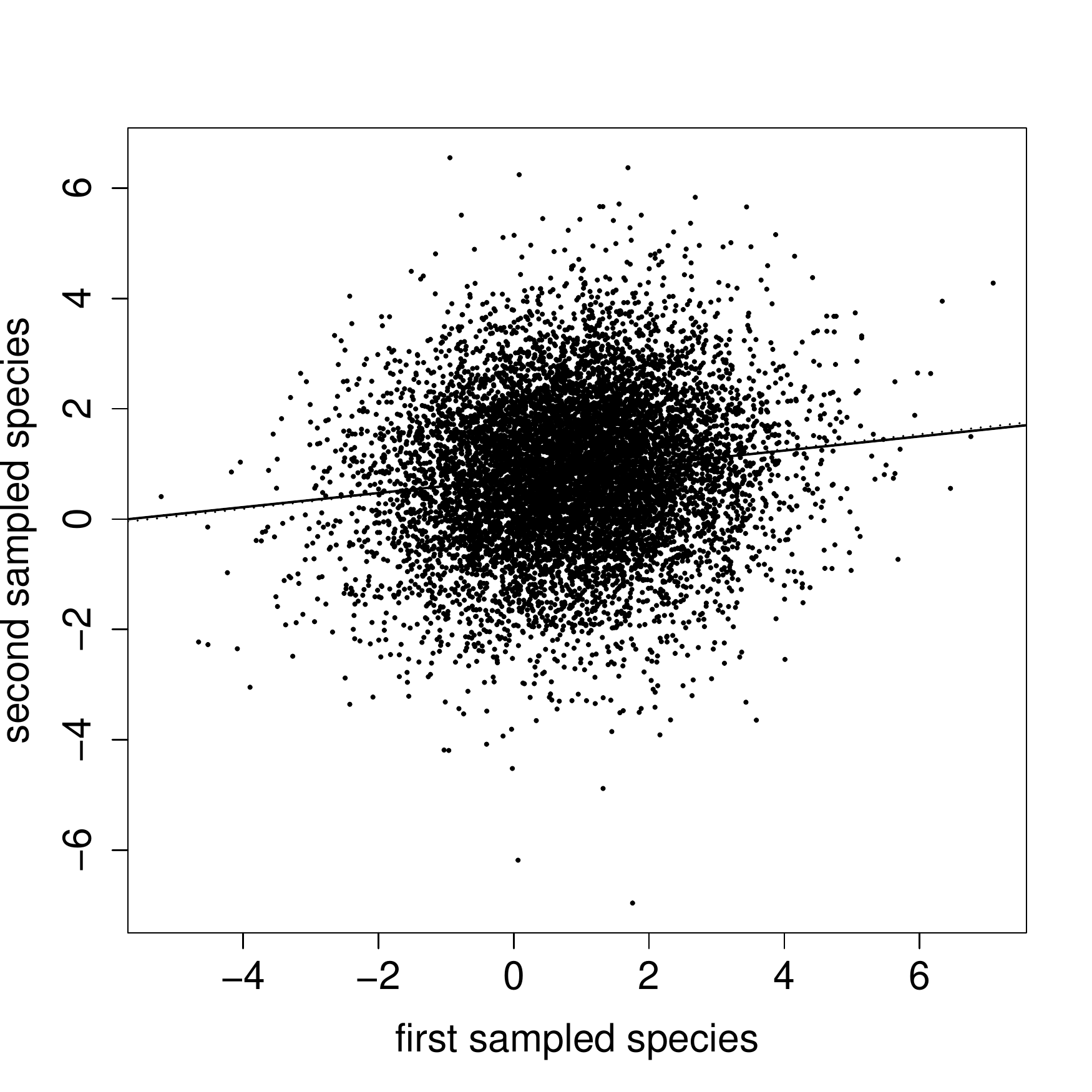}
\includegraphics[width=0.32\textwidth]{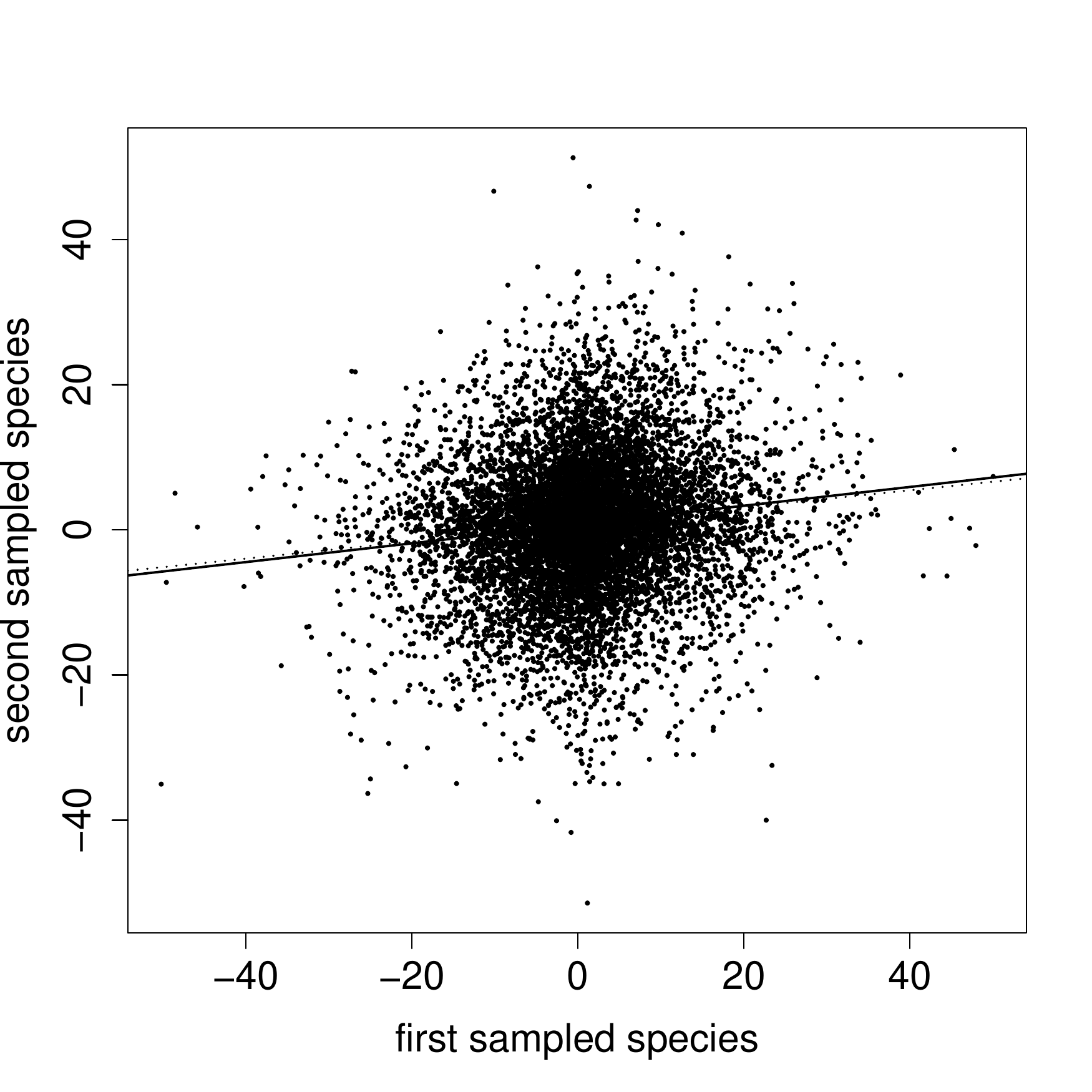} \\
\includegraphics[width=0.32\textwidth]{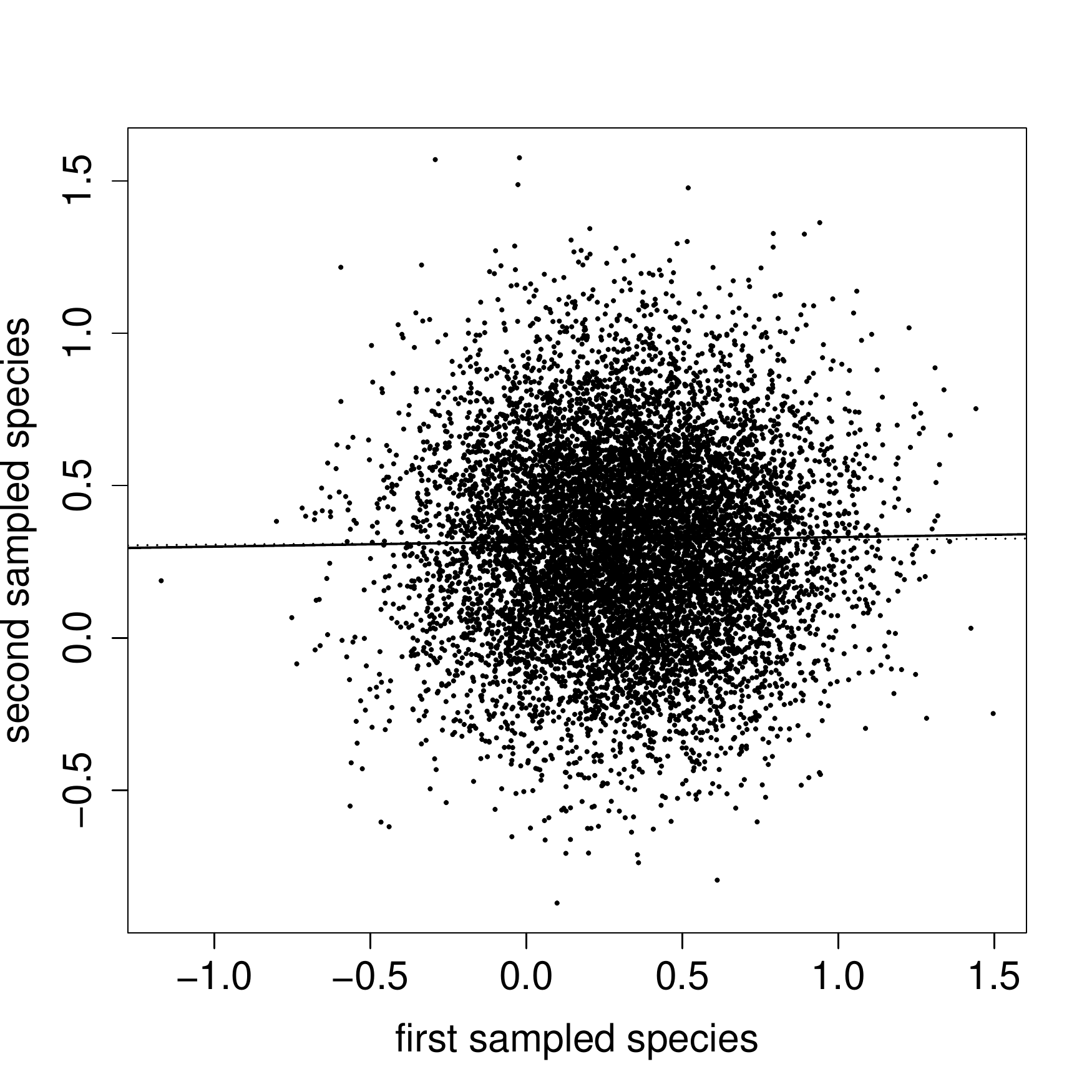} 
\includegraphics[width=0.32\textwidth]{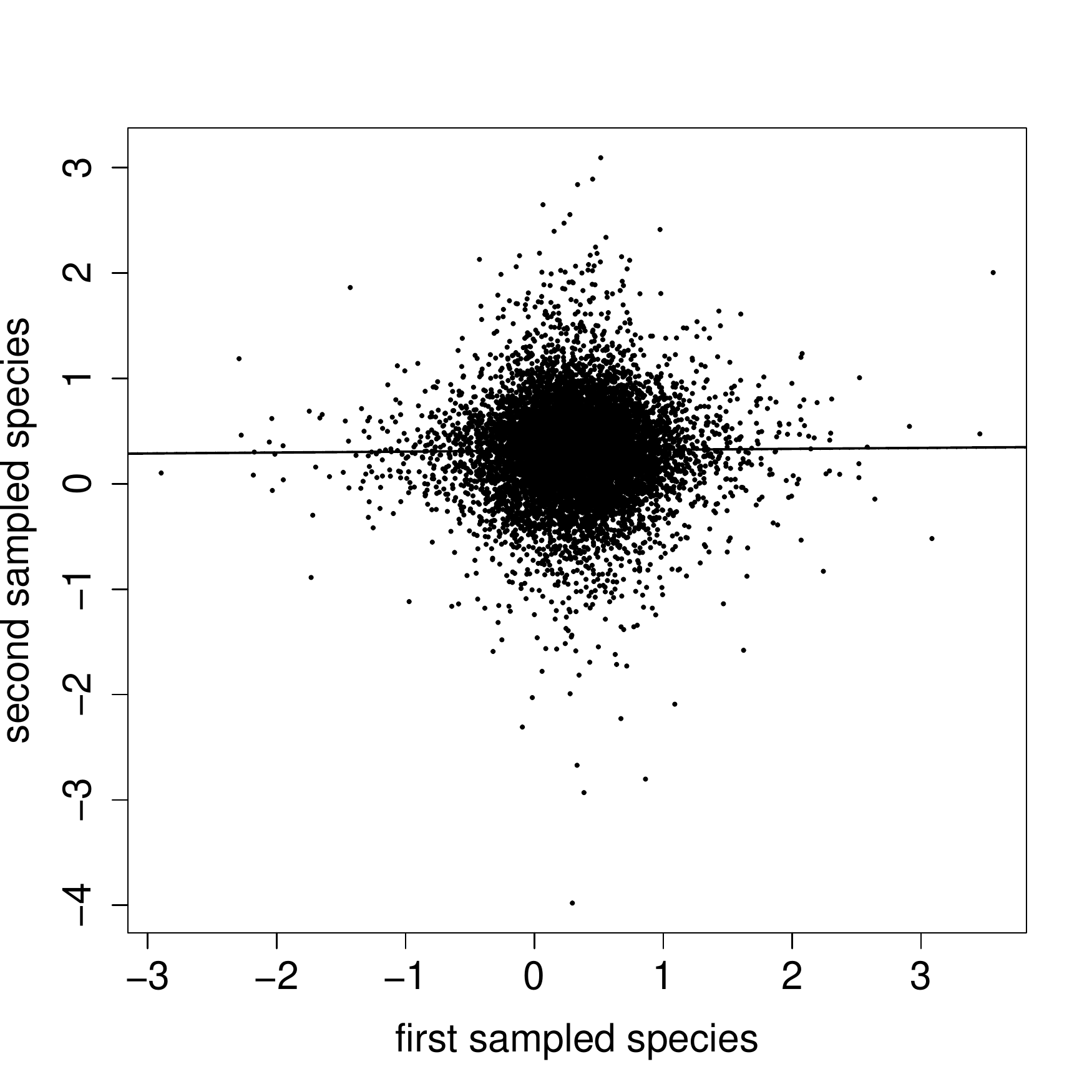}
\includegraphics[width=0.32\textwidth]{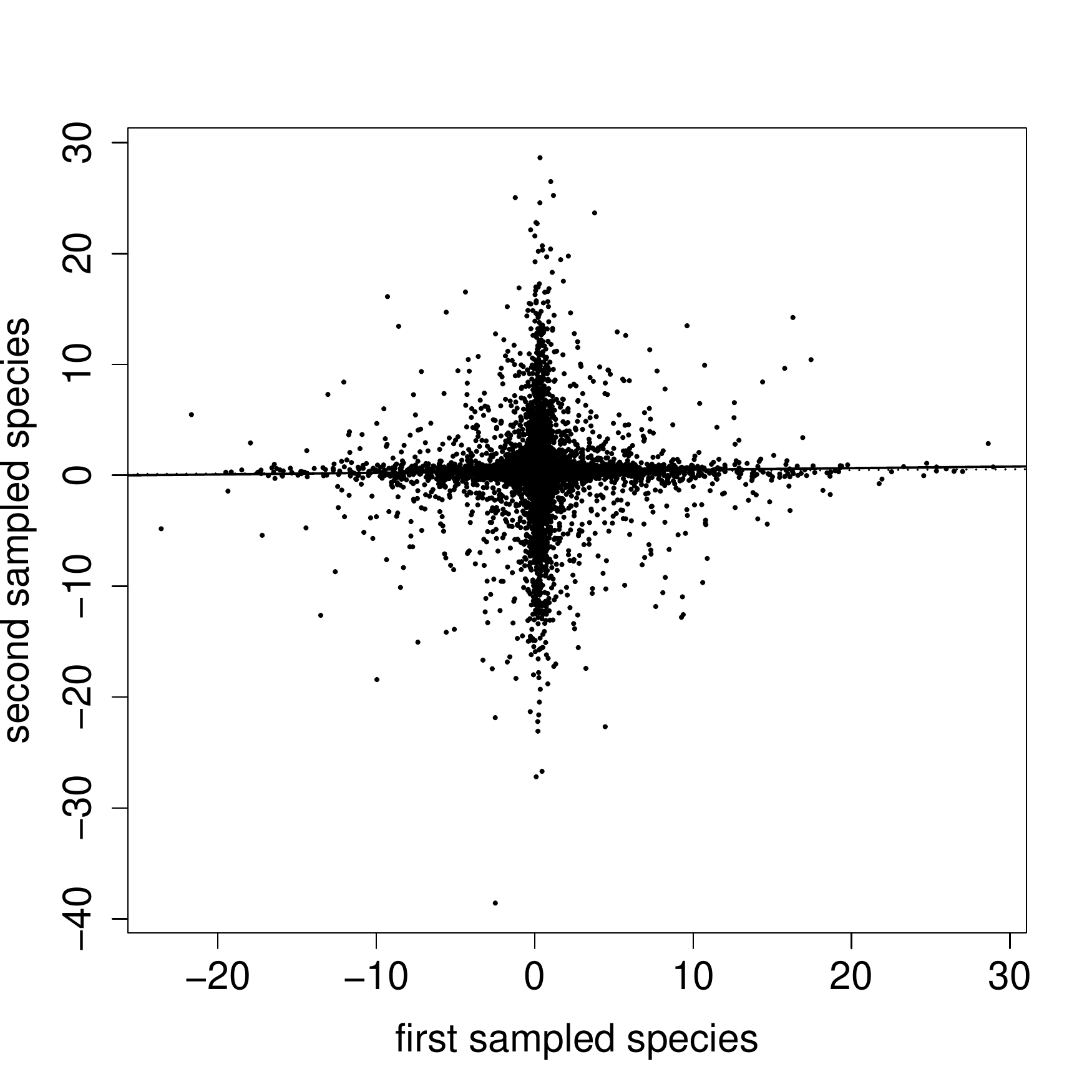}
\caption{
Regression lines fitted to simulated data (thick line) mostly indistinguishable from the true regression line
\mbox{
$y=\rho_{n}x+(1-\rho_{n})(b_{n,\alpha}X_{0}+(1-b_{n,\alpha})\theta)$} (dotted line) with $\rho_{n}$ given by the exact formula for different
values of $\kappa$, $\alpha$ and $\delta$ in the Yule--Ornstein--Uhlenbeck--jumps model. 
Top row $\alpha=0.05$, center row $\alpha=0.5$, bottom row $\alpha=5$. First column $\kappa=0.01$
second column $\kappa=0.5$ and third column $\kappa=0.99$. The other parameters are fixed at $\delta=1$, $X_{0}=0$, $\sigma_{a}^{2}=1$, $p=0.5$,
so $\sigma_{c}^{2}=\kappa/(1-\kappa)$.}
\end{figure}

\begin{figure}
\centering
\includegraphics[width=0.4\textwidth]{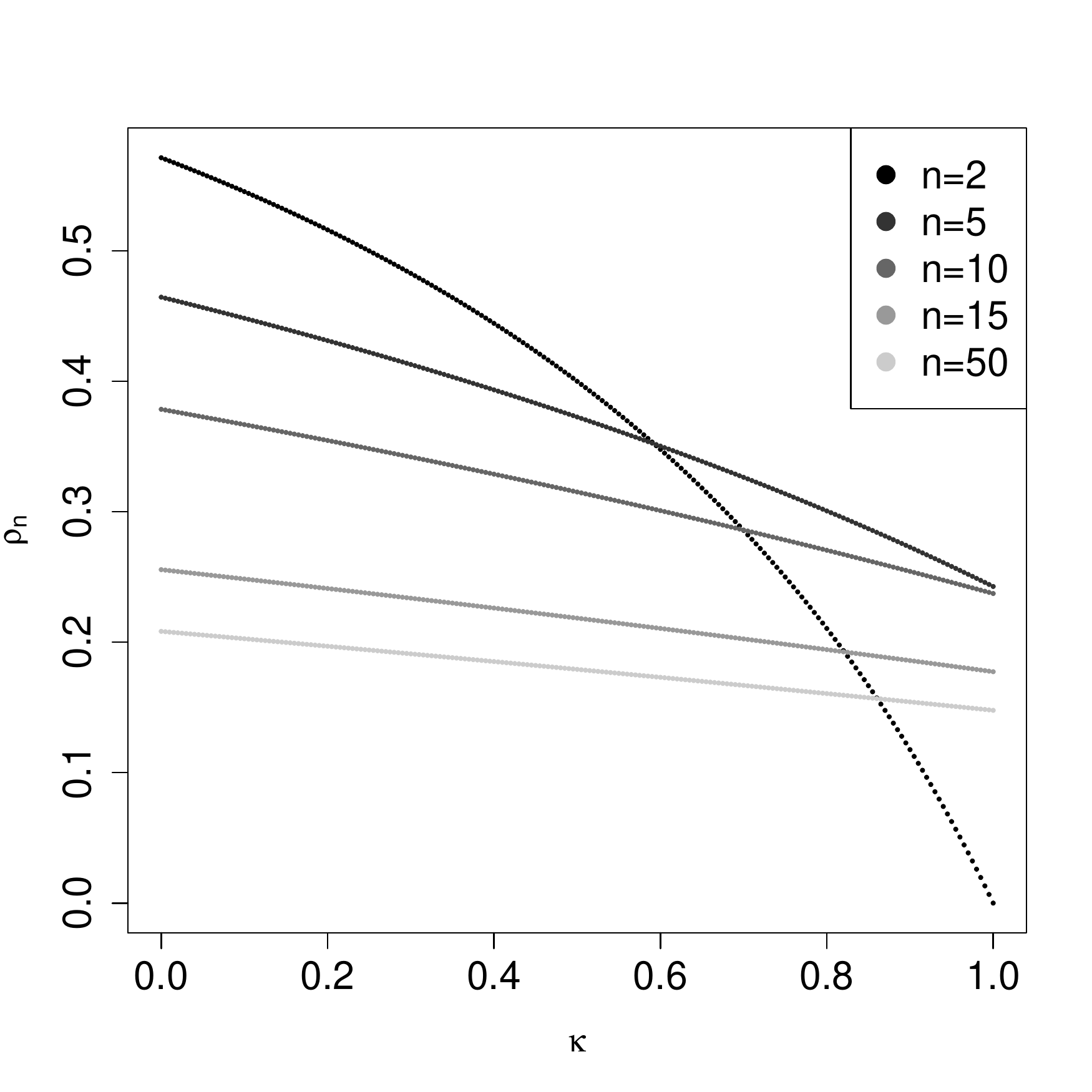}
\includegraphics[width=0.4\textwidth]{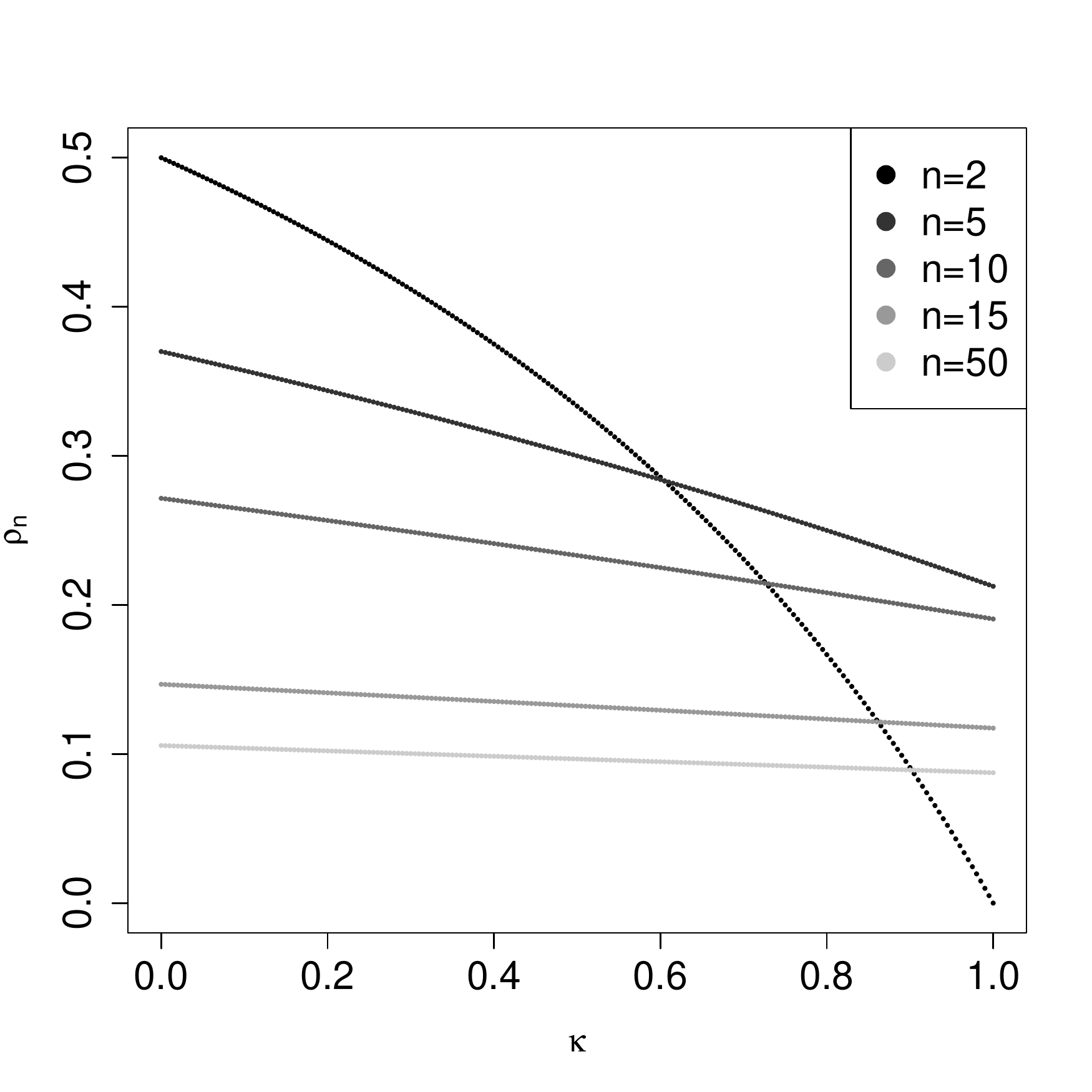} \\
\includegraphics[width=0.4\textwidth]{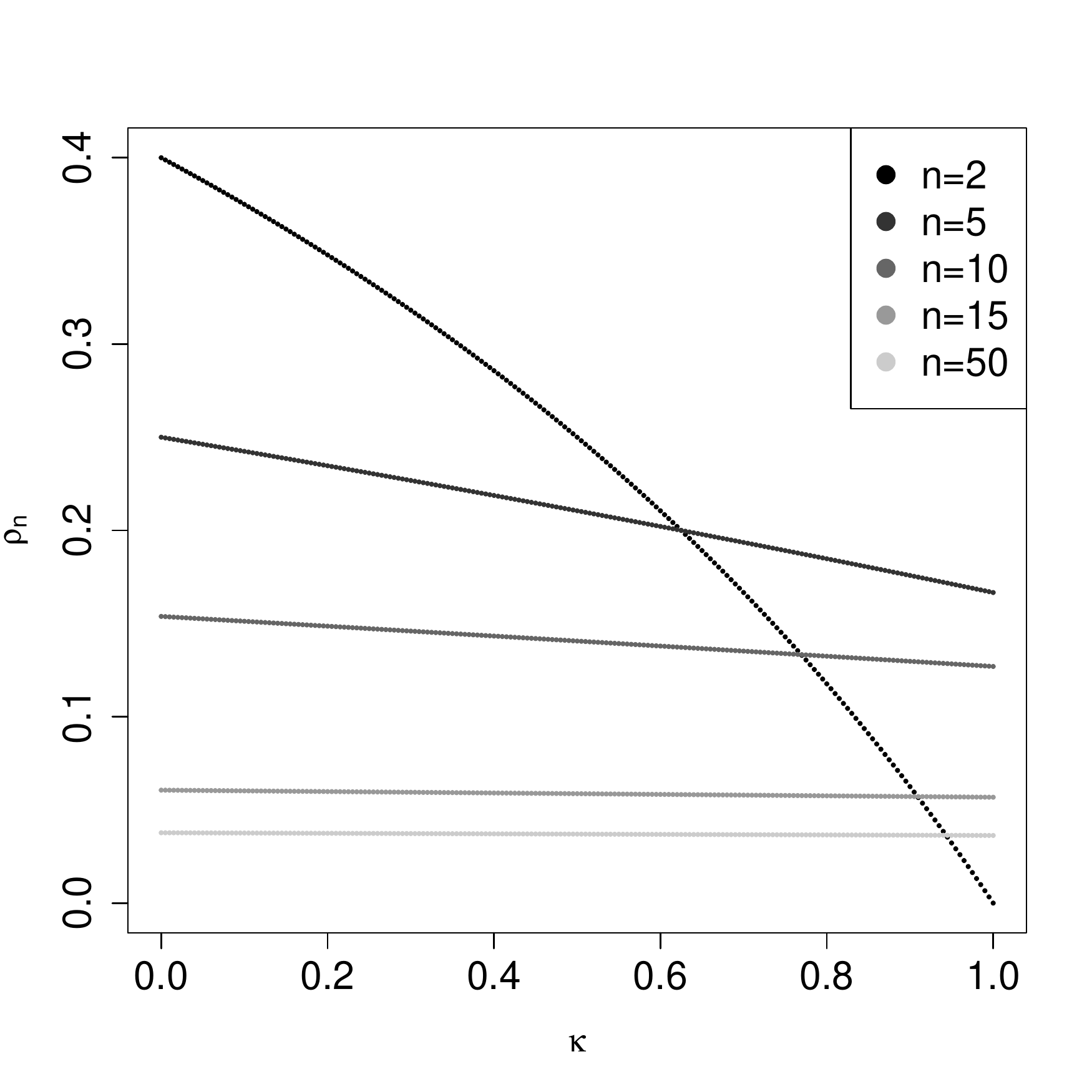}
\includegraphics[width=0.4\textwidth]{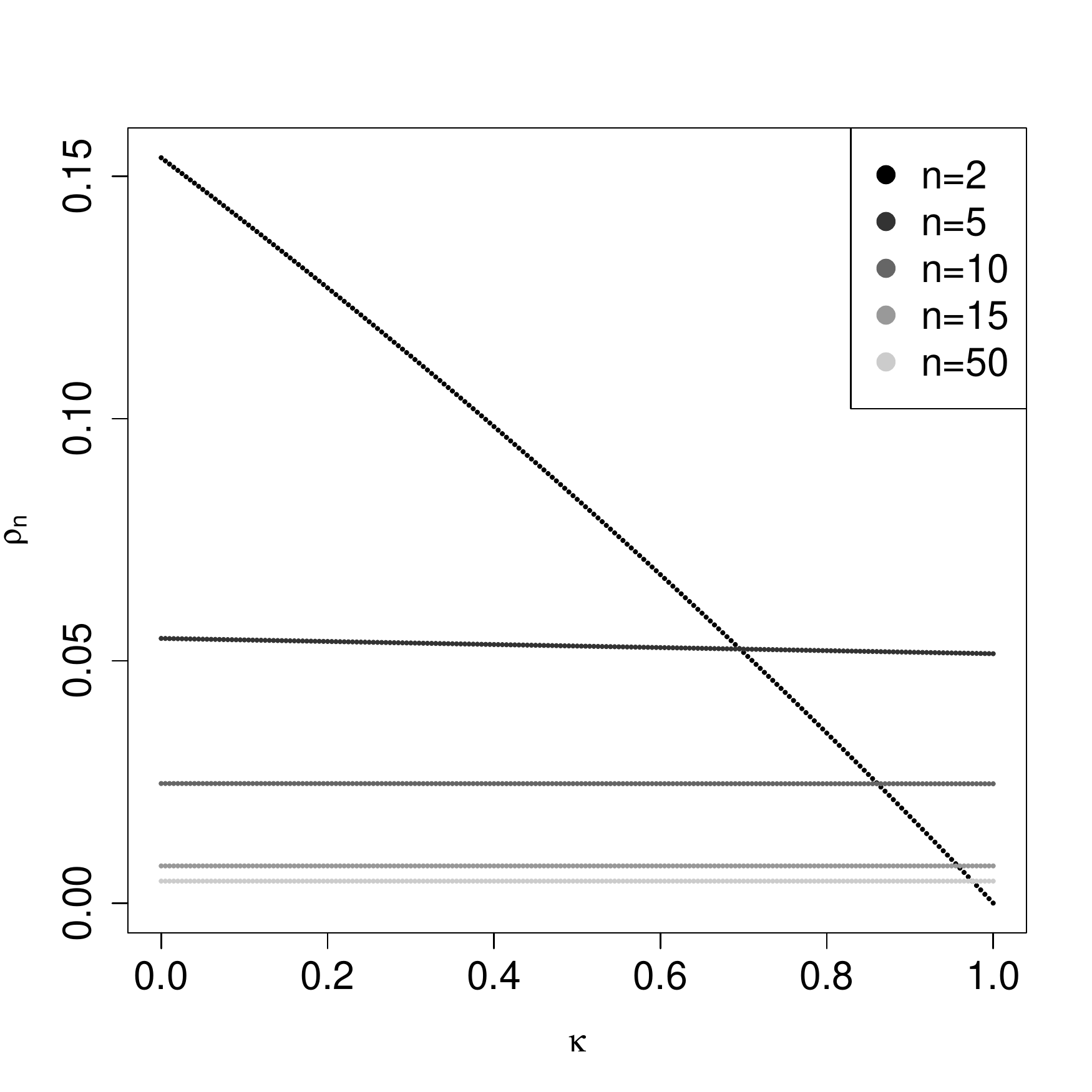} 
\caption{Interspecies correlation coefficient for the Yule--Ornstein-Uhlenbeck--jumps model for $\delta=0$ 
and different values of 
$\alpha$, $n$. Top left: $\alpha=0.25$, top right: $\alpha=0.5$, bottom left: $\alpha=1$, bottom right $\alpha=5$.
}\label{figOUcorrK}
\end{figure}

\subsection{Introducing extinction}
Above I concentrated on the case of pure birth trees. However, a more general version
of the derived formulae can be used to include death events. For the unconstrained 
evolutionary model i.e. Brownian motion, I found that:
\be
\begin{array}{rcl}
\var{X} & = & (\sigma_{a}^{2}+2p\sigma_{c}^{2})\left((1-\kappa)\E{T} + \kappa \frac{1}{2}\E{\Upsilon}\right), \\
\cov{X}{Y} & = &  (\sigma_{a}^{2}+2p\sigma_{c}^{2})\left((1-\kappa)\E{T-\tau} + \kappa \frac{1}{2}\E{\upsilon}\right), \\
\rho_{n} & = & \frac{(1-\kappa)\E{T-\tau} + \kappa \frac{1}{2}\E{\upsilon}}{(1-\kappa)\E{T} + \kappa \frac{1}{2}\E{\Upsilon}}
\end{array}
\ee
and for the constrained model, with Ornstein--Uhlenbeck dynamics:
\be
\begin{array}{rcl}
\var{X} & = &  \frac{\sigma_{a}^{2}+2p\sigma_{c}^{2}}{2\alpha}\left((1-\kappa)V^{(n)}_{a}(\alpha,\delta) 
+ \kappa V^{(n)}_{c}(\alpha) \right),\\
\cov{X}{Y} & = &  
\frac{\sigma_{a}^{2}+2p\sigma_{c}^{2}}{2\alpha}\left((1-\kappa)C^{(n)}_{a}(\alpha,\delta) 
+ \kappa C^{(n)}_{c}(\alpha) \right),
 \\
\rho_{n} & = & \frac{(1-\kappa) C^{(n)}_{a}(\alpha,\delta) + \kappa C^{(n)}_{c}(\alpha)}{
(1-\kappa)V^{(n)}_{a}(\alpha,\delta) + \kappa V^{(n)}_{c}(\alpha)},
\end{array}
\ee
where $p$ is the probability of a jump occurring and,
\bd
\begin{array}{rcl}
V^{(n)}_{a}(\alpha,\delta) &=& 1-\E{e^{-2\alpha T}} +\delta^{2}\var{e^{-\alpha T}}, \\
V^{(n)}_{c}(\alpha) &=& \frac{2\alpha}{2}\E{\sum\limits_{i=2}^{\Upsilon+1}e^{-2\alpha(t_{\Upsilon+1}+\ldots+t_{i})}} ,\\
C^{(n)}_{a}(\alpha,\delta)& =& \E{e^{-2\alpha \tau}}-\E{e^{-2\alpha T}} +\delta^{2}\var{e^{-\alpha T}}, \\
C^{(n)}_{c}(\alpha)&=& \frac{2\alpha}{2}\E{\sum\limits_{i=2}^{\upsilon+1}e^{-2\alpha(\tau+t_{\upsilon+1}+\ldots+t_{i})}},
\end{array}
\ed
where $(t_{1},\ldots,t_{\Upsilon+1})$ are the times between speciation events on a randomly chosen lineage, see 
\mbox{Fig. \ref{figappTreeTimes}.}

The values of $\E{T}$ and $\E{T-\tau}$ for a birth--death tree with constant coefficients $\lambda\ge \mu >0$
where discussed by \citet{SSagKBar2012} \citep[see also][]{TGer2008a,TGer2008b,AMooetal,TSta2008,TStaMSte2012}. 
The formulae for the Laplace transforms of $T$ and $\tau$
and also for the expectation and probability generating functions of $\Upsilon$
and $\upsilon$ are to the best of my knowledge not available yet
\citep[but see][for some distributional properties]{AMooetal,TStaMSte2012}, however
they could be obtained via simulation methods \citep[using e.g. the TreeSim R package][]{TreeSim1,TreeSim2}.
When discussing the Hominoid body--size analysis of \citet{FBok2002} below I will illustrate this approach.

\subsection{Quantifying effects of gradual and punctuated change --- quadratic variation}
In the previous section I described first-- and second--order properties
(mean, variance, covariance, correlation) of an evolutionary model
containing both gradual and punctuated change. 
One may ask what is the main effect of cladogenetic evolution on
the phylogenetic sample
and also how one can elegantly summarize the magnitude of its effect.
From the sample's point of view one can see (if Conjecture \ref{conjRhon} is correct) 
that increasing the variance
of the jump with respect to the stochastic perturbation of the phenotype 
will decorrelate the contemporary observations, i.e. they will be less
similar than expected from gradual change. At first glance this
might seem like a ``nearly'' obvious statement but notice that 
the covariance is also increased due to the jump component and 
all jumps,
apart from those on pendant branches, are shared by some subclade
of species. Therefore a large jump early enough (to be shared 
by a large enough subclade) could ``give similarity'' to these tips.

The decrease of similarity between species with $\kappa$ (conditional on Conjecture \ref{conjRhon}) is the main effect
of cladogenetic change but it would be biologically useful to have some value
characterizing the magnitude of its effect. 
One such possibility is
the expectation of the quadratic variation (see Appendix B for a
brief mathematical introduction to this) of the evolutionary process along a single lineage.
As one can attach 
mechanistic interpretations to the different modes of evolution \citep{AMooSVamDSch1999}
appropriately partitioning the reasons for evolution could give insights into
the trait's role.

The two models of anagenetic evolution considered here have the same quadratic variation $\sigma^{2}_{a}t$. 
The cladogenetic component comes in as a mean $0$,
variance $\sigma_{c}^{2} < \infty$ jump just after a speciation event added  
with probability $p$. 
Therefore for both of these models (and in fact for any one defined by Eq. \ref{eqSDEdiff})
with the additional jump component the quadratic variation of a given lineage will be
\begin{equation}\label{eqQVtriat}
[X] = \sigma_{a}^{2}T + \sigma_{c}^{2}\Upsilon^{\ast}.
\end{equation}
One can clearly see that the magnitude of the trait's fluctuations can be divided into the
cladogenetic and anagenetic component and they are defined respectively by the parameters $\sigma_{c}^{2}$
and $\sigma_{a}^{2}$ \citep[as considered in][]{FBok2002}. However, the formula for the quadratic
variation depends on $T$ and $\Upsilon^{\ast}$. These are random unless the tree is given and the
jump pattern known. Therefore this is a random variable
and instead I consider its expectation,
\begin{equation}
\E{[X]} = \sigma_{a}^{2}\E{T} + p\sigma_{c}^{2}\E{\Upsilon}.\label{eqQvarJ}
\end{equation}
With a given phylogeny one might consider that using $\E{T}$ is superfluous but this could be useful
if one wants to make predictive statements not dependent on the given phylogeny.
Under the considered conditioned Yule tree I will have,
\begin{equation}
\E{[X]} = \sigma_{a}^{2}H_{n} + 2p\sigma_{c}^{2}(H_{n}-1)\sim (\sigma_{a}^{2}+2p\sigma_{c}^{2})\ln n.
\end{equation}
One way of quantifying the proportion of cladogenetic change is using the value
$$\frac{2p\sigma_{c}^{2}(H_{n}-1)}{\sigma_{a}^{2}H_{n} + 2p\sigma_{c}^{2}(H_{n}-1)}$$
(and respectively $$\frac{\sigma_{a}^{2}H_{n}}{\sigma_{a}^{2}H_{n} + 2p\sigma_{c}^{2}(H_{n}-1)}$$ for the anagenetic effects). 
For a large enough set of tip species the above would simplify to $\kappa$ (and $1-\kappa$). 

\citet{TMatFBok2008} proposed two other ways of comparing anagenetic and cladogenetic effects
for Brownian motion punctuated evolution.
The first one is less relevant here as it involves an estimation
procedure. One sets $\sigma_{c}^{2}=0$ and estimates $\sigma_{a}^{2}$ under this assumption.
Next one estimates both parameters and compares by how much has the estimate of $\sigma_{a}^{2}$
dropped. The second is very similar to the proposed quadratic variation approach. 
Knowing the speciation and extinction rates the 
average species lifetime or average time to the next split is known
\citep{FBokVBriTSta2012,AMooetal,TStaMSte2012,MSteAMoo2010}.
In the conditioned Yule tree framework used here the length, $t_{b}$, of a
random interior edge is exponential with rate $2\lambda$ and hence will have
expected length $1/2\lambda$
\citep{AMooetal,TStaMSte2012,MSteAMoo2010}.
Therefore between speciation points the percentage of evolution due to the gradual component will be,
\be \label{eqBokmaProp}
\frac{\sigma_{a}^{2}\E{t_{b}}}{\sigma_{a}^{2}\E{S}+\sigma_{c}^{2}} = 
\frac{0.5\lambda\sigma_{a}^{2}}{0.5\lambda\sigma_{a}^{2}+\sigma_{c}^{2}} 
=\frac{\sigma_{a}^{2}}{\sigma_{a}^{2}+2\lambda\sigma_{c}^{2}}.
\ee
Notice that this will correspond exactly to the $1-\kappa$ limit derived in the
previous paragraph after one small modification. Namely I assume that 
the tree starts at the moment of the first speciation event, i.e.
disregarding the root branch. Then $\E{T}=H_{n}-1$ and for all
$n$ the proportion of anagenetic change will be $1-\kappa$ 
(and cladogenetic $\kappa$). Taking $p=1$ i.e. a jump occurred 
at each speciation event I arrive at Eq. \eqref{eqBokmaProp}.

These two procedures were described by \citet{TMatFBok2008} for a Brownian motion model
of gradual evolution. However, the discussion concerning quadratic variation
above suggests that 
they are valid for any model described by Eq. \eqref{eqSDEdiff}.

\section{Interpreting Hominoid body size evolution results}
\label{secHom}
For illustrative purpose I will now discuss how the conclusions of \citet{FBok2002} look under my approach. 
I realize that this clade is not the most appropriate one to be analyzed by a tree--free model
because its phylogeny is available. I nevertheless chose to use it here for a number of reasons. 
First birth and death rate parameters and also the rates of 
anagenetic and cladogenetic evolution
have already been preestimated. I mentioned in the introduction that I do not consider
estimation of parameters in this current work but want to concentrate on understanding their effects
and using a clade with a known phylogeny makes this easier. One can also use my results
if only the time to the most recent common ancestor of the clade is known and I discuss such
a situation here.

I do not claim that what I write here about Hominoid body size evolution 
is authoritative but rather wish to motivate the usefulness and applicability
of the methods presented above. 
In the subsequent discussion I assume that the unit of time is 1 million years.

\citet{FBok2002} studied the evolution of (the logarithm of) Hominoid body size under a Brownian motion
model of gradual evolution \mbox{(diffusion coefficient 
$\sigma_{a}^{2}$)} with an independent normally distributed mean $0$, variance
$\sigma_{c}^{2}$ jump added to a newborn daughter lineage at the speciation instant. 
This is equivalent to the presented here model with $p=0.5$ 
\citep[``We can only state in retrospect that the probability that a speciation event has affected the phenotype of interest is \nicefrac{1}{2}.''][]{FBok2002}.

\citet[][see Fig.2 therein]{FBok2002} considers the Hominoidea phylogeny of \citet{APur1995} consisting of five species,
\textit{Gorilla gorilla}, \textit{Homo sapiens}, \textit{Pongo Pygmaeus}, \textit{Pan paniscus}
and \textit{Pan troglodytes}. \citet{APurSNeePHar1995} estimated a birth rate of $\lambda=0.134$
and death rate $\mu=0.037$. Other rates could be more appropriate
as \citet{FBokVBriTSta2012} obtained $\lambda=0.46~ (95\%~\mathrm{CI}~0.12-1.37)$ 
and $\mu=0.43~(95\%~\mathrm{CI}~0.12-1.37)$ for a seven species phylogeny 
(\textit{Pan paniscus}, \textit{Pan troglodytes}, \textit{Homo sapiens},  \textit{Gorilla gorilla}, \textit{Gorilla beringei}, \textit{Pongo abelii},  \textit{Pongo pygmaeus}).

Using the phylogeny of \citet{APur1995}, \citet{FBok2002} obtained parameter estimates,
$\sigma_{a}^{2}=0.0484$ and $\sigma_{c}^{2}=0.0169$. However, a $95\%$ confidence interval for
$\sigma_{c}^{2}$ contained $0$ but this, as \citet{FBok2002} pointed out, could be caused 
by a lack of statistical power due to the low number of extant species. 

I derived $\E{\Upsilon}$ and $\E{\upsilon}$ only for the pure birth process and so
Eq. \eqref{eqCorYuleBM} is known only in such a situation. 
However, in this specific analysis with known (preestimated) $\lambda$, $\mu$
and fixed number of extant nodes I can obtain estimates of $\E{\Upsilon}$ and $\E{\upsilon}$
via simulations using the TreeSim \citep{TreeSim1,TreeSim2} R package. In \mbox{Fig. \ref{figBDNunu}}
one can see the estimates of $\E{\Upsilon}$ and $\E{\upsilon}$ for different values of $n$.
I calculate the values of $\E{T}$ and $\E{T-\tau}$ 
according to \citet{SSagKBar2012} :
\be
\begin{array}{rcl}
\E{T} & = &\frac{1}{\mu(\gamma-1)}\left(H_{n}+e_{n,\gamma}-\ln\frac{\gamma}{\gamma-1} \right), \\ 
\E{T-\tau} & = & \frac{2}{\mu(n-1)(\gamma-1)}\left(n+ne_{n,\gamma}-\frac{\gamma}{\gamma-1}\left(H_{n}+e_{n,\gamma}-\ln\frac{\gamma}{\gamma-1} \right) \right), 
\end{array}
\ee
where $\gamma=\lambda/\mu$, $e_{n,\gamma}=\int\limits_{0}^{1}\frac{x^{n}}{\gamma-x}$ 
and present the calculated values in Tab. \ref{tabResults}.
From
the formula for 
the expected quadratic variation,
\be
\E{[X]} = \sigma_{a}^{2} \E{T} + \sigma_{c}^{2} \E{\Upsilon^{\ast}} \\ 
\ee
I quantified
the amount of change attributed to anagenetic and cladogenetic evolution by
$$\frac{\sigma_{a}^{2} \E{T}}{\sigma_{a}^{2} \E{T} + \sigma_{c}^{2} \E{\Upsilon}}$$ \\ 
and
$$\frac{\sigma_{c}^{2} \E{\Upsilon}}{\sigma_{a}^{2} \E{T} + \sigma_{c}^{2} \E{\Upsilon}}$$ 
respectively.
\mbox{Table \ref{tabResults}} summarizes these values for the studied tree setups. 


\begin{figure}
\centering
\includegraphics[width=0.4\textwidth]{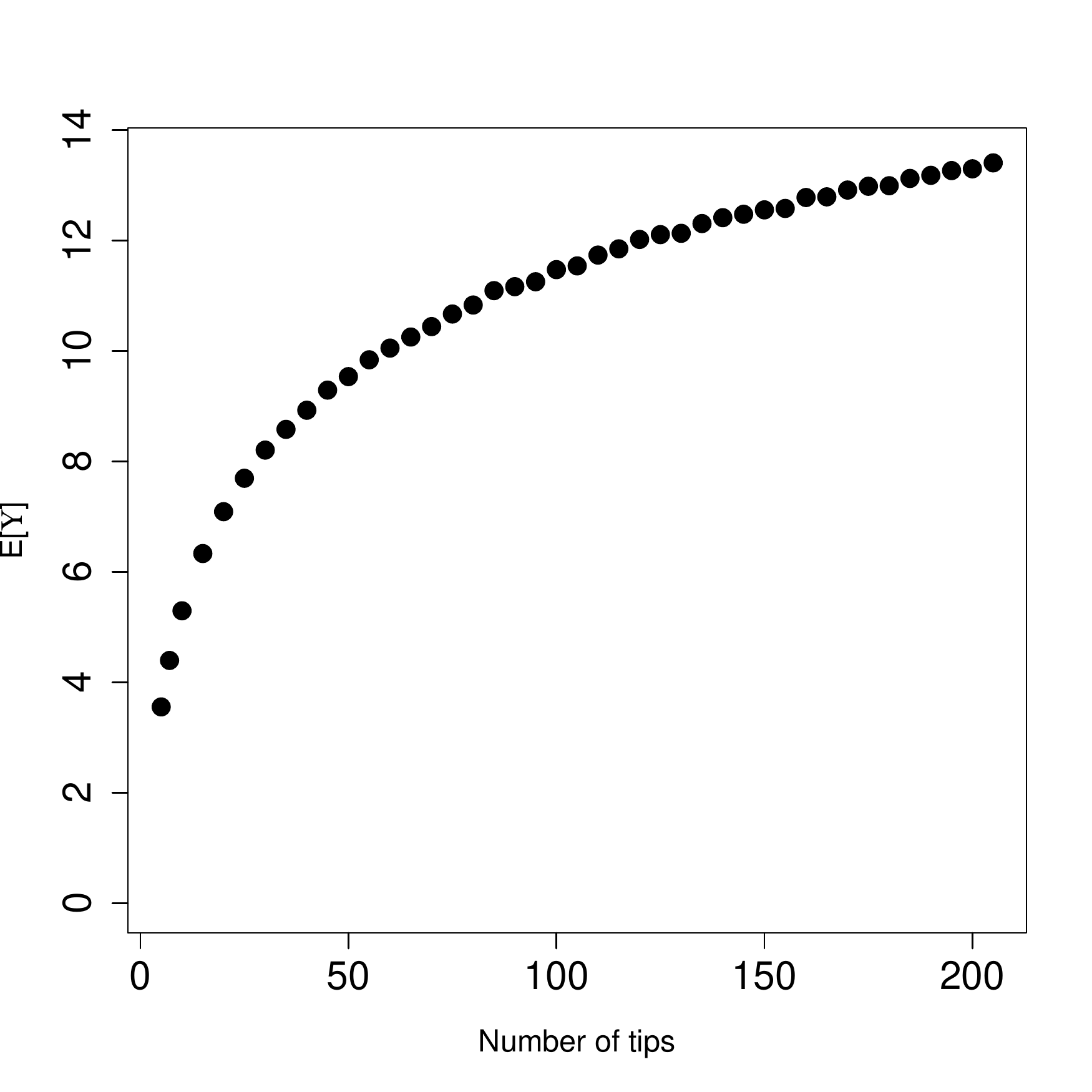}
\includegraphics[width=0.4\textwidth]{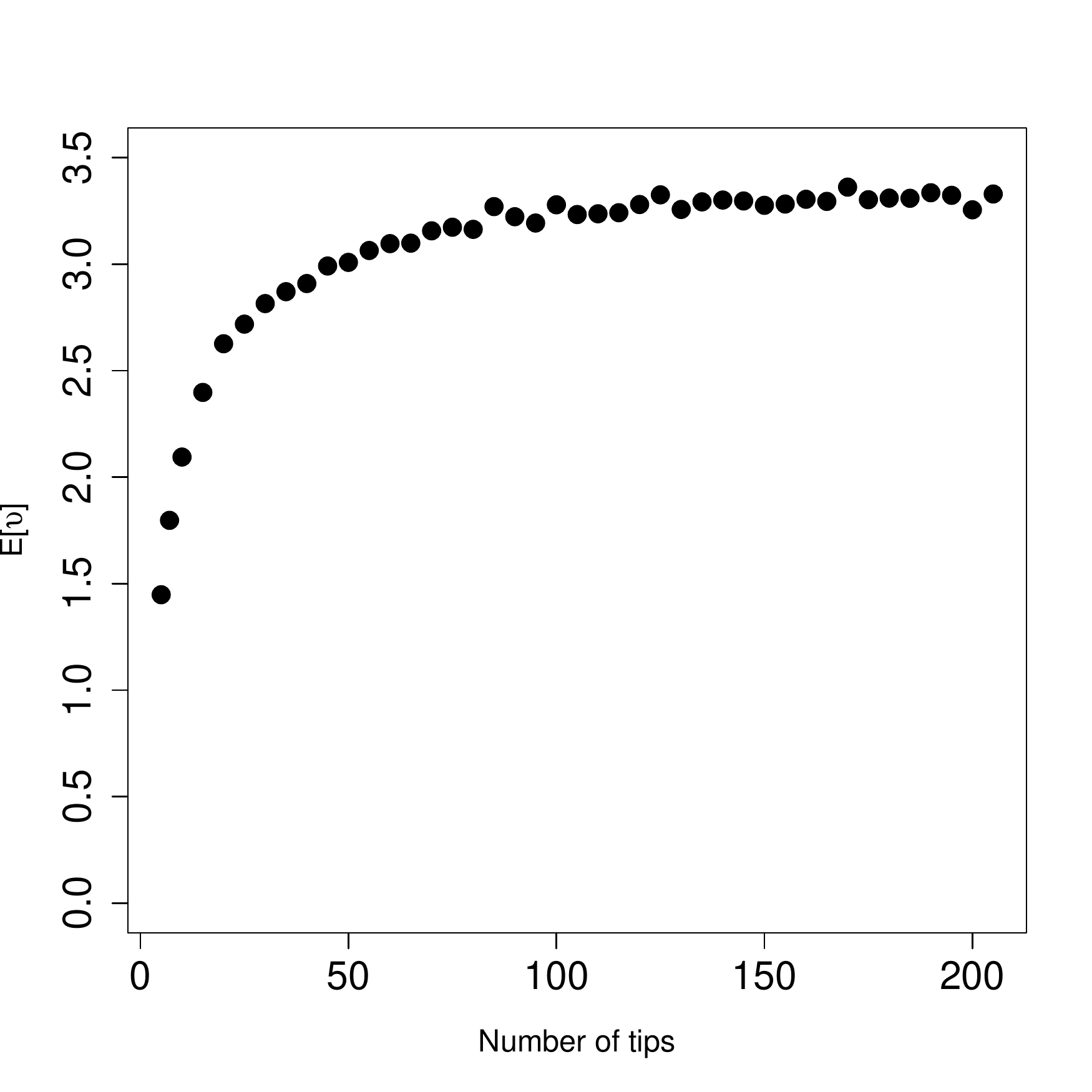}  
\\
\includegraphics[width=0.4\textwidth]{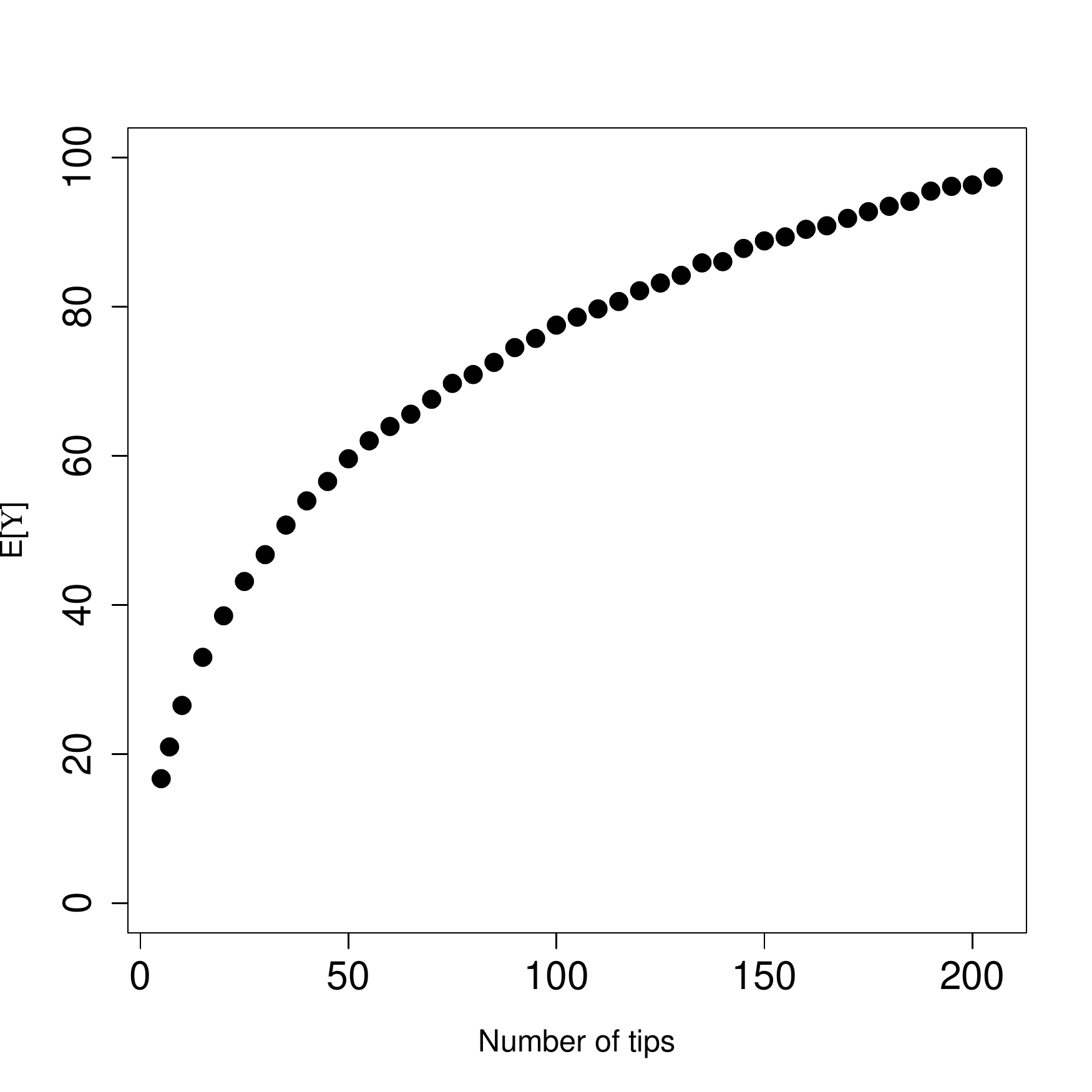}
\includegraphics[width=0.4\textwidth]{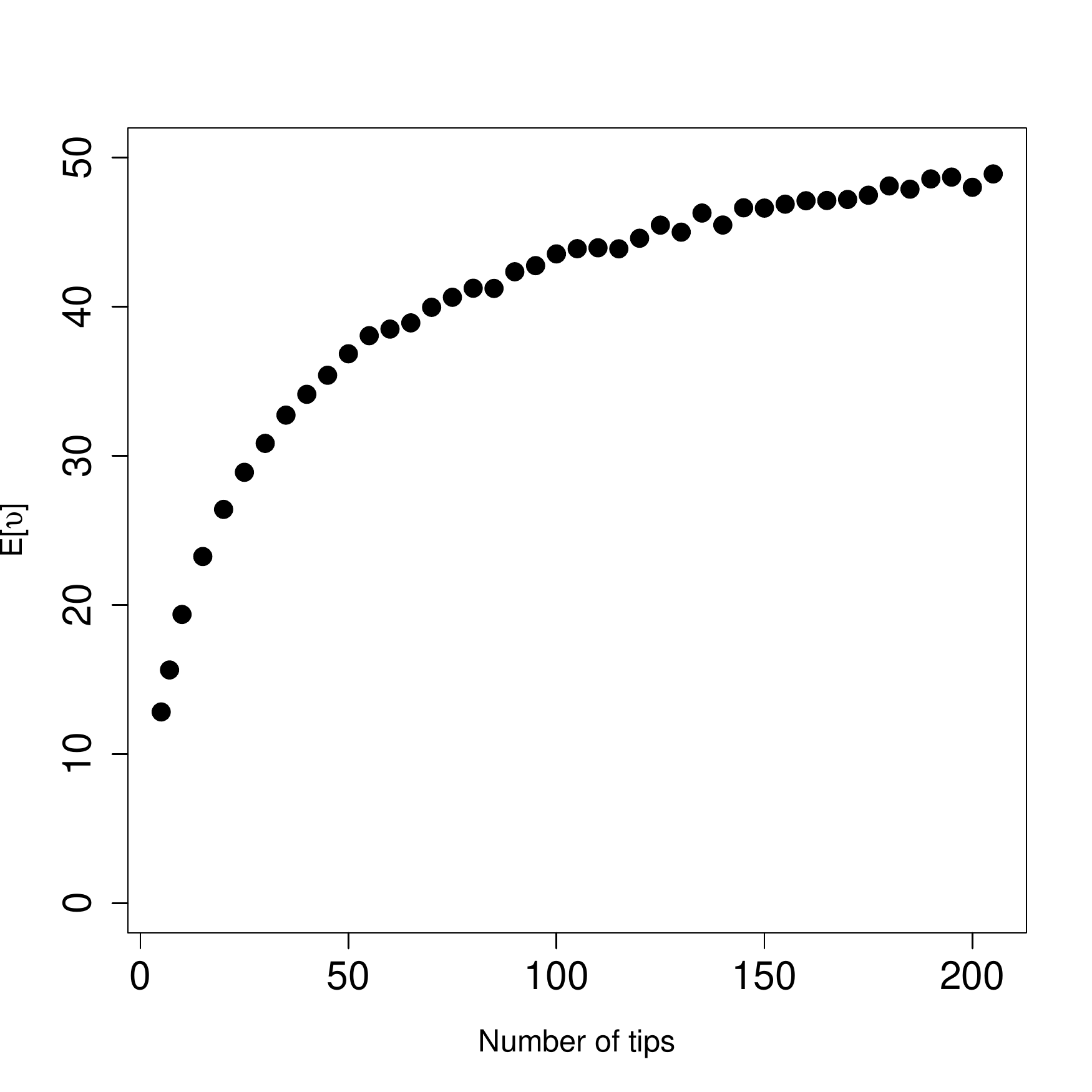} 
\caption{
Simulated $\E{\Upsilon}$ and $\E{\upsilon}$ as functions of $n$, the number of extant species, in the case of a supercritical 
birth--death tree
with $\lambda=0.134$ and $\mu=0.037$ top row ($10000$ simulated trees)
and $\lambda=0.46$, $\mu=0.43$ bottom row ($10000$ simulated trees). Simulations were done using the TreeSim \citep{TreeSim1,TreeSim2} R package.
}\label{figBDNunu}
\end{figure}

\begin{table}[!ht]
\centering
\begin{tabular}{cccc}
\hline
Parameter & Phylogeny $1$  & Phylogeny $2$  & DLMH \\
& \citep{APur1995} & \citep{FBokVBriTSta2012} &  \citep{FBokVBriTSta2012} \\
\hline
$n$ & $5$ & $7$ & --- \\
$\lambda$ & $0.134$ & $0.46$ & --- \\
$\mu$ & $0.037$ & $0.43$ & --- \\
$\E{\Upsilon}$ & $3.553$ & $20.952$ & $3.6$ \\
$\E{\upsilon}$ & $1.448$ & $15.639$ & --- \\
$\E{T}$ & $20.831$ & $24.935$ & $5.25$ \\
$\E{T-\tau}$ & $12.941$ & $19.182$ & --- \\
$\rho_{n}$ only gradual change & $0.621$ & $0.769$ & ---\\
$\rho_{n}$ gradual and punctuated change & $0.615$ & $0.766$ & --- \\
$\E{[X]}$ & $1.038$ & $1.384$ & $0.2845$ \\
$\%$ anagenetic & $97.11\%$ & $87.21\%$ & $89.31\%$ \\
$\%$ cladogenetic & $2.89\%$ & $12.79\%$ & $10.39\%$ \\
\hline
\end{tabular}
\caption{Summary of parameters and calculated values for the Hominoid body size analysis.
} \label{tabResults}
\end{table}

\begin{table}[!ht]
\centering
\begin{tabular}{cc}
\hline
Species & body size [$\ln(kg)$] \\ 
\hline
\textit{Gorilla gorilla} & $4.74$ \\
\textit{Homo sapiens} &  $4.1$ \\
\textit{Pongo Pygmaeus} & $4.02$ \\
\textit{Pan paniscus} & $3.66$ \\
\textit{Pan troglodytes} & $3.62$ \\
\hline
\end{tabular}
\caption{Hominoidea body size measurements, read off Fig 2. in \citep{FBok2002}.} \label{tabHomBodySize}
\end{table}

In his study \citet{FBok2002} estimated the parameters of anagenetic and cladogenetic change and concluded
that one cannot reject the null hypothesis that body size evolution in Hominoidea has been entirely gradual.
I will assume here his estimated values of $\sigma_{a}^{2}$ and $\sigma_{c}^{2}$ and conditional on them
see how punctuated change effects Hominoidea. 
Therefore some way of working with the number of jumps that occurred is required
and there are two possible ways to do this. One is to use the estimated values of $\E{\Upsilon}$
and $\E{\upsilon}$. However, since I have a tree available I may try to estimate the number of jumps 
conditional on the tree for a particular lineage of interest. In this case there is no need
to use $\E{T}$ as the lineage length will be known. I will discuss both approaches and in the second case
choose after \citet{FBokVBriTSta2012} the direct line to modern human (DLMH) as the lineage of interest.


From Tab. \ref{tabResults}
One can see that in the case of the first five species phylogeny 
cladogenetic change
decreases the species' similarity by $0.0062$ (or by $1\%$). 
To get some sort of comparison with the real data (Fig. \ref{tabHomBodySize})
I considered all $20$ pairs of species 
and from this estimated the correlation as $\hat{\rho_{n}}=0.43$. Notice that this is not entirely correct
as the sampled pairs are statistically dependent. However, this empirical correlation coefficient is on the same
level as the theoretical one of $0.6151$.
One can also see that the punctuated change consists of $2.9\%$ of the expected change of the phenotype.
Therefore even if the estimate of $\sigma_{a}^{2}$ from \citet{FBok2002} was significant the role of punctuated change
would not be that crucial.

Next I studied evolution on the DLMH. Its length is suggested to be between $4.1$ and $7.02$ Mya
\citep{AHobetal2007,SKumetal2005,NPatetal2006}.
I chose $T\approx 5.25$[Mya] \citep[][phylogeny in supporting material]{FBokVBriTSta2012}
to be consistent with the estimated number of speciation events 
\citep[even though in][according to which $\sigma_{c}^{2}$ was estimated,
7.04Mya is stated, this however would only increase the contribution of gradual change by about $2.5\%$]{APur1995}.

The estimated expected number of speciation events on the lineage is $3.6$ \citep{FBokVBriTSta2012}. 
This gives the expected amount of log--body--mass change in the lineage to \textit{Homo sapiens} from our divergence
from chimpanzees to be $0.2845$ and in \mbox{Tab. \ref{tabResults}}
one see that gradual change remains the dominant one but the punctuated change plays some role as well. 
The two \textit{Pan}
species have an average log--body--mass of $3.64\ln(kg)$ while \textit{Homo sapiens} has $4.1\ln(kg)$. 
The difference between
these two values is $0.46$, not very far from $\sqrt{2\cdot 0.2845} \approx 0.754$, the square root 
of the expected total amount of change,
occurring on the distance from \textit{Homo sapiens} to the \textit{Pan} clade. 
We take the square root to have the same unit of measurement, $\ln(kg)$, in both cases.
Actually one would expect the actual difference
to be lower than the total change. This is because the real fluctuations are random and will not always be divergent.
Sometimes they
will be convergent, i.e. the change in both lineages will be sometimes in the same and sometimes in a different direction, therefore
some changes will cancel out. The ``total change'' (or quadratic variation) on the path connecting 
the two clades on the other hand, will not have this evening out effect, 
but will first make all change divergent and then sum it. 

Comparing the percentage of punctuated change to the case with the random birth--death tree
one can see that even though both indicate that punctuated change is not the dominating explanation for evolution,
they differ in the estimated magnitude of it. I will compare what happens if instead of the birth--death rates
of \citet{APurSNeePHar1995} I use $\lambda=0.46$, $\mu=0.43$
and $7$ species \citep[after][]{FBokVBriTSta2012}. 
In \mbox{Tab. \ref{tabResults}} one can see higher correlation coefficients 
and
the proportion of anagenetic and cladogenetic change is similar to what I obtained
from the 
DLMH calculations. 
This however should not be that surprising as 
both the length of the DLMH and the birth and death rates used above were jointly estimated
by \citet{FBokVBriTSta2012}.

With this analysis I do not intend to say anything definitive about Hominoid body size evolution. It is
only 
meant to serve as an illustration for the methodology presented here. All of the calculations seem to indicate 
that evolution in this clade was predominantly driven by gradual change.

\section{Discussion}
\label{secDis} 
\citet{FBok2002} introduced in his work 
\citep[but see also][]{FBok2003,FBok2008,FBok2010,TMatFBok2008,AMooDSch1998,AMooSVamDSch1999}
a modelling approach allowing for both gradual change and punctuated equilibrium. The 
modelling framework presented here is compatible with his but looks at it from a different perspective.
\citet{FBok2002} was interested in detecting punctuated equilibrium and devised a statistical procedure
for it, including confidence intervals. I ask another question: if I know 
(have estimates of) rates of cladogenetic and anagenetic evolution, can I then quantify (as e.g. 
fractions/percentages) how much of evolutionary change is due to gradual and how much due to cladogenetic change. 
These are two  different questions. The first is a statistical one, is there enough
data to estimate a parameter and how to do it. The second is descriptive, what do parameters
mean and what effects do they have on the system under study
\citep[but see also][for quantifying effects of gradual and punctuated evolution]{TMatFBok2008}.

In Eq. \eqref{eqQVtriat} one can see that the magnitude of each type of evolution depends on two components,
its instantaneous effect ($\sigma_{a}^{2}$ and $\sigma_{c}^{2}$) and the time over which each
was allowed to act ($T$ and $\Upsilon^{\ast}$). If speciation is very rare, then even if $\sigma_{c}^{2}$
is much larger than $\sigma^{2}_{a}$, punctuated equilibrium will have a tiny effect on the final 
evolutionary outcome. However, if the jumps are frequent, cladogenetic evolution can have a very large effect
even if its magnitude is small each time it occurs.

I observed in addition that adding a cladogenetic component has the effect of making species less and less similar,
as expected from intuition. The magnitude of the cladogenetic effect could give some insights
into a trait's role. As suggested by \citet{AMooSVamDSch1999}
cladogenetic evolution could be linked to 
traits that are involved in speciation or niche shifts and gradual evolution
to traits under continuous selection pressures that often change direction.

I also believe that my characterization of the Ornstein--Uhlenbeck process with jumps is an
important contribution to evolutionary modelling. 
This model is consistent with many theories on trait evolution. 
It is constrained and so it is well suited to include the fact that the trait has to function with other 
traits and cannot be arbitrarily large or small as then it would become useless. Of course
some freedom to vary around the optimum has to be present, e.g. to allow the organism to adapt 
if the environment changes. This is exactly how the Ornstein--Uhlenbeck process behaves. At stationarity
it is made up of constant, finite variance oscillations around the optimum. If the optimum changes, 
the trait will be pulled towards the new optimum. If the phenotype changes radically due to e.g.
rapid speciational change, it will be pulled back towards the optimum. 

Most phylogenetic comparative data sets contain only contemporary
measurements. These measurements have accumulated both cladogenetic and
anagenetic effects over the whole course of their evolution. Even if the instantaneous effect of cladogenetic
evolution is large, if it occurred rarely compared to the time of anagenetic evolution 
it could be difficult to obtain statistical significance of parameter estimates. Therefore, it is crucial 
\citep[as also pointed out in][]{FBok2002} to include speciation events resulting in extinct lineages. 
At each hidden speciation event cladogenetic evolution may have taken place. 

From the Hominoid body size example discussed here
one can conclude
that in the study of punctuated equilibrium 
if one conditions on the speciation and extinction rates 
it is important to obtain correct values of 
them to have the right balance between the time when gradual change 
took place and the number of opportunities for punctuated change
especially with a small number of tip species. I conducted an analysis based
on two studies of essentially the same clade but with different birth and death rate estimates
and obtained different (though qualitatively similar) proportions.
However, one can also see that what enters Eq. \eqref{eqQvarJ} is the product $\sigma_{c}^{2}p\E{\Upsilon}$.
Therefore one is free to vary two components $\sigma_{c}^{2}$ and $p\E{\Upsilon}$ to obtain 
the cladogenetic contribution. A low jump frequency and a high variance of the jump will give the same
effect as a high frequency of low magnitude jumps. Therefore it seems plausible that in a joint estimation
procedure even if it would be difficult to reliably estimate the individual parameters
$\sigma_{c}^{2}$, $\mu$ and $\lambda$ one still might be able to obtain reasonable values
for the contribution of each mode of evolution.

I have presented analytical results for the conditioned Yule model. This model does not allow
for extinction but it is a starting point for integrating punctuated equilibrium models
of phenotypic evolution with models of tree growth. As I discussed, trees with extinction
can easily be handled by simulation methods and a pure birth model can serve as a 
convenient prior for a Bayesian approach or as the basis for a numerical procedure. 
In addition to this I have derived
the probability generating function for the number of speciation events on a random
lineage and for the number of speciation events from the most recent common ancestor
of two randomly sampled tip species. These are to the best of my knowledge
novel results \citep[but see also e.g.][for further distributional properties of Yule trees]{TreeSim1,MSteAMcK2001}.

\section*{Acknowledgments}
I would like to thank Folmer Bokma for his many helpful comments, corrections and reference suggestions
that immensely improved the manuscript. I would like to thank Olle Nerman 
for his suggestions leading to the study of models with jumps at speciation events,
and Thomas F. Hansen, Serik Sagitov and Anna Stokowska for many valuable comments and suggestions.
I was supported by the Centre for Theoretical Biology at the University of Gothenburg, 
Stiftelsen f\"or Vetenskaplig Forskning och Utbildning i Matematik
(Foundation for Scientific Research and Education in Mathematics), 
Knut and Alice Wallenbergs travel fund, Paul and Marie Berghaus fund, the Royal Swedish Academy of Sciences,
and Wilhelm and Martina Lundgrens research fund.
A previous version of this manuscript went into my PhD thesis at the Department of Mathematical Sciences,
University of Gothenburg, Gothenburg Sweden.

\appendix

\section{Theorem proofs}
\subsection{Proof of Theorem \ref{thPGF}}
\begin{proof}
$\Upsilon$ was written as,
\bd
\Upsilon = \sum_{k=1}^{n-1} \mathbf{1}_{k},
\ed
where the indicator random variables $\mathbf{1}_{k}$ represent whether the $k$--th
coalescent event is on the sampled lineage and by properties of the Yule tree they
are independent and equal one with probability $p_{k_{1}}=2/(k+1)$ ($p_{k_{0}}=1-p_{k_{1}}$). Therefore,
\bd
\begin{array}{l}
\E{s^{\Upsilon}} = \prod\limits_{k=1}^{n-1}\E{s^{\mathbf{1}_{k}}}
=
\prod\limits_{k=1}^{n-1}(p_{k_{0}}+sp_{k_{1}})
=\prod\limits_{k=1}^{n-1}\frac{k+2s-1}{k+1}
=\frac{1}{n}\frac{1}{b_{n-1,2s-1}},
\end{array}
\ed

\bd
\begin{array}{l}
\E{s^{\upsilon}} = \sum\limits_{k=1}^{n-1}\E{s^{\upsilon} \vert k}\pi_{k,n}
=
\pi_{1,n}+\sum\limits_{k=2}^{n-1}\pi_{k,n}\prod\limits_{i=1}^{k-1}\E{s^{\mathbf{1}_{i}}}
\\=
\pi_{1,n}+\sum\limits_{k=2}^{n-1}\pi_{k,n}\prod\limits_{i=1}^{k-1}\left(p_{i_{0}} + sp_{i_{1}} \right)
=\pi_{1,n}+2\frac{n+1}{(n-1)\Gamma(2s)}\sum\limits_{k=1}^{n-2}\frac{\Gamma(k+2s)}{\Gamma(k+3+1)}.
\end{array}
\ed
I assumed $n>2$ 
(if $n=2$ then by definition $\upsilon=0$ and so for all $s>0$ $\E{s^{\upsilon}}=1$) and I use the property
\citep[see also][]{KBarSSag2012}, 
\be\label{eqGsum}
\sum\limits_{k=1}^{n-1}\frac{\Gamma(k+y)}{\Gamma(k+z+1)} = 
\left\{
\begin{array}{ll}
\frac{\Gamma(n+z)\Gamma(y+1)-\Gamma(z+1)\Gamma(n+y)}{\Gamma(z+1)\Gamma(n+z)(z-y}, & z \neq y, \\
\Psi(n+y)-\Psi(1+y), & z=y,
\end{array}
\right.
\ee
where $\Psi(z)$ is the polygamma function defined as $\Psi(z)=\Gamma'(z)/\Gamma(z)$.
In the case of $z\neq y$ the formula above is verifiable by induction and if $z=y$ the formula can be either calculated
directly from the left side or as the limit of the right side $z\rightarrow y$.

This property and that $\pi_{1,n}=(n+1)/(3(n-1))$ gives when $s\neq1.5$,
\bd
\begin{array}{l}
\E{s^{\upsilon}} = 
\frac{1}{(n-1)(3-2s)}\left(
(n+1)-\frac{2}{nb_{n-1,2s-1}}
\right).
\end{array}
\ed

When $s=1.5$ I get,
\bd
\begin{array}{l}
\E{s^{\upsilon}} 
= \frac{2(n+1)}{n-1}\left(H_{n+1}-\frac{5}{3}\right).
\end{array}
\ed
\end{proof}

\subsection{Proof of Theorem \ref{thRhon}}
\begin{proof}
Let $\vec{t} = (t_{1},t_{2},\ldots,t_{\Upsilon},t_{\Upsilon+1})$ be the 
between speciation times on a randomly chosen lineage, see \mbox{Fig. \ref{figappTreeTimes}} for illustration.
For mathematical convenience I set the speciation rate $\lambda =1$
as changing $\lambda$ is equivalent to rescaling the branch lengths by its inverse
and does not effect the topology ($\Upsilon$ and $\upsilon$ here) because
I have conditioned on $n$. Therefore
a Yule--Ornstein--Uhlenbeck--jumps model for the extant species trait sample with parameters 
$(\alpha,\sigma_{a}^{2}, \sigma_{c}^{2}, \theta, X_{0},\lambda)$
is equivalent to one with parameters $(\alpha/\lambda, \sigma_{a}^{2}/\lambda, \sigma_{c}, \theta,  X_{0},1)$.
By $X^{+}_{t_{i}}$
I will denote the value of the process just after ($^{+}$ indicating I include the jump if it occurred) 
the node ending the branch corresponding
to duration $t_{i+1}$, 
notice that $X^{+}_{t_{\Upsilon+1}}$ will be the value just after an extent species
and $X^{+}_{t_{\upsilon+1}}$ will be the value just after the node being the most recent
common ancestor of two randomly sampled species. If I don't include the $^{+}$, i.e.
$X_{t_{i}}$ I mean the value of the process at the respective node.
As in the proof of Theorem \ref{thPGF} $\mathbf{1}_{k}$ 
is the $0$--$1$ random variable indicating 
whether the $k$--th ($k=1,\ldots, n-1$)
coalescent event is on the sampled lineage, equalling $1$ with probability $2/(k+1)$.
\begin{figure}[!h]
\centering
\includegraphics[width=5cm]{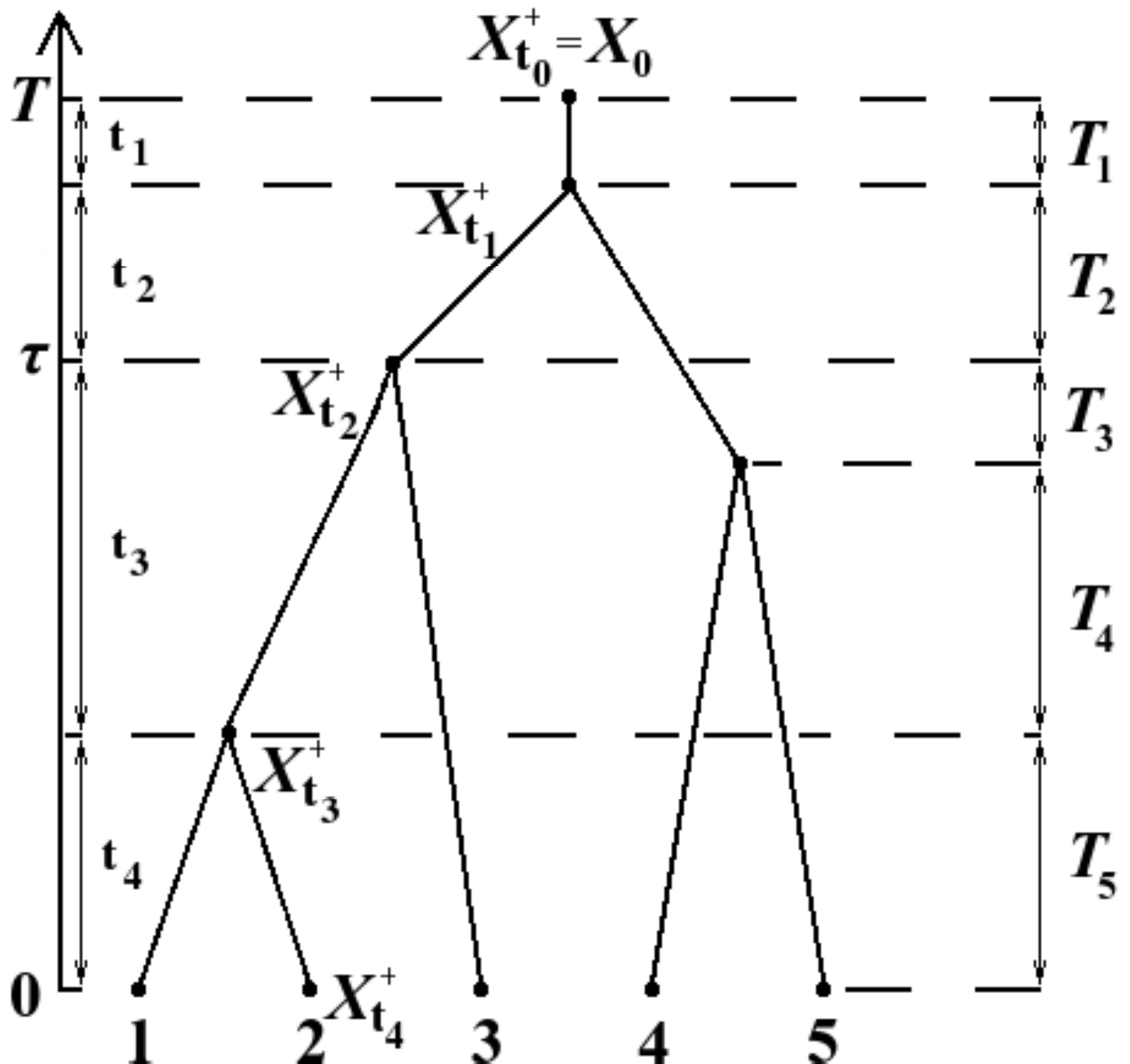}
\caption{
A pure--birth tree with the various time components marked on it. 
If I ``randomly sample'' node two then I will have $\Upsilon=3$ and $\vec{t}=(t_{1},t_{2},t_{3},t_{4})$.
I have $t_{1}=T_{1}$, $t_{2}=T_{2}$, $t_{3}=T_{3}+T_{4}$ and $t_{4}=T_{5}$.
The process values just after the nodes are $X^{+}_{t_{0}},X^{+}_{t_{1}},X^{+}_{t_{2}},X^{+}_{t_{3}},X^{+}_{t_{4}}$,
with $X^{+}_{t_{4}}$ equalling the extant value $X_{2}$ if node $2$ did not split or if node $2$ split
$X^{+}_{t_{4}}$ would be $X_{2}$ plus a jump (if one took place). If I ``randomly sample'' the pair of extant 
species $2$ and $3$ then $\upsilon=1$ and the two nodes coalesced when the process was just 
before $X^{+}_{t_{2}}$.
}
\label{figappTreeTimes}
\end{figure}
The expectation of an extant species is,
\bd
\begin{array}{l}
\E{X} = 
\E{\E{X\vert X_{t_{\Upsilon}},\vec{t}~}} 
=
\E{\E{(X_{t_{\Upsilon}}-\theta)e^{-\alpha t_{\Upsilon+1}}+\theta \vert \vec{t}~}}
\\=
\E{e^{-\alpha t_{\Upsilon+1}}(\E{X_{t_{\Upsilon}}\vert \vec{t}~}-\theta)}+\theta
\\=
\E{e^{-\alpha T}(X_{0}-\theta)} + \theta = b_{n,\alpha}X_{0} + (1-b_{n,\alpha})\theta.
\end{array}
\ed
I now turn to the variance,
\bd
\begin{array}{l}
\var{X} = \E{\var{X \vert \vec{t}~}} + \var{\E{X \vert \vec{t}~}} .
\end{array}
\ed
I consider the most complicated term,
\bd
\begin{array}{l}
\var{X \vert \vec{t}~}
=
\E{\var{X \vert \vec{t},X^{+}_{t_{\Upsilon}}}} + \var{\E{X \vert \vec{t},X^{+}_{t_{\Upsilon}}}}
\\=
\frac{\sigma_{a}^{2}}{2\alpha}(1-e^{-2\alpha t_{\Upsilon+1}}) + e^{-2\alpha t_{\Upsilon+1}}\var{X^{+}_{t_{\Upsilon}} \vert \vec{t}~}
\\=
\frac{\sigma_{a}^{2}}{2\alpha}(1-e^{-2\alpha(t_{k+1}+\ldots+t_{\Upsilon+1})})
+ 
e^{-2\alpha(t_{k+1}+\ldots+t_{\Upsilon+1})}\var{X^{+}_{t_{k}} \vert \vec{t}~}
+ \sigma_{c}^{2}p
\sum\limits_{i=k+2}^{\Upsilon+1}e^{-2\alpha (t_{i}+\ldots+t_{\Upsilon+1})}
\\=
\frac{\sigma_{a}^{2}}{2\alpha}(1-e^{-2\alpha T}) + 
\sigma_{c}^{2}p\sum\limits_{i=2}^{\Upsilon+1}e^{-2\alpha (t_{i}+\ldots+t_{\Upsilon+1})}.
\end{array}
\ed
I will use the equality, for $y>0$, obtainable from Eq. \eqref{eqGsum},
\bd
\begin{array}{rcl}
\sum\limits_{i=1}^{n-1}\frac{\Gamma(i+y+1)}{\Gamma(i+2)}&  = & \frac{1}{y}\left(\frac{\Gamma(n+y+1)}{\Gamma(n+1)}-\Gamma(y+2) \right),
\end{array}
\ed
to calculate,
\bd
\begin{array}{l}
\E{\sum\limits_{i=2}^{\Upsilon+1}e^{-2\alpha (t_{\Upsilon+1}+\ldots+t_{i})}}
=
\E{\sum\limits_{i=1}^{n-1}\mathbf{1}_{i}e^{-2\alpha (T_{n}+\ldots+T_{i+1})}}
=
2\sum\limits_{i=1}^{n-1}\frac{b_{n,2\alpha}}{(i+1)b_{i,2\alpha}}
\\=
\frac{2b_{n,2\alpha}}{\Gamma(2\alpha+1)}\sum\limits_{i=1}^{n-1}\frac{\Gamma(i+2\alpha+1)}{\Gamma(i+2)}
= \frac{2}{2\alpha}\left(1-(1+2\alpha)b_{n,2\alpha} \right).
\end{array}
\ed
Using the above and 
\bd
\var{\E{X \vert \vec{t}~}} = (X_{0}-\theta)^{2}\var{e^{-2\alpha T}},
\ed
I arrive at,
\bd
\begin{array}{l}
\var{X} = 
\frac{\sigma_{a}^{2}}{2\alpha}(1-\E{e^{-2\alpha T}}) + (X_{0}-\theta)^{2}\var{e^{-\alpha T}} +
\frac{2p\sigma_{c}^{2}}{2\alpha}
\left(1-(1+2\alpha)b_{n,2\alpha} \right).
\end{array}
\ed
Writing $\kappa=2p\sigma_{c}^{2}/(\sigma_{a}^{2}+2p\sigma_{c}^{2})$, $\delta=\vert X_{0}-\theta \vert/\sqrt{\sigma_{a}^{2}/2\alpha}$,
\be
\begin{array}{rcl}
\var{X} & =  &
\frac{\sigma_{a}^{2}+2p\sigma_{c}^{2}}{2\alpha}\left(
(1-\kappa)\left(1-b_{n,2\alpha} + \delta^{2}(b_{n,2\alpha}-b_{n,\alpha}^{2})\right) 
\right.\\ && \left.
+ \kappa \left(1-(1+2\alpha)b_{n,2\alpha} \right)
\right).
\end{array}
\ee
In the same manner I will calculate the covariance between two randomly sampled tip species
(with $\vec{t}_{1}, \vec{t}_{2}$ being the vectors of
between speciation times on the respective species' lineage),
\bd
\begin{array}{l}
\cov{X_{1}}{X_{2}} = 
\cov{\E{X_{1} \vert \vec{t}_{1},\vec{t}_{2},X_{t_{\upsilon+1}},\upsilon}}{\E{X_{2} \vert \vec{t}_{1},\vec{t}_{2},X_{t_{\upsilon+1}},\upsilon}}
\\ + \E{\cov{X_{1}}{X_{2} \vert \vec{t}_{1},\vec{t}_{2},X_{t_{\upsilon+1}},\upsilon}} 
= 
\var{e^{-\alpha \tau}(X_{t_{\upsilon+1}}-\theta)}
\\ = 
\E{\var{e^{-\alpha \tau}(X_{t_{\upsilon+1}}-\theta)\vert \vec{t}_{1},\vec{t}_{2},X^{+}_{t_{\upsilon}},\upsilon}} + \var{\E{e^{-\alpha \tau}(X_{t_{\upsilon+1}}-\theta)\vert \vec{t}_{1},\vec{t}_{2},X^{+}_{t_{\upsilon}},\upsilon}}
\\ =
\frac{\sigma_{a}^{2}}{2\alpha}\E{e^{-2\alpha\tau}-e^{-2\alpha(\tau+t_{\upsilon+1})}} 
+ \var{e^{-\alpha(\tau+t_{\upsilon+1})}(X^{+}_{t_{\upsilon}}-\theta)}
\\=
\frac{\sigma_{a}^{2}}{2\alpha}\E{e^{-2\alpha\tau}-e^{-2\alpha(\tau+t_{\upsilon+1}+t_{\upsilon})}} 
+ \var{e^{-\alpha(\tau+t_{\upsilon+1}+t_{\upsilon})}(X^{+}_{\upsilon-1}-\theta)}
+p\sigma_{c}^{2}\E{e^{-2\alpha(\tau+t_{\upsilon+1})}}
\\=
\frac{\sigma_{a}^{2}}{2\alpha}\left(\E{e^{-2\alpha \tau}}-\E{e^{-2\alpha T}} \right)+(X_{0}-\theta)^{2}\var{e^{-\alpha T}}+p\sigma_{c}^{2}\E{\sum\limits_{i=2}^{\upsilon+1}e^{-2\alpha(\tau+t_{\upsilon+1}+\ldots+t_{i})}}.
\end{array}
\ed
Let $\kappa$ denote number of the speciation event when the two randomly sampled tips coalesced 
 and I consider for $y>0$,
\bd
\begin{array}{l}
\E{\sum\limits_{i=2}^{\upsilon+1}e^{-y(\tau+t_{\upsilon+1}+\ldots+t_{i})}} 
= \sum\limits_{k=1}^{n-1}\pi_{k,n}\sum\limits_{i=1}^{k-1}\E{\mathbf{1}_{i}e^{-y(T_{n}+\ldots+T_{i+1})} \vert \kappa = k}
\\
=\frac{4(n+1)b_{n,y}}{(n-1)}\sum\limits_{k=1}^{n-1}\sum\limits_{i=1}^{k-1}\frac{1}{(k+1)(k+2)(i+1)b_{i,y}}
=\frac{4(n+1)b_{n,y}}{y\Gamma(y)(n-1)}\sum\limits_{k=1}^{n-1}\frac{1}{(k+1)(k+2)}\sum\limits_{i=1}^{k-1}\frac{\Gamma(i+1+y)}{\Gamma(i+2)}
\\
=\frac{4(n+1)b_{n,y}}{y^{2}\Gamma(y)(n-1)}\sum\limits_{k=1}^{n-1}\frac{1}{(k+1)(k+2)}\left(\frac{\Gamma(k+y+1)}{\Gamma(k+1)}-\Gamma(y+2) \right),
\end{array}
\ed
equalling by Eq. \eqref{eqGsum} for $0<y\neq 1$
\bd
\frac{2}{y}\left(\frac{2-(y+1)(yn-y+2)b_{n,y}}{(n-1)(y-1)}\right)
\ed
and for $y=1$ this is,
\bd
2\frac{2}{n-1}\left(H_{n} -\frac{5n-1}{2(n+1)} \right).
\ed
Putting all of the above together I obtain,
\bd
\begin{array}{rcl}
\cov{X_{1}}{X_{2}} & =  &
\left\{
\begin{array}{l}
\frac{\sigma_{a}^{2}}{2\alpha}\left(\frac{2-(n+1)(2\alpha+1)b_{n,2\alpha}}{(n-1)(2\alpha-1)} - b_{n,2\alpha} \right)
+(X_{0}-\theta)^{2}\left(b_{n,2\alpha} - b_{n,\alpha}^{2} \right)  
\\
+\frac{2p\sigma_{c}^{2}}{2\alpha}\frac{2-(2\alpha n -2\alpha +2)(2\alpha+1)b_{n,2\alpha}}{(n-1)(2\alpha-1)},~~~0<\alpha\neq 0.5, \\

\frac{\sigma_{a}^{2}}{2\alpha}\left(\frac{2}{n-1}\left(H_{n}-1 \right) - \frac{2}{n+1} \right)
+(X_{0}-\theta)^{2}\left(\frac{1}{n+1} - b_{n,0.5}^{2} \right) 
\\
+ 2p\sigma_{c}^{2} \frac{2}{n-1}\left(H_{n}  - \frac{5n-1}{2(n+1)} \right),~~~\alpha = 0.5 
\end{array}
\right.
\end{array}
\ed
and with $\kappa$ and $\delta$,
\be
\begin{array}{rcl}
\cov{X_{1}}{X_{2}} & =  &
\frac{\sigma_{a}^{2}+2p\sigma_{c}^{2}}{2\alpha}
\left\{
\begin{array}{l}
(1-\kappa)\left(\frac{2-(n+1)(2\alpha+1)b_{n,2\alpha}}{(n-1)(2\alpha-1)} - b_{n,2\alpha} 
+\delta^{2}\left(b_{n,2\alpha} - b_{n,\alpha}^{2} \right) \right)
 
\\
+\kappa\frac{2-(2\alpha n -2\alpha +2)(2\alpha+1)b_{n,2\alpha}}{(n-1)(2\alpha-1)},~~~ 0<\alpha\neq 0.5, \\

(1-\kappa)\left(\frac{2}{n-1}\left(H_{n}-1 \right) - \frac{2}{n+1}
+\delta^{2}\left(\frac{1}{n+1} - b_{n,0.5}^{2} \right) \right)
  
\\
+ \kappa \frac{2}{n-1}\left(H_{n}  - \frac{5n-1}{2(n+1)} \right),~~~\alpha = 0.5. 
\end{array}
\right.
\end{array}
\ee
\end{proof}

\section{Quadratic variation}
In order to quantify how much a function (for example a trait as a function of time evolving on a lineage) 
changed over time one needs to have some measure of change. 
It is not sufficient to compare the current function (trait) value (or average of trait values) with that
of the  value of the function at the origin (ancestral trait). It can happen that the function might have over
time made large excursions but just by chance is currently at a value close to the origin, especially if the expectation of the (random) function
equals that of the original value, like in the Brownian motion (neutral evolution) one.

One therefore needs some measure that will consider not only how the function compares to what it
was at some time in the past but will also (or rather primarily) take into account how diverse its path
was. One such possible measure is the quadratic variation of a function. 
\begin{definition*}
The quadratic variation of a function $f(t)$ is defined as,
\bd
[f](t) = \lim\limits_{\max\limits_{1\le i \le n}t_{i}-t_{i-1}\searrow 0}\sum_{i=1}^{n}(f(t_{i})-f(t_{i-1}))^{2},
\ed
where $0=t_{0}<t_{1}<\ldots <t_{n}=t$ are finer and finer partitions of the interval $[0,t]$.
\end{definition*}
The advantage of using quadratic variation to quantify the phenotype's change is that it is mathematically easy
to work with and will be consistent with the ideas and results of \citet{FBok2002,TMatFBok2008}.
The quadratic variation is a property of a function and so can be applied to a trajectory of a stochastic process.
In fact one can show \citep[see e.g.][]{FKle} that the quadratic variation of a 
stochastic process defined by an SDE of the form,
\begin{equation}\label{eqappSDEdiff}
\ud X(t) = \mu(t,X(t))\ud t + \sigma \ud B(t)
\end{equation}
where $B(t)$ is a standard Brownian motion and with appropriate conditions on $\mu(\cdot,\cdot)$
will be $[X](t)=\sigma^{2}t$. This is rather remarkable as the quadratic variation is a property 
of the process' trajectory (a random function) but it turns out to be a deterministic function.
From the definition one can see that the quadratic variation takes into account all the local
fluctuations of the phenotype and then sums them up to get the accumulated fluctuation.
The considered above family of SDEs is sufficient for my purposes as it encompasses
the two most widely used evolutionary models, Brownian motion, $\mu \equiv 0$, and Ornstein--Uhlenbeck
process, $\mu(t,X(t))=-\alpha(X(t)-\theta)$.
For further properties of the quadratic variation the reader is referred to stochastic calculus
literature \citep[e.g.][]{FKle,PMed,BOxe}.

\bibliographystyle{plainnat}
\bibliography{SagitovBartoszek}

\end{document}